\title{Term Graph Representations for Cyclic Lambda-Terms%
       \footnote{This work was started, and in part carried out, 
                 within the framework of the project NWO~project \emph{Realising Optimal Sharing (ROS)}, project number 612.000.935, 
                 under the direction of Vincent von Oostrom and Doaitse Swierstra.}}
\author{Clemens Grabmayer
\institute{Department of Philosophy\\
           Utrecht University\\
           The Netherlands}
\email{clemens@phil.uu.nl}
\and
Jan Rochel 
\institute{Department of Computing Sciences\\
           Utrecht University\\
           The Netherlands}
\email{jan@rochel.info}
}
\newcommand\itemizeprefs{
  \setlength\itemsep{-0.2ex}
  \vspace{-0.4ex}
}
\newcommand{\inMath}[1]{\ensuremath{#1}}
\newcommand\vcentered[1]{\raisebox{-0.5\height}{#1}}
\newcommand\graphs[1]{\par\vspace{2mm}\hfill{#1}\ \vspace{2mm}\par\noindent}
\newcommand\graph[2]{\vcentered{$#1$:~~} \vcentered{\fig{#2}}\hfill}
\newcommand\partialTo\rightharpoonup
\newcommand{\tuple}[1]{\langle #1 \rangle}
\newcommand\tuplespace{\hspace*{0.5pt}}
\newcommand\pair[2]{\tuple{#1, \tuplespace #2}}
\newcommand\fig[1]{\includegraphics[scale=0.84]{figs/{{#1}}}}
\newcommand{\lambdacalculus}{$\lambda$\nb-calculus}
\newcommand{\lambdaterm}{$\lambda$\nb-term}
\newcommand{\lambdaterms}{\lambdaterm{s}}
\let\oldlambda\lambda
\renewcommand\lambda{\inMath\oldlambda}
\let\oldalpha\alpha
\renewcommand\alpha{\inMath\oldalpha}
\let\oldmu\mu
\renewcommand\mu{\inMath\oldmu}
\newcommand{\stxtletrec}{\ensuremath{\text{\normalfont\sf letrec}}}
\newcommand{\sslabs}{\lambda}
\newcommand{\slabs}[1]{\sslabs{#1}}
\newcommand{\labs}[2]{\slabs{#1}.\,{#2}}
\newcommand{\sslapp}{@}
\newcommand{\slapp}{\hspace*{1.5pt}}
\newcommand{\lapp}[2]{{#1}\slapp{#2}}
\newcommand{\snlvar}{\mathsf{0}}
\newcommand{\snlvarsucc}{\mathsf{S}}
\newcommand{\ssin}{{\textbf{in}}}
\newcommand{\sletrec}{\textbf{letrec}}
\newcommand{\letrecin}[2]{\sletrec\;{#1}\;\ssin\;{#2}}
\newcommand{\arecvar}{f}
\newcommand{\brecvar}{g}
\def\a#1{\reflectbox{$\m@th#1{\lambda}$}}
\def\adbmal{\inMath{\mathpalette{\a}{}}}
\newcommand{\avar}{x}
\newcommand{\bvar}{y}
\newcommand{\cvar}{z}
\newcommand{\dvar}{u}
\newcommand{\asig}{\Sigma}
\newcommand{\bsig}{\Delta}
\newcommand{\asigacc}{\Sigma'}
\newcommand{\sarity}{{\mit ar}}
\newcommand{\arity}{\funap{\sarity}}
\newcommand{\afunsym}{\mathsf{f}}
\newcommand{\CRS}{${\text{CRS}}$}
\newcommand{\apath}{\pi}
\newcommand\unfold\bigtriangledown
\newcommand{\funin}{\mathrel{:}}
\newcommand{\funap}[2]{{#1}({#2})}
\newcommand{\bfunap}[3]{{#1}({#2},\hspace*{0.5pt}{#3})}
\newcommand{\indap}[2]{#1_{#2}}
\newcommand{\sdefdby}{{:=}}
\newcommand{\defdby}{\mathrel{\sdefdby}}
\newcommand{\length}[1]{\left|{#1}\right|}
\newcommand{\nb}{\nobreakdash}
\newcommand{\sdom}{\textrm{dom}}
\newcommand{\dom}{\funap{\sdom}}
\newcommand{\simage}{\textrm{im}}
\newcommand{\image}{\funap{\simage}}
\newcommand{\sfuncomp}{\circ}
\newcommand{\scompfuns}[2]{{#1}\mathrel{\sfuncomp}{#2}}
\newcommand{\compfuns}[2]{\funap{\scompfuns{#1}{#2}}}
\newcommand{\sproji}{\indap{\pi}}
\newcommand{\eqcl}[2]{\indap{[{#1}]}{#2}}
\newcommand{\existsst}[2]{\exists{#1}.\;{#2}}
\newcommand{\forallst}[2]{\forall{#1}.\;{#2}}
\newcommand{\existsstzero}[1]{\exists{#1}.\;}
\newcommand{\forallstzero}[1]{\forall{#1}.\;}
\newcommand{\emptyword}{\epsilon}
\newcommand{\sstringcon}{\hspace*{1pt}}
\newcommand{\stringcon}[2]{{#1}{\sstringcon}{#2}}
\newcommand{\mcdots}{\mathrel{\cdots}}
\newcommand{\srestrictfunto}[2]{{#1}{\mid}_{#2}}
\newcommand{\restrictfunto}[2]{\funap{\srestrictfunto}}
\newcommand{\sidfun}{\text{\normalfont id}}
\newcommand{\sidfunon}{\indap{\sidfun}}
\newcommand{\slogand}{{\wedge}}
\newcommand{\logand}{\mathrel{\slogand}}
\newcommand{\slogor}{{\vee}}
\newcommand{\logor}{\mathrel{\slogor}}
\newcommand{\ssbinrelcomp}{\cdot}
\newcommand{\sbinrelcomp}[2]{{#1}\mathrel{\ssbinrelcomp}{#2}}
\newcommand{\binrelcomp}[2]{\mathrel{\sbinrelcomp{#1}{#2}}}
\newcommand{\subap}[2]{{#1}_{#2}}
\newcommand{\supap}[2]{{#1}^{#2}}
\newcommand{\bpap}[3]{{#1}_{#2}^{#3}}
\newcommand{\pbap}[3]{{#1}^{#2}_{#3}}
\newcommand{\descsetexpmid}{\mathrel{\vert}}
\newcommand{\descsetexpBigmid}{\mathrel{\Big\vert}}
\newcommand{\descsetexp}[2]{\left\{{#1}\descsetexpmid{#2}\right\}}
\newcommand{\descsetexpBig}[2]{\Bigl\{{#1}\descsetexpBigmid{#2}\Bigr\}}
\newcommand{\setexp}[1]{\left\{{#1}\right\}}
\newcommand{\spowersetof}{\powerset}
\newcommand{\powersetof}{\funap{\spowersetof}}
\newcommand{\punc}[1]{\inMath{\hspace*{3pt}{#1}}}
\newcommand{\nats}{\mathbb{N}}
\definecolor{azure}{rgb}{0.94,1.00,1.00}
\definecolor{blue}{rgb}{0,0,0.5}
\definecolor{brown}{rgb}{.75,.25,.25}
\definecolor{cyan}{rgb}{0.25,0.88,0.82}
\definecolor{chocolate}{rgb}{0.82,0.41,0.12}
\definecolor{darkcyan}{rgb}{0.5,0,1}
\definecolor{darkgreen}{rgb}{0,0.39,0}
\definecolor{darkmagenta}{rgb}{0.5,0,0.5}
\definecolor{firebrick}{RGB}{175,25,25}
\definecolor{forestgreen}{rgb}{0.13,0.55,0.13}
\definecolor{lightcyan}{rgb}{0.88,1.00,1.00}
\definecolor{lightpink}{rgb}{1.00,0.71,0.76}
\definecolor{lightyellow}{rgb}{1.00,1.00,0.88}
\definecolor{lightgoldenrod}{rgb}{0.83,0.97,0.51}
\definecolor{lightgoldenrodyellow}{rgb}{0.98,0.98,0.82}
\definecolor{lightskyblue}{rgb}{0.53,0.81,0.98}
\definecolor{moccasin}{rgb}{1.00,0.89,0.71}
\definecolor{magenta}{rgb}{1,0,1}
\definecolor{navyblue}{rgb}{0,0,0.5}
\definecolor{orange}{rgb}{1.0,0.65,0.0}
\definecolor{orangered}{rgb}{1.0,0.27,0.0}
\definecolor{palegreen}{rgb}{0.60,0.98,0.60}
\definecolor{powderblue}{rgb}{0.69,0.88,0.90}
\definecolor{purple}{rgb}{1,0.5,1}
\definecolor{royalblue}{RGB}{65,105,225}
\definecolor{mediumblue}{RGB}{0,0,205}
\definecolor{cornflowerblue}{RGB}{100,149,237}
\definecolor{springgreen}{rgb}{0.0,1.0,0.5}
\definecolor{turquoise}{rgb}{0.25,0.88,0.82}
\definecolor{snow}{rgb}{1.00,0.98,0.98}
\definecolor{tan}{rgb}{0.82,0.71,0.55}
\definecolor{red}{rgb}{1,0,0}
\newcommand{\safun}{f}
\newcommand{\atg}{G}
\newcommand{\atgi}{\indap{\atg}}
\newcommand{\atgacc}{G'}
\newcommand{\verts}{V}
\newcommand{\vertsof}{\funap{\verts\!}}
\newcommand{\svlab}{\mathit{lab}}
\newcommand{\vlab}{\funap{\svlab}}
\newcommand{\svargs}{\mathit{args}}
\newcommand{\vargs}{\funap{\svargs}}
\newcommand{\sroot}{\mathit{r}}
\newcommand{\vertsacc}{V'}
\newcommand{\svlabacc}{\mathit{lab}'}
\newcommand{\vlabacc}{\funap{\svlabacc}}
\newcommand{\svargsacc}{\mathit{args}'}
\newcommand{\srootacc}{\mathit{r}'}
\newcommand{\vertsi}{\indap{V}}
\newcommand{\vertsiof}[1]{\funap{\vertsi{#1}\!}}
\newcommand{\svlabi}{\indap{\mathit{lab}}}
\newcommand{\vlabi}[1]{\funap{\svlabi{#1}}}
\newcommand{\svargsi}{\indap{\mathit{args}}}
\newcommand{\vargsi}[1]{\funap{\svargsi{#1}}}
\newcommand{\srooti}{\indap{\mathit{r}}}
\newcommand{\avert}{v}
\newcommand{\bvert}{w}
\newcommand{\averti}{\indap{\avert}}
\newcommand{\bverti}{\indap{\bvert}}
\newcommand{\bvertbp}{\bpap{\bvert}}
\newcommand{\avertacc}{v'}
\newcommand{\bvertacc}{w'}
\newcommand{\avertacci}{\indap{\avertacc}}
\newcommand{\siglambda}{\supap{\asig}{\lambda}}
\newcommand{\siglambdai}[1]{\pbap{\asig}{\sslabs}{{#1}}}
\newcommand{\siglambdaij}[2]{\pbap{\asig}{\sslabs}{{#1},{#2}}}
\newcommand{\stgsucc}{{\rightarrowtail}}
\newcommand{\tgsucc}{\mathrel{\stgsucc}}
\newcommand{\stgsucci}{\indap{\rightarrowtail}}
\newcommand{\tgsucci}[1]{\mathrel{\stgsucci{#1}}}
\newcommand{\stgsuccacci}{\indap{\rightarrowtail'}}
\newcommand{\tgsuccacci}[1]{\mathrel{\stgsuccacci{#1}}}
\newcommand{\stgsuccis}[2]{{{}^{#2}\hspace*{-1.5pt}\stgsucc_{#1}}}
\newcommand{\tgsuccis}[2]{\mathrel{\stgsuccis{#1}{#2}}}
\newcommand{\stgsuccisstar}[2]{(\stgsuccis{#1}{#2})^*}
\newcommand{\snodels}{{\#\text{\normalfont del}}}
\newcommand{\nodels}{\bfunap{\snodels}}
\newcommand{\spathto}{\rightarrowtail\hspace*{-5pt}^*}
\newcommand{\pathto}{\mathrel{\spathto}}
\newcommand{\sahom}{h}
\newcommand{\ahom}{\funap{\sahom}}
\newcommand{\sahomext}{\bar{\sahom}}
\newcommand{\ahomext}{\funap{\sahomext}}
\newcommand{\sahomextext}{\bar{\sahomext}}
\newcommand{\siso}{\sim}
\newcommand{\sbisim}{%
    \setbox0=\hbox{\kern-.1ex{$\leftrightarrow$}\kern-.1ex}
    \setbox1=\vbox{\hbox{\raise .1ex \box0}\hrule}%
    \inMath{\mathrel{\hbox{\kern.1ex\box1\kern.1ex}}}
  }
\newcommand{\bisim}{\mathrel{\sbisim}}
\newcommand{\bisimi}[1]{\mathrel{\sbisim_{#1}}}
\newcommand{\sbisimsubscript}{
    \setbox0=\hbox{\kern-.1ex{$\leftrightarrow$}\kern-.1ex}
    \setbox1=\vbox{\hbox{\raise .1ex \box0}\hrule}%
    \inMath{\mathrel{\hbox{\scalebox{0.75}{\box1}}}}
  }
\newcommand{\sfunbisim}{%
    \setbox0=\hbox{\kern-.1ex{$\rightarrow$}\kern-.1ex}
    \setbox1=\vbox{\hbox{\raise .1ex \box0}\hrule}%
    {\hbox{\kern.1ex\box1\kern.1ex}}
  }
\newcommand{\funbisim}{\mathrel{{\sfunbisim}}}
\newcommand{\sfunbisimi}{\indap{\sfunbisim}}
\newcommand{\funbisimi}[1]{\mathrel{\sfunbisimi{#1}}}
\newcommand{\sfunbisims}{\supap{\sfunbisim}}
\newcommand{\funbisims}[1]{\mathrel{\sfunbisims{#1}}}
\newcommand{\sfunbisimis}{\bpap{\sfunbisim}}
\newcommand{\funbisimis}[2]{\mathrel{\sfunbisimis{#1}{#2}}}
\newcommand{\sconvfunbisim}[1][]{%
    \setbox0=\hbox{\kern-.1ex{$\leftarrow$}\kern-.1ex}
    \setbox1=\vbox{\hbox{\raise .1ex \box0}\hrule}%
    \mathrel{\hbox{\kern.1ex\box1\kern.1ex}}
  }
\newcommand{\convfunbisim}{\mathrel{\sconvfunbisim}}
\newcommand{\sconvfunbisimi}{\indap{\sconvfunbisim}}
\newcommand{\convfunbisimi}[1]{\mathrel{\sconvfunbisimi{#1}}}
\newcommand{\sconvfunbisims}{\supap{\sconvfunbisim}}
\newcommand{\convfunbisims}[1]{\mathrel{\sconvfunbisims{#1}}}
\newcommand{\sconvfunbisimis}{\bpap{\sconvfunbisim}}
\newcommand{\convfunbisimis}[2]{\mathrel{\sconvfunbisimis{#1}{#2}}}
\newcommand{\sSfunbisim}{\sfunbisim^{\snlvarsucc}}
\newcommand{\Sfunbisim}{\mathrel{\sSfunbisim}}
\newcommand{\abisim}{R}
\newcommand{\sScope}{\mathit{Sc}}
\newcommand{\Scope}{\funap{\sScope}}
\newcommand{\sScopemin}{\mathit{Sc}^{-}}
\newcommand{\Scopemin}{\funap{\sScopemin}}
\newcommand{\sScopei}{\indap{\mathit{Sc}}}
\newcommand{\Scopei}[1]{\funap{\sScopei{#1}}}
\newcommand{\sbinders}{\mathrm{bds}}
\newcommand{\binders}{\funap{\sbinders}}
\newcommand{\sscopeforgetfullhotgsi}[1]{\sScope\mathit{F}_{#1}^{\sslabs}}
\newcommand{\scopeforgetfullhotgsi}[1]{\funap{\sscopeforgetfullhotgsi{#1}}}
\newcommand{\sabspreforgetfullaphotgsi}[1]{{\sabspre\mathit{F}_{#1}}\hspace*{-0.1em}^{(\sslabs)}}
\newcommand{\abspreforgetfullaphotgsi}[1]{\funap{\sabspreforgetfullaphotgsi{#1}}}
\newcommand{\lambdahotg}{$\lambda$\nb-ho-term-graph}
\newcommand{\alhotg}{{\cal G}}
\newcommand{\alhotgi}[1]{{\cal G}_{#1}}
\newcommand{\alhotgacc}{{\cal G}'}
\newcommand{\lambdaaphotg}{$\lambda$\nb-ap-ho-term-graph}
\newcommand{\alaphotg}{{\cal G}}
\newcommand{\alaphotgi}[1]{{\cal G}_{#1}}
\newcommand{\lambdatg}{$\lambda$\nb-term-graph}
\newcommand{\altg}{G}
\newcommand{\altgi}{\indap{\altg}}
\newcommand{\seag}{\text{\normalfont eag}}
\newcommand{\classlhotgsi}{\pbap{\cal H}{\sslabs}}
\newcommand{\classlaphotgsi}[1]{\subap{\cal H}{#1}{\hspace*{-0.1em}}^{(\sslabs)}}
\newcommand{\classeaglhotgsi}{{}^{\seag}\pbap{\cal H}{\sslabs}}
\newcommand{\classeaglaphotgsi}[1]{{}^{\seag}\subap{\cal H}{#1}{\hspace*{-0.1em}}^{(\sslabs)}}
\newcommand{\classltgsi}[1]{\subap{\cal T}{#1}{\hspace*{-0.2em}}^{(\sslabs)}}
\newcommand{\classltgsij}[2]{\subap{\cal T}{#1,#2}{\hspace*{-0.55em}}^{(\sslabs)}\!}
\newcommand{\classeagltgsij}[2]{{}^{\seag}\subap{\cal T}{#1,#2}{\hspace*{-0.55em}}^{(\sslabs)}\!}
\newcommand{\classtgssiglambdai}{\subap{\cal T}}
\newcommand{\classtgssiglambdaij}[2]{\subap{\cal T}{{#1},{#2}}}
\newcommand{\slhotgstolaphotgsi}{\indap{A}}
\newcommand{\lhotgstolaphotgsi}[1]{\funap{\slhotgstolaphotgsi{#1}}}
\newcommand{\slaphotgstolhotgsi}{\indap{B}}
\newcommand{\laphotgstolhotgsi}[1]{\funap{\slaphotgstolhotgsi{#1}}}
\newcommand{\slaphotgstoltgsij}[2]{G_{#1,#2}}
\newcommand{\laphotgstoltgsij}[2]{\funap{\slaphotgstoltgsij{#1}{#2}}}
\newcommand{\sltgstolaphotgsij}[2]{{\cal G}_{#1,#2}}
\newcommand{\ltgstolaphotgsij}[2]{\funap{\sltgstolaphotgsij{#1}{#2}}}
\newcommand{\sabspre}{P}
\newcommand{\abspre}{\funap{\sabspre}}
\newcommand{\sabsprei}{\indap{P}}
\newcommand{\absprei}[1]{\funap{\sabsprei{#1}}}
\newcommand{\absprefix}{ab\-strac\-tion-pre\-fix}
\newcommand{\apre}{p}
\newcommand{\sdistabs}[1]{d_{\lambda,{#1}}}
\newcommand{\distabs}[1]{\funap{\sdistabs{#1}}}
\newcommand{\prele}{\le}
\newcommand{\prelt}{<}
\newcommand{\aclass}{{\cal K}}
\newcommand{\classtgsover}{\funap{{\text{\normalfont TG}}}}
\newtheorem{theorem}{Theorem} 
\newtheorem{corollary}[theorem]{Corollary}
\newtheorem{lemma}[theorem]{Lemma}
\newtheorem{proposition}[theorem]{Proposition}
\newtheorem{definition}[theorem]{Definition}
\newtheorem{remark}[theorem]{Remark}
\newtheorem{example}[theorem]{Example}
\newcommand{\qedsym}{\inMath{\Box}}
\newcommand{\qed}{\hspace*{\fill}\qedsym}
\newenvironment{proof}{\paragraph{\normalfont{\emph{Proof}.}}}{\qed\medskip}
\newenvironment{proofidea}{\paragraph{\normalfont{\emph{Proof (Idea)}.}}}{\qed\medskip}
\renewcommand{\emph}[1]{{\em #1}}
\begin{document}
\maketitle

\begin{abstract}
We study various representations for cyclic \lambda-terms as higher-order or as first-order term graphs.
We focus on the relation between `$\lambda$-higher-order term graphs'
(\lambdahotg{s}), which are first-order term graphs endowed with a well-behaved scope function, 
and their representations 
as `\lambdatg{s}', which are plain first-order term graphs with scope-delimiter vertices 
that meet certain scoping requirements. 
Specifically we tackle the question: 
Which class of first-order term graphs
admits a faithful embedding of \lambdahotg{s} in the sense that
(i)~the homomorphism-based sharing-order on \lambdahotg{s} is preserved and reflected,
and 
(ii)~the image of the embedding corresponds closely to a natural class (of \lambdatg{s})
     that is closed under homomorphism?

We systematically examine whether a number of classes of \lambdatg{s} have this property,  
and we find a particular class of \lambdatg{s} that satisfies this criterion.
Term graphs of this class are built from application, abstraction, variable, and scope-delimiter vertices,
and have the characteristic feature that the latter two kinds of vertices have back-links to the corresponding abstraction.

This result puts a handle on the concept of subterm sharing for
higher-order term graphs, both theoretically and algorithmically:
We obtain an easily implementable method for obtaining 
the maximally shared form of \lambdahotg{s}.
Furthermore, we open up the possibility to pull back properties 
from first-order term graphs to \lambdahotg{s}, properties 
such as the complete lattice structure of bisimulation equivalence classes
with respect to the sharing order.
\end{abstract}

\section{Introduction}\label{sec:intro}
%
Cyclic lambda-terms typically represent infinite \lambda-terms. In this paper we study term graph
representations of cyclic \lambda-terms and their respective notions of
homomorphism, or functional bisimulation.

The context in which the results presented in this paper play a central role is
our research on subterm sharing as present in terms of languages such as the
\lambdacalculus\ with \stxtletrec\ \cite{peyt:jone:1987,ario:blom:1997}, 
with recursive definitions \cite{ario:klop:1994}, or languages with $\mu$\nb-recursion~\cite{ario:klop:1996},
and our interest in describing maximal sharing in such settings. 
Specifically we want to obtain concepts and methods as follows:
\begin{itemize}\itemizeprefs
\item an efficient test for term equivalence with respect to \alpha-renaming and unfolding;
\item a notion of `maximal subterm sharing' for terms in the respective language;
\item the efficient computation of the maximally shared form of a term;
\item a sharing (pre-)order on unfolding-equivalent terms.
\end{itemize}\vspace*{-1ex}
Now our approach is to split the work into a part that concerns properties
specific to concrete languages, and into a part that deals with aspects that
are common to most of the languages with constructs for expressing subterm sharing. 
To this end we set out to find classes of term graphs 
that facilitate faithful interpretations of terms in such languages as 
(higher-order, and eventually first-order) term graphs,
and that are `well-behaved' in the sense that maximally shared term graphs do always exist.
In this way the task can be divided into two parts:
an investigation of sharing for term graphs with higher-order features (the aim of this paper),
and a study of language-specific aspects of sharing (the aim of a further paper).

Here we study a variety of classes of term graphs for denoting cyclic \lambdaterms,
term graphs with higher-order features and their first-order `implementations'. 
All higher-order term graphs we consider are built from three kinds of vertices,
which symbolize applications, abstractions, and variable occurrences, respectively. 
They also carry features that describe notions of \emph{scope}, 
which are subject to certain conditions that guarantee the meaningfulness of the term graph
(that a \lambdaterm\ is denoted), and in some cases are crucial to define \emph{binding}. 
The first-order implementations do not have these additional features,
but they may contain scope-delimiter vertices. 

\enlargethispage{2ex}
In particular we study the following three kinds (of classes) of term graphs:
\renewcommand{\descriptionlabel}[1]%
      {\hspace{\labelsep}{\emph{#1}}}
\begin{description}\itemizeprefs
\item[$\lambda$-Higher-order-term-graphs {\normalfont (Section~\ref{sec:lambdahotgs})}]
  are extensions of first-order term graphs by adding a scope function that assigns a set of vertices, its scope, to every
  abstraction vertex. There are two variants,
  one with and one without an edge (a \emph{back-link}) from each variable occurrence to
  its corresponding abstraction vertex. The class
  with back-links is related to \emph{higher-order term graphs}
  as defined by Blom in \cite{blom:2001}, 
  and in fact is an adaptation of that concept for the purpose of representing \lambdaterms. 
\item[Abstraction-prefix based $\lambda$-higher-order-term-graphs {\normalfont (Section~\ref{sec:lambdaaphotgs})}]
  do not have a scope function but assign, to each vertex $\bvert$,
  an abstraction prefix consisting of a word of abstraction vertices 
  that includes those abstractions for which $\bvert$ is in their scope
  (it actually lists all abstractions for which $\bvert$ is in their `extended scope' \cite{grab:roch:2012}). 
  Abstraction prefixes are aggregations of scope information that is relevant for and locally available at individual vertices. 
\item[$\lambda$-Term-graphs with scope delimiters {\normalfont (Section~\ref{sec:ltgs})}]
  are plain first-order term graphs intended to represent higher-order term graphs of the two sorts above,
  and in this way stand for \lambdaterms. 
  Instead of relying upon additional features for describing scopes, 
  they use scope-delimiter vertices to signify the end of scopes. 
  Variable occurrences as well as scoping delimiters may
  or may not have back-links to their corresponding abstraction vertices.
\end{description}
%
%
Each of these classes induces a notion of homomorphism
(functional bisimulation) and bisimulation. 
Homomorphisms increase sharing in term graphs, 
and in this way induce a sharing order.
They preserve the unfolding semantics of term graphs%
  \footnote{While this is well-known for first-order term graphs, it can also be proved
   for the higher-order term graphs considered here.},
and therefore are able to preserve \lambdaterms\ that are denoted by term graphs in the unfolding semantics. 
Term graphs from the classes we consider always represent finite or infinite \lambdaterms,
and in this sense are not `meaningless'.
But this is not shown here. Instead, we lean on motivating examples, intuitions,
and the concept of higher-order term graph from \cite{blom:2001}.  

We 
establish a bijective correspondence between the former two classes, and a correspondence between the latter
two classes that is `almost bijective' 
(bijective up to sharing or unsharing of scope delimiter vertices).
All of these correspondences preserve and reflect the sharing order.
Furthermore, we systematically investigate which specific class of \lambdatg{s}
is closed under homomorphism and renders the mentioned correspondences possible.
We prove (in Section~\ref{sec:not:closed}) that this can only hold for a class 
in which both variable-occurrence and scope-delimiter vertices
have back-links to corresponding abstractions, 
and establish (in Section~\ref{sec:closed}) that the subclass containing only 
\lambdatg{s} with eager application of scope-closure satisfies these properties.
For this class the correspondences allow us:
\begin{itemize}\itemizeprefs
\item 
  to transfer properties known for first-order term graphs, such as the existence of a maximally shared form,
  from \lambdatg{s} to the corresponding classes of higher-order \lambdatg{s};
\item 
  to implement maximal sharing for higher-order \lambdatg{s} (with eager scope
  closure) via bisimulation collapse of the corresponding first-order
  \lambdatg{s} (see algorithm in Section~\ref{sec:conclusion}).
\end{itemize}



We stress that this paper in its present form is only a report about work in progress,
and, while a number of proofs are included, predominantly has the character of an extended abstract.

\section{Preliminaries}
  \label{sec:prelims}

By $\nats$ we denote the natural numbers including zero.
For words $w$ over an alphabet $A$ we denote the length of $w$ by $\length{w}$.
For a function $\safun \funin A \to B $ 
we denote by $\dom{\safun}$ the domain, and by $\image{\safun}$ the image of $\safun$;
and for $A_0\subseteq A$ we denote by $\srestrictfunto{\safun}{A_0}$ the restriction of $\safun$ to $A_0$. 

Let $\asig$ be a signature with arity function $\sarity \funin \asig \to \nats$.
A \emph{term graph over $\asig$} is a tuple $\tuple{\verts,\svlab,\svargs,\sroot}$ 
where $\verts$ is a set of \emph{vertices},
$\svlab \funin \verts \to \asig$ the \emph{(vertex) label function},
$\svargs \funin \verts \to \verts^*$ the \emph{argument function} 
  that maps every vertex $\avert$ to the word $\vargs{\avert}$ consisting of the $\arity{\vlab{\avert}}$ successor vertices of $\avert$
  (hence it holds $\length{\vargs{\avert}} = \arity{\vlab{\avert}}$),
$\sroot$, the \emph{root} is a vertex in $\verts$,
and where every vertex is reachable from the root (by a path that arises by repeatedly going from a vertex to one of its successors).
(Note this reachability condition, and mind the fact that term graphs may have infinitely many vertices.)
By a \emph{$\asig$\nb-term-graph} 
we mean a term graph over $\asig$.
And by $\classtgsover{\asig}$ we mean the class of all term graphs over $\asig$.

Let $\atg$ be a term graph over signature $\asig$. 
As useful notation for picking out any vertex or the $i$-th vertex 
from among the ordered successors of a vertex $\avert$ in $\atg$
we define the (not indexed) edge relation ${\stgsucc} \subseteq \verts\times\verts$,
and for each $i\in\nats$ the indexed edge relation \inMath{{\stgsucci{i}} \subseteq {\verts\times\verts}},
between vertices by stipulating that:
\begin{align*}
  \bvert \tgsucci{i} \bvertacc
    \;\;\funin\,&\Longleftrightarrow\;\;
  \existsst{\bverti{0},\ldots,\bverti{n}\in\verts}
           {\,\vargs{\bvert} = \bverti{0}\ldots\bverti{n}
                \;\logand\;
              \bvertacc = \bverti{i}}
  \\
  \bvert \tgsucc \bvertacc 
    \;\;\funin\,& \Longleftrightarrow\;\;
  \existsst{i\in\nats}{\, \bvert \stgsucci{i} \bvertacc} 
\end{align*}
holds for all $\bvert,\bvertacc\in\verts$.
We write 
$\bvert \tgsuccis{i}{f} \bvertacc$
if 
$\bvert \tgsucci{i} \bvertacc 
   \logand
 \vlab{\bvert} = f$
holds for $\bvert,\bvertacc\in\verts$, $i\in\nats$, $f\in\asig$,
to indicate the label at the source of an edge.
A \emph{path} in $\atg$ is a tuple of the form
$\tuple{\bverti{0},k_0,\bverti{1},k_1,\bverti{2},\ldots,\bverti{n-1},k_{n-1},\bverti{n}}$
where $\bverti{0},\ldots,\bverti{n}\in\verts$ and $n,k_0,k_1,\ldots,k_{n-1}\in\nats$
such that 
$\bverti{0} \tgsucci{k_0} \bverti{1} \tgsucci{k_1} \bverti{2} \mathrel{\cdots} \bverti{n-1} \tgsucci{k_{n-1}} \bverti{n}$
holds; paths will usually be denoted in the latter form, using indexed edge relations. 
An \emph{access path} of a vertex $\bvert$ of $\atg$ is
a path that starts at the root of $\atg$, ends in $\bvert$, and does not visit any vertex twice. 
Note that every vertex $\bvert$ has at least one access path: 
since every vertex in a term graph is reachable from the root, 
there is a path $\apath$ from $\sroot$ to $\bvert$;
then an access path of $\bvert$ can be obtained from $\apath$ 
by repeatedly cutting out \emph{cycles}, that is,
parts of the path between different visits to one and the same vertex. 

In the sequel, let $\atgi{1} = \tuple{\vertsi{1},\svlabi{1},\svargsi{1},\srooti{1}}$,
  $\atgi{2} = \tuple{\vertsi{2},\svlabi{2},\svargsi{2},\srooti{2}}$
be term graphs over signature $\asig$. 

A \emph{homomorphism}, also called a \emph{functional bisimulation}, 
from $\atgi{1}$ to $\atgi{2}$ 
is a morphism from the structure
$\tuple{\vertsi{1},\svlabi{1},\svargsi{1},\srooti{1}}$
to the structure
$\tuple{\vertsi{2},\svlabi{2},\svargsi{2},\srooti{2}}$,
that is, a function 
$ \sahom \funin \vertsi{1} \to \vertsi{2}$ 
such that, for all $\avert\in\vertsi{1}$ it holds:
\begin{equation}\label{eq:def:homom}
\left.\qquad
\begin{aligned}
    \vlabi{1}{\avert}
    & = \vlabi{2}{\ahom{\avert}} 
  & & & & & (\text{labels})  
  \\
  \funap{\sahomext}{\vargsi{1}{\avert}}
    & = \vargsi{2}{\ahom{\avert}}
  & & & & & (\text{arguments})
  \\
  \ahom{\srooti{1}} 
    & = \srooti{2} 
  & & & & & (\text{roots})
\end{aligned}
\qquad\right\}
\end{equation}
where $\sahomext$ is the homomorphic extension of $\sahom$ to words over $\vertsi{1}$, that is, to the function 
$\sahomext \funin \vertsi{1}^* \to \vertsi{2}^*$, 
$ \averti{1}\ldots\averti{n}   \mapsto \ahom{\averti{1}}\ldots\ahom{\averti{n}} $.  
In this case we write $\atgi{1} \funbisimi{\sahom} \atgi{2}$,
or $\atgi{2} \convfunbisimi{\sahom} \atgi{1}$.
And we write
$\atgi{1} \funbisim \atgi{2}$,
or for that matter $\atgi{2} \convfunbisim \atgi{1}$,
if there is a homomorphism (a functional bisimulation) from $\atgi{1}$ to $\atgi{2}$.

Let $\afunsym\in\asig$. 
We write $\atgi{1} \funbisims{\afunsym} \atgi{2}$ or $\atgi{2} \convfunbisims{\afunsym} \atgi{1}$
if there is a homomorphism $\sahom$ between $\atgi{1}$ and $\atgi{2}$ with the property that
for all $\bverti{1},\bverti{2}\in\vertsi{1}$ with $\bverti{1}\neq\bverti{2}$ it holds that
$\ahom{\bverti{1}} = \ahom{\bverti{2}} 
   \;\Rightarrow\;
 \vlabi{1}{\bverti{1}} = \vlabi{1}{\bverti{2}} = \afunsym$,
that is, if $\sahom$ only `shares', or `identifies', vertices when they have label $\afunsym$.    
If $\sahom$ is such a homomorphism, we also write 
$\atgi{1} \funbisimis{\sahom}{\afunsym} \atgi{2}$ or $\atgi{2} \convfunbisimis{\sahom}{\afunsym} \atgi{1}$.

A \emph{bisimulation} between  
$\atgi{1}$ and $\atgi{2}$ 
is a term graph 
$ \atg = \tuple{\abisim,\svlab,\svargs,\sroot}$ over $\asig$
with $\abisim \subseteq \vertsi{1}\times\vertsi{2}$ 
 and $\sroot = \pair{\srooti{1}}{\srooti{2}}$
such that
$ \atgi{1} \convfunbisimi{\sproji{1}} \atg \funbisimi{\sproji{2}} \atgi{2}$
where $\sproji{1}$ and $\sproji{2}$ are projection functions, defined, for $i\in\setexp{1,2}$,
by $ \sproji{i} \funin \vertsi{1}\times\vertsi{2} \to \vertsi{i} $,
$\pair{\averti{1}}{\averti{2}} \mapsto \averti{i}$.
In this case we write $\atgi{1} \bisimi{\abisim} \atgi{2}$. 
And we write
$\atgi{1} \bisim \atgi{2}$
if there is a bisimulation between $\atgi{1}$ and $\atgi{2}$.

Alternatively, bisimulations for term graphs can be defined directly 
as relations on the vertex sets, 
obtaining the same notion of bisimilarity.
In this formulation,
a bisimulation between $\atgi{1}$ and $\atgi{2}$ is
a relation $\abisim \subseteq \vertsi{1}\times\vertsi{2}$ such that
the following conditions hold, for all $\pair{\avert}{\avertacc}\in\abisim$:
\begin{center}
$  
\begin{array}{cccr}
  \pair{\srooti{1}}{\srooti{2}} \in \abisim
    & \hspace*{9ex} & (\text{roots})
  \\[0.75ex]
  \vlabi{1}{\avert} = \vlabi{2}{\avertacc} 
    & & (\text{labels})  
  \\[0.75ex]
  \pair{\vargsi{1}{\avert}}{\vargsi{2}{\avertacc}} \in \abisim^*
    & & (\text{arguments})
\end{array}
$
\end{center}
where $\abisim^* \defdby \descsetexp{\pair{\averti{1}\cdots\averti{k}}{\avertacci{1}\cdots\avertacci{k}}}
                                    {\averti{1},\ldots,\averti{k}\in\vertsi{1},\, 
                                     \avertacci{1},\ldots,\avertacci{k}\in\vertsi{2}
                                     \text{ for $k\in\nats$ such that}
                                     \pair{\averti{i}}{\avertacci{i}}\in\abisim
                                     \text{ for all $1\le i\le k$}} 
       \,$.

%

Bisimulation is an equivalence relation on the class $\classtgsover{\asig}$ of term graphs over a signature $\asig$.
The homomorphism (functional bisimulation) relation $\sfunbisim$
is a pre-order on term graphs over a given signature $\asig$, and it induces a partial order on
isomorphism equivalence classes of term graphs over $\asig$.
We will refer to $\sfunbisim$ as the \emph{sharing pre-order}, and will speak of it as \emph{sharing order},
dropping the `pre'. 
The bisimulation equivalence class 
$\eqcl{\eqcl{\atg}{\siso}}{\sbisimsubscript} \defdby \descsetexp{\eqcl{\atgacc}{\siso}}{\atgacc \bisim \atg}$
of the isomorphism equivalence class $\eqcl{\atg}{\siso}$ of a term graph $\atg$ 
is ordered by homomorphism $\sfunbisim$ 
such that
$\pair{\eqcl{\eqcl{\atg}{\siso}}{\sbisimsubscript}}{\sfunbisim}$ is a complete lattice \cite{ario:klop:1996, terese:2003}.%
Note that, different from e.g.\ \cite{terese:2003}, 
we use the order relation $\sfunbisim$ in the same direction as $\le\,$:
if $\atgi{1} \funbisim \atgi{2}$, then $\atgi{2}$ is greater or equal to $\atgi{1}$ 
in the ordering $\sfunbisim$ (indicating that sharing is typically increased from $\atgi{1}$ to $\atgi{2}$). 

Let $\aclass \subseteq\classtgsover{\asig}$ be a subclass of the term graphs over $\asig$, for a signature $\asig$.
We say that $\aclass$ is \emph{closed under homorphism} (\emph{closed under bisimulation})
if $\atg \funbisim \atgacc$  (resp.\ $\atg \bisim \atgacc$)
   for $\atg,\atgacc\in\classtgsover{\asig}$ with $\atg\in\aclass$
   implies $\atgacc\in\aclass$.
Note these concepts are invariant under considering other signatures $\asigacc$
with $\aclass\subseteq\classtgsover{\asigacc}$.

\section{$\lambda$-higher-order-Term-Graphs}
  \label{sec:lambdahotgs}

By $\siglambda$ we designate the signature
$\setexp{ \sslapp, \sslabs }$ 
with $\arity{\sslapp} = 2$, and
     $\arity{\sslabs} = 1$. 
By $\siglambdai{i}$, for $i\in\setexp{0,1}$, we denote the extension
$\siglambda \cup \setexp{ \snlvar }$ of $\siglambda$
where $\arity{\snlvar} = i$.     
The classes of term graphs over $\siglambdai{0}$ and $\siglambdai{1}$
are denoted by $\classtgssiglambdai{0}$ and $\classtgssiglambdai{1}$, respectively. 
   
Let $\atg = \tuple{\verts,\svlab,\svargs,\sroot}$ be a term graph over a signature
extending $\siglambda$ or $\siglambdai{i}$, for $i\in\setexp{0,1}$.   
By $\vertsof{\sslabs}$ we designate the set of 
\emph{abstraction vertices} of $\atg$,
that is, the subset of $\verts$ consisting of all vertices with label $\sslabs$;
more formally,
$\vertsof{\sslabs} \defdby \descsetexp{\avert\in\verts}{\vlab{\avert} = \sslabs}$.
Analogously, the sets $\vertsof{\sslapp}$ and $\vertsof{\snlvar}$ 
of \emph{application vertices} and \emph{variable vertices} of $\atg$
are defined as the sets consisting of all vertices in $\verts$ with label $\sslapp$ or label $\snlvar$, respectively.  
Whether the variable vertices have an outgoing edge depends on the value of $i$.
The intention is to consider two variants of term graphs, one with and one
without variable back-links to their corresponding abstraction.

A `$\lambda$\nb-higher-order-term-graph'
consists of a $\siglambdai{i}$\nb-term-graph together with a scope function 
that maps abstraction vertices to their scopes (`extended scopes' in \cite{grab:roch:2012}), which are subsets of the set of vertices.

\begin{definition}[\lambdahotg]\label{def:lambdahotg}\normalfont
  Let $i\in\setexp{0,1}$. 
  A \emph{\lambdahotg}
  (short for \emph{$\lambda$\nb-higher-order-term-graph}) 
  \emph{over $\siglambdai{i}$},
  is a five-tuple $\alhotg = \tuple{\verts,\svlab,\svargs,\sroot,\sScope}$
  where $\atgi{\alhotg} = \tuple{\verts,\svlab,\svargs,\sroot}$ is a $\siglambdai{i}$\nb-term-graph,
  called the term graph \emph{underlying} $\alhotg$,
  and $\sScope \funin \vertsof{\sslabs} \to \powersetof{\verts}$
  is the \emph{scope function} of $\alhotg$ 
  (which maps an abstraction vertex $\avert$ to a set of vertices called its \emph{scope})
  that together with $\atgi{\alhotg}$ fulfills
  the following conditions:
  For all $k\in\setexp{0,1}$,
  all vertices $\bvert,\bverti{0},\bverti{1}\in\verts$,
  and all abstraction vertices $\avert,\averti{0},\averti{1}\in\vertsof{\sslabs}$
  it holds:
\begin{align*}
       \;\; & \Rightarrow\;\;
  \sroot\notin\Scopemin{\avert}
  & & \text{(root)} 
  \displaybreak[0]\\
       \;\; & \Rightarrow\;\;
  \avert \in \Scope{\avert} 
  & & \text{(self)}
  \displaybreak[0]\\ 
  \averti{1}\in\Scopemin{\averti{0}}
      \;\; & \Rightarrow\;\;
  \Scope{\averti{1}} \subseteq \Scopemin{\averti{0}} 
  & & \text{(nest)}
  \displaybreak[0]\\
  \bvert \tgsucci{k} \bverti{k}
    \;\logand\;
  \bverti{k}\in\Scopemin{\avert}
       \;\; & \Rightarrow\;\;  
  \bvert\in\Scope{\avert} 
  & & \text{(closed)}
  \displaybreak[0]\\
  \bvert\in\vertsof{\snlvar}
      \;\; & \Rightarrow\;\;
  \existsst{\averti{0}\in\vertsof{\sslabs}}{\bvert\in\Scopemin{\averti{0}}}  
  & & \text{(scope)$_0$} 
  \displaybreak[0]\\[0.5ex]
  \bvert\in\vertsof{\snlvar}
    \;\logand\;
  \bvert \tgsucci{0} \bverti{0}
      \;\; & \Rightarrow\;\;
  \left\{\hspace*{1pt}   
  \begin{aligned}[c]    
    & \bverti{0}\in\vertsof{\sslabs}
         \;\logand\;
    \\[-0.25ex]
    & \;\logand\; 
      \forall \avert\in\vertsof{\sslabs}.
    \\[-0.5ex]
    & \phantom{\;\logand\;} \hspace*{3ex}  
        (\bvert\in\Scope{\avert} \Leftrightarrow \bverti{0}\in\Scope{\avert})
  \end{aligned}  
  \right.
  & & \text{(scope)$_1$} 
\end{align*}%
where $\Scopemin{\avert} \defdby \Scope{\avert}\setminus\setexp{\avert}$.  
Note that if $i=0$, then (scope)$_1$ is trivially true and hence superfluous,
and if $i=1$, then (scope)$_0$ is redundant, because it follows from (scope)$_1$ in this case.  
For $\bvert\in\verts$ and $\avert\in\vertsof{\sslabs}$
we say that $\avert$ is a \emph{binder for} $\bvert$ if $\bvert\in\Scope{\avert}$,
and we designate by
$\binders{\bvert}$  
the set of binders of $\bvert$.

The classes of \lambdahotg{s} over $\siglambdai{0}$ and $\siglambdai{1}$ 
will be denoted by $\classlhotgsi{0}$ and $\classlhotgsi{1}$.
\end{definition}

See Fig.~\ref{fig:lambdahotg} for two different \lambdahotg{s} over $\siglambdai{i}$
both of which represent the same term in the \lambdacalculus\ with $\stxtletrec$, namely
$\letrecin{\arecvar = \labs{\avar}{\lapp{(\labs{\bvar}{\lapp{\bvar}{(\lapp{\avar}{\brecvar})}})}{(\labs{\cvar}{\lapp{\brecvar}{\arecvar}})}},\; 
           \brecvar = \labs{\dvar}{\dvar}}
          {\arecvar}$.

\begin{figure}[t]
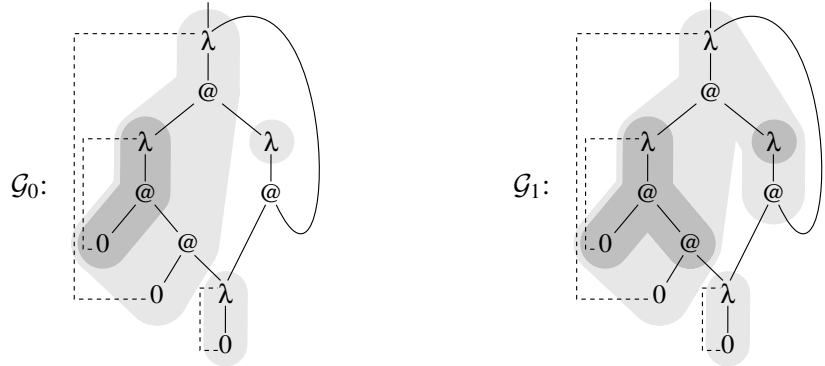

\graphs{
  \graph{\alhotgi{0}}{running_scopes_eager}
  \graph{\alhotgi{1}}{running_scopes_non_eager}
}
\vspace*{-2ex}
\caption{\label{fig:lambdahotg}
$\alhotgi{0}$ and $\alhotgi{1}$ are \lambdahotg{s} in $\classlhotgsi{i}$ whereby the
dotted back-link edges are present for $i=1$, but absent for
$i=0$. The underlying term graphs of $\alhotgi{0}$ and $\alhotgi{1}$ are
identical but their scope functions (signified by the shaded areas) differ. 
While in $\alhotgi{0}$ scopes are chosen as small as possible, which we refer to as `eager scope closure', 
in $\alhotgi{1}$ some scopes are closed only later in the graph.
}
\end{figure}

The following lemma states some basic properties of the scope function in \lambdahotg{s}.
Most importantly, scopes in \lambdahotg{s} are properly nested, in analogy with
scopes in finite \lambdaterms. 

\begin{lemma}\label{lem:lhotg}
  Let $i\in\setexp{0,1}$, and let $\alhotg = \tuple{\verts,\svlab,\svargs,\sroot,\sScope}$ be a \lambdahotg\ over $\siglambdai{i}$. 
  Then the following statements hold for all $\bvert\in\verts$ and $\avert,\averti{1},\averti{2}\in\vertsof{\sslabs}$:
  \begin{enumerate}[(i)]
    \item{}\label{lem:lhotg:item:i}
      If $\bvert\in\Scope{\avert}$, then $\avert$ is visited on every access path of $\bvert$,
      and all vertices on access paths of $\bvert$ after $\avert$ are in $\Scopemin{\avert}$.
      Hence (since $\atgi{\alhotg}$ is a term graph, every vertex has an access path) $\binders{\bvert}$ is finite.
    \item{}\label{lem:lhotg:item:ii}   
      If $\Scope{\averti{1}}\cap\Scope{\averti{2}} \neq \emptyset$ for $\averti{1} \neq \averti{2}\,$,
      then $\Scope{\averti{1}} \subseteq \Scopemin{\averti{2}}$ or $\Scope{\averti{2}} \subseteq \Scopemin{\averti{1}}$.
      As a consequence,
      if\/~$\Scope{\averti{1}} \cap \Scope{\averti{2}} \neq \emptyset$,
      then 
      $\Scope{\averti{1}} \subsetneqq \Scope{\averti{2}}$
      or $\Scope{\averti{1}} = \Scope{\averti{2}}$
      or $\Scope{\averti{2}} \subsetneqq \Scope{\averti{1}}$.
    \item{}\label{lem:lhotg:item:iii}
      If\/ $\binders{\bvert} \neq \emptyset$, 
      then $\binders{\bvert} = \setexp{\averti{0},\ldots,\averti{n}}$ 
      for $\averti{0},\ldots,\averti{n}\in\vertsof{\sslabs}$
      and $\Scope{\averti{n}} \subsetneqq \Scope{\averti{n-1}} \ldots \subsetneqq \Scope{\averti{0}}$.
  \end{enumerate}
\end{lemma}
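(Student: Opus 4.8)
The plan is to prove statement~(i) directly from the scope conditions, and then to derive (ii) and (iii) from it together with (nest) and (self).

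For (i), I would fix $\bvert\in\Scope{\avert}$ and an access path $\sroot = \cverti{0} \tgsucc \cverti{1} \tgsucc \cdots \tgsucc \cverti{m} = \bvert$ of $\bvert$. The case $\bvert = \avert$ is trivial (take $\avert = \cverti{m}$; nothing follows it), so assume $\bvert\in\Scopemin{\avert}$. Call a vertex $\cverti{\ell}$ on the path \emph{inside} if $\cverti{\ell}\in\Scopemin{\avert}$ and \emph{outside} otherwise; by (root) the start $\cverti{0}=\sroot$ is outside while the end $\cverti{m}=\bvert$ is inside. The key observation is that every outside-to-inside transition must occur at $\avert$: if $\cverti{\ell}$ is outside and $\cverti{\ell+1}$ is inside, then $\cverti{\ell}\tgsucc\cverti{\ell+1}$ with $\cverti{\ell+1}\in\Scopemin{\avert}$, so (closed) gives $\cverti{\ell}\in\Scope{\avert}$, and since $\cverti{\ell}\notin\Scopemin{\avert}$ and $\Scope{\avert}\setminus\Scopemin{\avert} = \setexpnormalsize{\avert}$ by (self), we get $\cverti{\ell}=\avert$. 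Because an access path repeats no vertex, $\avert$ occurs at most once, hence there is at most one such transition; as the path starts outside and ends inside, there is exactly one, located at the unique occurrence of $\avert$. There can be no inside-to-outside transition thereafter, since that would force a second outside-to-inside transition (and a second occurrence of $\avert$) in order to reach the inside endpoint $\bvert$. Thus $\avert$ lies on the path and every vertex after it is in $\Scopemin{\avert}$. Finiteness of $\binders{\bvert}$ then follows: every binder of $\bvert$ lies on one fixed access path of $\bvert$ (at least one exists, as every vertex is reachable from $\sroot$), and a path is a finite tuple of vertices.

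For (ii), I would take $\bvert\in\Scope{\averti{1}}\cap\Scope{\averti{2}}$ with $\averti{1}\neq\averti{2}$ and fix an access path of $\bvert$; by (i) both $\averti{1}$ and $\averti{2}$ occur on it, say $\averti{1}$ before $\averti{2}$. Then (i) places the later vertex in the earlier scope, $\averti{2}\in\Scopemin{\averti{1}}$, whence (nest) yields $\Scope{\averti{2}}\subseteq\Scopemin{\averti{1}}$; the symmetric ordering gives $\Scope{\averti{1}}\subseteq\Scopemin{\averti{2}}$. For the stated consequence, (self) gives $\Scopemin{\avert}\subsetneqq\Scope{\avert}$ for every abstraction vertex, so each inclusion above sharpens to a strict inclusion of full scopes; together with the trivial case $\averti{1}=\averti{2}$ this yields the trichotomy.

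For (iii), I would combine the two previous parts: $\binders{\bvert}$ is finite by (i), and any two distinct $\averti{1},\averti{2}\in\binders{\bvert}$ have scopes that both contain $\bvert$, hence intersect, so by (ii) they are strictly comparable. In particular $\avert\mapsto\Scope{\avert}$ is injective on $\binders{\bvert}$ and its image is a finite set linearly ordered by $\subsetneqq$, i.e.\ a finite chain; enumerating $\binders{\bvert} = \setexpnormalsize{\averti{0},\ldots,\averti{n}}$ so that this chain reads $\Scope{\averti{n}}\subsetneqq\cdots\subsetneqq\Scope{\averti{0}}$ gives the claim. I expect all the real work to sit in (i), with (ii) and (iii) being short deductions; within (i) the delicate point is the ``no return'' argument, which rests on applying (closed) in the correct direction (every entry into the scope is guarded by $\avert$) and on the non-repetition property of access paths. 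It is worth verifying carefully that the boundary cases $\bvert=\avert$ and $\bvert=\sroot$ are treated, but I anticipate no difficulty beyond that.
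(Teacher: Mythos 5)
Your proposal is correct and follows essentially the same route as the paper: part (i) is proved by analysing an access path with (root), (closed), and (self) to locate the unique entry point into $\Scope{\avert}$ at $\avert$ itself (the paper walks backwards from $\bvert$, you analyse boundary crossings forwards, which is the same argument), and (ii) and (iii) are then derived exactly as in the paper via (nest) and the pairwise-comparability of intersecting scopes. Your spelling-out of the ``no return'' step and of the chain structure in (iii) is a faithful elaboration of what the paper leaves implicit.
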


\begin{proof}
  Let $i\in\setexp{0,1}$, and let $\alhotg = \tuple{\verts,\svlab,\svargs,\sroot,\sScope}$ 
  be a \lambdahotg\ over $\siglambdai{i}$. 
  
  For showing (\ref{lem:lhotg:item:i}),
  let $\bvert\in\verts$ and $\avert\in\vertsof{\sslabs}$ 
  be such that $\bvert\in\Scope{\avert}$.
  Suppose that 
  $\apath \funin \sroot = \bverti{0} \tgsucci{k_0} \bverti{1} \tgsucci{k_1} \bverti{2} \cdots \tgsucci{k_{n-1}} \bverti{n} = \bvert$
  is an access path of $\bvert$. 
  If $\bvert = \avert$, then nothing remains to be shown.
  Otherwise $\bverti{n} = \bvert \in\Scopemin{\avert}$, and, if $n>0$, then by (closed) it follows that $\bverti{n-1}\in\Scope{\avert}$.
  This argument can be repeated to find subsequently smaller $i$
  with $\bverti{i}\in\Scope{\avert}$ and $\bverti{i+1},\ldots,\bverti{n}\in\Scopemin{\avert}$. 
  We can proceed as long as $\bverti{i}\in\Scopemin{\avert}$.
  But since, due to (root), $\bverti{0} = \sroot\notin\Scopemin{\avert}$,  
  eventually we must encounter an $i_0$ such that
  such that $\bverti{i_0+1},\ldots,\bverti{n}\in\Scopemin{\avert}$
  and $\bverti{i_0}\in\Scope{\avert}\setminus\Scopemin{\avert}$.
  This implies $\bverti{i_0} = \avert$, showing that $\avert$ is visited on $\apath$.
  
  For showing (\ref{lem:lhotg:item:ii}),
  let $\bvert\in\verts$ and $\averti{1},\averti{2}\in\vertsof{\sslabs}$,
  $\averti{1} \neq \averti{2}$ 
  be such that
  $\bvert\in\Scope{\averti{1}}\cap\Scope{\averti{2}}$.
  Let $\apath$ be an access path of $\bvert$. 
  Then it follows by (\ref{lem:lhotg:item:i}) 
  that both $\averti{1}$ and $\averti{2}$ are visited on $\apath$,
  and that, depending on whether $\averti{1}$ or $\averti{2}$ is visited first on $\apath$,
  either $\averti{2}\in\Scopemin{\averti{1}}$ or $\averti{1}\in\Scopemin{\averti{2}}$. 
  Then due to (nest) it follows that either 
  $\Scope{\averti{2}}\subseteq\Scopemin{\averti{1}}$ holds or $\Scope{\averti{1}}\subseteq\Scopemin{\averti{2}}$.
  
  Finally, statement (\ref{lem:lhotg:item:iii}) is an easy consequence of statement (\ref{lem:lhotg:item:ii}).
\end{proof}

\begin{remark}\normalfont
  The notion of \lambdahotg\ is an adaptation of the notion of 
  `higher-order term graph' by Blom \cite[Def.\ 3.2.2]{blom:2001}
  for the purpose of representing finite or infinite \lambdaterms\ 
  or cyclic \lambdaterms, that is, terms in the \lambdacalculus\ with $\stxtletrec$.
  In particular, 
  \lambdahotg{s} over $\siglambdai{1}$ correspond closely to
  higher-order term graphs over signature~$\siglambda$. 
  But they differ in the following respects:
  \renewcommand{\descriptionlabel}[1]%
      {\hspace{\labelsep}{\emph{#1{:}}}}
\small
  \begin{description}\itemizeprefs
    \item[Abstractions]
      Higher-order term graphs in \cite{blom:2001} 
      are graph representations of finite or infinite terms 
      in Combinatory Reduction Systems (\CRS{s}).
      They typically contain abstraction vertices with label~$\Box$ 
      that represent \CRS\nb-ab\-strac\-tions.
      In contrast,
      \lambdahotg{s} have abstraction vertices with label~$\sslabs$ that denote
      $\lambda$\nb-abstractions.  
    \item[Signature] 
      Whereas higher-order term graphs in \cite{blom:2001}
      are based on an arbitrary \CRS\nb-signature,
      \lambdahotg{s} over $\siglambdai{1}$ 
      only contain 
      the application symbol $\sslapp$ and the vari\-able-occur\-rence symbol $\snlvar$
      in addition to the abstraction symbol $\sslabs$. 
    \item[Variable back-links and variable occurrence vertices]  
      In the formalization of higher-order term graphs in \cite{blom:2001}
      there are no explicit vertices that represent variable occurrences.
      Instead, variable occurrences are represented by back-link edges to abstraction vertices.
      Actually, in the formalization chosen in \cite[Def.\hspace*{2pt}3.2.1]{blom:2001}, 
      a back-link edge does not directly target the abstraction vertex $\avert$ it refers to, 
      but ends at a special variant vertex $\bar{\avert}$ of $\avert$. 
      (Every such variant abstraction vertex $\bar{\avert}$ could be looked upon as a variable vertex
       that is shared by all edges that represent occurrences of the variable bound by the abstraction vertex $\avert$.) 
      
      In \lambdahotg{s} over $\siglambdai{1}$
      a variable occurrence is represented by a variable-occurrence vertex
      that as outgoing edge has a back-link to the abstraction vertex that binds the occurrence. 
    \item[conditions on the scope function]
      While the conditions (root), (self), (nest), and (closed) on the scope function in higher-order term graphs
      in \cite[Def.\hspace*{2pt}3.2.2]{blom:2001} correspond directly to the respective conditions in Def.~\ref{def:lambdahotg},
      the difference between the condition (scope) there and (scope)$_1$ in Def.~\ref{def:lambdahotg}
      reflects the difference described in the previous item.
    \item[free variables]      
      Whereas the higher-order term graphs in \cite{blom:2001} cater for the
      presence of free variables, free variables have been excluded from the basic format of \lambdahotg{s}. 
  \end{description}
\end{remark}

\begin{definition}[homomorphism, bisimulation]\label{def:homom:lambdahotg}\normalfont
  Let $i\in\setexp{0,1}$. 
  Let $\alhotgi{1}$ and $\alhotgi{2}$ be \lambdahotg{s} over $\siglambdai{i}$ with
  $\alhotgi{k} = \tuple{\vertsi{k},\svlabi{k},\svargsi{k},\srooti{k},\sScopei{k}}$
  for $k\in\setexp{1,2}$. 

  A \emph{homomorphism}, also called a \emph{functional bisimulation}, 
  from $\alhotgi{1}$ to $\alhotgi{2}$ 
  is a morphism from the structure
  $\tuple{\vertsi{1},\svlabi{1},\svargsi{1},\srooti{1},\sScopei{1}}$
  to the structure
  $\tuple{\vertsi{2},\svlabi{2},\svargsi{2},\srooti{2},\sScopei{2}}$,
  that is, a function 
  $ \sahom \funin \vertsi{1} \to \vertsi{2}$ 
  such that, for all $\avert\in\vertsi{1}$ 
  the conditions (labels), (arguments), and (roots) in 
  in \eqref{eq:def:homom} are satisfied, and additionally,
  for all $\avert\in\vertsiof{1}{\sslabs}$:
\begin{equation}\label{eq:def:homom:lambdahotg}
\begin{aligned}
  \funap{\sahomextext}{\Scopei{1}{\avert}}
    & = \Scopei{2}{\ahom{\avert}}
  & & & & & (\text{scope functions})
\end{aligned}
\end{equation}
where $\sahomextext$ is the homomorphic extension of $\sahom$ to sets over $\vertsi{1}$, that is, to the function 
$\bar{\sahom} \funin \powersetof{\vertsi{1}} \to \powersetof{\vertsi{2}}$, 
$ A \mapsto \descsetexp{\ahom{a}}{a\in A} $.  
If there exists a homomorphism (a functional bisimulation) $\sahom$ from $\alhotgi{1}$ to $\alhotgi{2}$,
then we write $\alhotgi{1} \funbisimi{\sahom} \alhotgi{2}$
or $\alhotgi{2} \convfunbisimi{\sahom} \alhotgi{1}$,
or, dropping $\sahom$ as subscript, 
$\alhotgi{1} \funbisim \alhotgi{2}$ or $\alhotgi{2} \convfunbisim \alhotgi{1}$.

A \emph{bisimulation} between  
$\alhotgi{1}$ and $\alhotgi{2}$ 
is a term graph 
$ \alhotg = \tuple{\abisim,\svlab,\svargs,\sroot,\sScope}$ over $\asig$
with $\abisim \subseteq \vertsi{1}\times\vertsi{2}$ 
 and $\sroot = \pair{\srooti{1}}{\srooti{2}}$
such that
$ \alhotgi{1} \convfunbisimi{\sproji{1}} \alhotg \funbisimi{\sproji{2}} \alhotgi{2}$
where $\sproji{1}$ and $\sproji{2}$ are projection functions, defined, for $i\in\setexp{1,2}$,
by $ \sproji{i} \funin \vertsi{1}\times\vertsi{2} \to \vertsi{i} $,
$\pair{\averti{1}}{\averti{2}} \mapsto \averti{i}$.
If there exists a bisimulation $\abisim$ between $\alhotgi{1}$ and $\alhotgi{2}$,
then we write $\alhotgi{1} \bisimi{\abisim} \alhotgi{2}$, 
or just $\alhotgi{1} \bisim \alhotgi{2}$. 
\end{definition}

\section{Abstraction-prefix based $\lambda$-h.o.-term-graphs}
  \label{sec:lambdaaphotgs}

By an `abstraction-prefix based $\lambda$\nb-higher-order-term-graph' we will
mean a term-graph over $\siglambdai{i}$ for $i\in\setexp{0,1}$ that is endowed
with a correct abstraction prefix function that maps abstraction vertices
$\avert$ to words of vertices that represent the sequence of abstractions that
have $\avert$ in their scope. 
The conceptual difference between the abstraction-prefix function and the scope function
is that the former makes the most essential scoping information locally
available. It explicitly states all `extended scopes' (induced by the transitive closure of the in-scope relation, see \cite{grab:roch:2012})
in which a node resides in the order of their nesting. This approach leads to simpler correctness conditions.

\begin{definition}[correct abstraction-prefix function for $\siglambdai{i}$\nb-term-graphs]%
    \label{def:abspre:function:siglambdai}\normalfont
  Let $ \atg = \tuple{\verts,\svlab,\svargs,\sroot}$ be, for an $i\in\setexp{0,1}$,
  a $\siglambdai{i}$\nb-term-graph.
  
  A function $\sabspre \funin \verts \to \verts^*$
  from vertices of $\atg$ to words of vertices 
  is called an \emph{abstraction-prefix function} for $\atg$.
  Such a function is called \emph{correct} if
  for all $\bvert,\bverti{0},\bverti{1}\in\verts$ and $k\in\setexp{0,1}$:
  \begin{align*}
      \;\; & \Rightarrow \;\;
    \abspre{\sroot} = \emptyword  
    &
    (\text{root})
    \\
    \bvert\in\vertsof{\sslabs} 
      \;\logand\;
    \bvert \tgsucci{0} \bverti{0}
      \;\; & \Rightarrow \;\;
    \abspre{\bverti{0}} \prele \abspre{\bvert} \bvert
    & 
    (\sslabs)
    \\
    \bvert\in\vertsof{\sslapp}
      \;\logand\;
    \bvert \tgsucci{k} \bverti{k}
      \;\; & \Rightarrow \;\;
    \abspre{\bverti{k}} \prele \abspre{\bvert} 
    & 
    (\sslapp)
    \\
    \bvert\in\vertsof{\snlvar}
      \;\; & \Rightarrow \;\;
    \abspre{\bvert} \neq \emptyword  
    &
    (\snlvar)_{0}
    \\
    \bvert\in\vertsof{\snlvar}
      \;\logand\;
    \bvert \tgsucci{0} \bverti{0}
      \;\; & \Rightarrow \;\;
    \bverti{0}\in\vertsof{\sslabs}
      \;\logand\;
    \abspre{\bverti{0}}\bverti{0} = \abspre{\bvert}    
    & 
    (\snlvar)_{1}  
  \end{align*}
  Note that analogously as in Def.~\ref{def:lambdahotg},
  if $i=0$, then $(\snlvar)_1$ is trivially true and hence superfluous,
  and if $i=1$, then $(\snlvar)_0$ is redundant, because it follows from $(\snlvar)_1$ in this case. 

  We say that $\atg$ \emph{admits} a correct abstraction-prefix function if
  such a function exists for $\atg$.    
\end{definition}  

\begin{definition}[\lambdaaphotg]\label{def:aplambdahotg}\normalfont
  Let $i\in\setexp{0,1}$.
  A \emph{\lambdaaphotg} 
  (short for \emph{ab\-strac\-tion-pre\-fix based $\lambda$\nb-higher-order-term-graph})
  over signature $\siglambdai{i}$
  is a five-tuple $\alhotg = \tuple{\verts,\svlab,\svargs,\sroot,\sabspre}$
  where $\atgi{\alhotg} = \tuple{\verts,\svlab,\svargs,\sroot}$ is a $\siglambdai{i}$\nb-term-graph,
  called the term graph \emph{underlying} $\alhotg$,
  and $\sabspre$ is a correct \absprefix\ function for $\atgi{\alhotg}$.
  The classes of \lambdaaphotg{s} over $\siglambdai{i}$ 
  will be denoted by $\classlaphotgsi{i}$.
\end{definition}

See Fig.~\ref{fig:lambdaaphotg} for two examples, which correspond, as we will see, to the 
\lambdahotg{s} in Fig.~\ref{fig:lambdahotg}.

\begin{figure}[t]
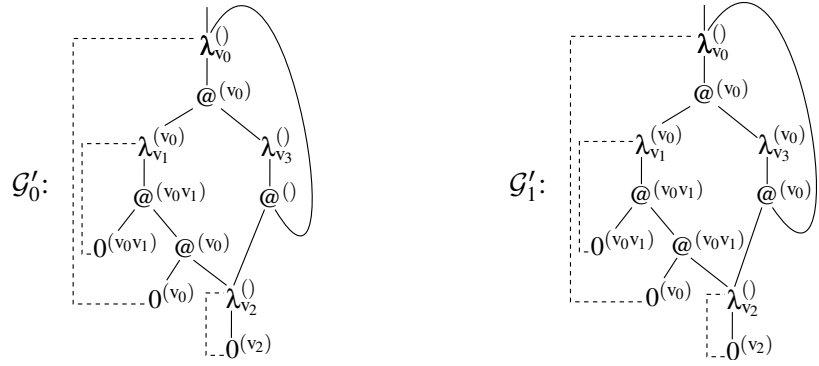

\graphs{
  \graph{\alaphotgi{0}'}{running_prefixed_ho_tg_eager}
  \graph{\alaphotgi{1}'}{running_prefixed_ho_tg_non_eager}
}
\vspace*{-2ex}
\caption{\label{fig:lambdaaphotg}
The \protect\lambdaaphotg{s} corresponding to the \protect\lambdahotg{s} in
Fig~\ref{fig:lambdahotg}. The subscripts of abstraction vertices indicate their names.
The super-scripts of vertices indicate their \absprefix{es}. A precise
formulation of this correspondence is given in
Example~\ref{ex:corr:lhotgs:laphotgs}.
}
\end{figure}

The following lemma states some basic properties of the scope function in \lambdaaphotg{s}.

\begin{lemma}\label{lem:laphotgs}
  Let $i\in\setexp{0,1}$ and let $\alaphotg = \tuple{\verts,\svlab,\svargs,\sroot,\sabspre}$ be a \lambdaaphotg\ over $\siglambdai{i}$. 
  Then the following statements hold:
  \begin{enumerate}[(i)]\itemizeprefs
    \item{}\label{lem:laphotg:item:i}
      Suppose that, for some $\avert,\bvert\in\verts$, $\avert$ occurs in $\abspre{\bvert}$.
      Then $\avert\in\vertsof{\sslabs}$, occurs in $\abspre{\bvert}$ only once,  
      and every access path of $\bvert$ passes through $\avert$, but does not end there, and thus $\bvert\neq\avert$.
      Furthermore it holds: $\stringcon{\abspre{\avert}}{\avert} \prele \abspre{\bvert}$.
      In particular, if $\abspre{\bvert} = \stringcon{\apre}{\avert}$, then $\abspre{\avert} = \apre$.
    \item{}\label{lem:laphotg:item:ii}
      Vertices in abstraction prefixes are abstraction vertices, and hence
      $\sabspre$ is of the form $\sabspre \funin \verts\to(\vertsof{\sslabs})^*$.  
    \item{}\label{lem:laphotg:item:iii}
      For all $\avert\in\vertsof{\sslabs}$ it holds: $\avert\notin\abspre{\avert}$. 
    \item{}\label{lem:laphotg:item:iv}
      While access paths might end in vertices in $\vertsof{\snlvar}$,
      they only pass through vertices in $\vertsof{\sslabs} \cup \vertsof{\sslapp}$.
  \end{enumerate}
\end{lemma}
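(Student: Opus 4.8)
The plan is to derive all four statements from a single invariant about access paths, proved by induction along such a path. Fix an arbitrary access path $\apath \funin \sroot = \bverti{0} \tgsucci{k_0} \bverti{1} \cdots \tgsucci{k_{n-1}} \bverti{n}$; recall that access paths are finite, visit no vertex twice, and exist for every vertex. I would prove by induction on $j$, for $0 \le j \le n$, the combined claim $(\ast)$: (a) the vertices occurring in $\abspre{\bverti{j}}$ are pairwise distinct abstraction vertices in $\vertsof{\sslabs}$, all occurring among $\bverti{0},\ldots,\bverti{j-1}$; and (b) $\abspre{\avert}\avert \prele \abspre{\bverti{j}}$ for every $\avert$ occurring in $\abspre{\bverti{j}}$. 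Clause (\ref{lem:laphotg:item:iv}) of the lemma will emerge as a byproduct.

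The base case $j=0$ is immediate from (root), as $\abspre{\sroot} = \emptyword$. For the step from $\bverti{j}$ to $\bverti{j+1}$ I distinguish the label of $\bverti{j}$. The case $\bverti{j}\in\vertsof{\snlvar}$ (with $j<n$, so that $\bverti{j}$ has the outgoing edge $\bverti{j}\tgsucci{0}\bverti{j+1}$) cannot occur: by $(\snlvar)_1$ we would have $\bverti{j+1}\in\vertsof{\sslabs}$ and $\abspre{\bverti{j+1}}\bverti{j+1} = \abspre{\bverti{j}}$, so $\bverti{j+1}$ occurs in $\abspre{\bverti{j}}$; but then the induction hypothesis (a) places $\bverti{j+1}$ among $\bverti{0},\ldots,\bverti{j-1}$, contradicting that $\apath$ visits $\bverti{j+1}$ for the first time at position $j+1$. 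This impossibility is precisely clause (\ref{lem:laphotg:item:iv}) (every interior vertex of an access path lies in $\vertsof{\sslabs}\cup\vertsof{\sslapp}$). In the case $\bverti{j}\in\vertsof{\sslabs}$, condition $(\sslabs)$ gives $\abspre{\bverti{j+1}} \prele \abspre{\bverti{j}}\bverti{j}$; since $\bverti{j}$ is an abstraction vertex not visited earlier, appending it to the pairwise-distinct word $\abspre{\bverti{j}}$ keeps the letters distinct and within $\bverti{0},\ldots,\bverti{j}$, and passing to a prefix preserves this, giving (a). In the case $\bverti{j}\in\vertsof{\sslapp}$, condition $(\sslapp)$ gives $\abspre{\bverti{j+1}} \prele \abspre{\bverti{j}}$ and (a) follows directly from the induction hypothesis. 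For (b), fix $\avert$ occurring in $\abspre{\bverti{j+1}}$ and let $c$ denote the common word ($\abspre{\bverti{j}}\bverti{j}$ in the abstraction case, $\abspre{\bverti{j}}$ in the application case); then $\abspre{\avert}\avert \prele c$ (by the induction hypothesis (b) when $\avert$ already occurs in $\abspre{\bverti{j}}$, and trivially when $\avert = \bverti{j}$), and $\abspre{\bverti{j+1}} \prele c$ as well, so these two words are $\prele$-comparable; as $\avert$ occurs in $\abspre{\bverti{j+1}}$ but, by distinctness of the letters of $c$, only at the final position of $\abspre{\avert}\avert$ within $c$, the comparison can only be $\abspre{\avert}\avert \prele \abspre{\bverti{j+1}}$.

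With $(\ast)$ established I would read off the remaining statements by evaluating it at $j=n$ along an access path ending in the vertex of interest. Part (\ref{lem:laphotg:item:ii}) is (a): every letter of any $\abspre{\bvert}$ lies in $\vertsof{\sslabs}$. For part (\ref{lem:laphotg:item:i}), (a) shows that $\avert\in\vertsof{\sslabs}$, that it occurs in $\abspre{\bvert}$ only once, and that it is visited strictly before the end of every access path of $\bvert$ (whence $\bvert\neq\avert$), while (b) gives $\abspre{\avert}\avert \prele \abspre{\bvert}$; the concluding clause follows because if $\abspre{\bvert} = \stringcon{\apre}{\avert}$, then the prefix $\abspre{\avert}\avert$ of $\stringcon{\apre}{\avert}$ ends in $\avert$, which by uniqueness occurs in $\stringcon{\apre}{\avert}$ only at its last position, forcing $\abspre{\avert}\avert = \stringcon{\apre}{\avert}$ and hence $\abspre{\avert} = \apre$. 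Part (\ref{lem:laphotg:item:iii}) is the instance $\bvert = \avert$ of part (\ref{lem:laphotg:item:i}): were $\avert$ to occur in $\abspre{\avert}$, part (\ref{lem:laphotg:item:i}) would yield $\avert\neq\avert$. Part (\ref{lem:laphotg:item:iv}) is the byproduct clause noted above.

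The main obstacle is the inductive step, and within it the variable case: the essence of the lemma is that a correct abstraction-prefix function cannot allow an access path to re-enter, through a variable back-link, an abstraction it has already passed, and this is exactly where the no-repetition property of access paths meets condition $(\snlvar)_1$. The prefix-order bookkeeping in (b) — establishing comparability of two prefixes of a common word and then pinning down the direction of the comparison by a uniqueness-of-occurrence argument — is routine but must be handled carefully, as the same reasoning recurs in the final clause of part (\ref{lem:laphotg:item:i}).
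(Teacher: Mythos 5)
Your proposal is correct and takes essentially the same route as the paper: the paper's proof of (i) is exactly your invariant, argued informally by ``walking along'' an access path and observing that the prefix starts empty at the root (by (root)), grows only at abstraction vertices by appending that vertex (by $(\sslabs)$, $(\sslapp)$, $(\snlvar)_1$), and hence records distinct, previously visited abstraction vertices; parts (ii) and (iii) are read off from (i) just as you do, and (iv) is disposed of identically, by noting that a variable vertex's back-link target already occurs in its prefix and so was already visited. Your explicit induction with the stated invariant $(\ast)$ is simply a more formal rendering of the same argument.
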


\begin{proof}
  Let $i\in\setexp{0,1}$ and let $\alaphotg = \tuple{\verts,\svlab,\svargs,\sroot,\sabspre}$ be a \lambdaaphotg\ over $\siglambdai{i}$. 
  
  For showing (\ref{lem:laphotg:item:i}), let $\avert,\bvert\in\verts$ be such that $\avert$ occurs in $\abspre{\bvert}$. 
  Suppose further that $\apath$
  is an access path of $\bvert$. 
  Note that when walking through $\apath$ the abstraction prefix starts out empty (due to (root)),
  and is expanded only in steps from vertices $\avertacc\in\vertsof{\sslabs}$ 
  (due to ($\sslabs$), ($\sslapp$), and ($\snlvar$)$_1$)
  in which just $\avertacc$ is added to the prefix on the right (due to ($\sslabs$)).
  Since $\avert$ occurs in $\abspre{\bvert}$, it follows that $\avert\in\vertsof{\sslabs}$, 
  that $\avert$ must be visited on $\apath$, 
  and that $\apath$ continues after the visit to $\avert$.
  That $\apath$ is an access path also entails that $\avert$ is not visited again on $\apath$,
  hence that $\bvert\neq\avert$ and that $\avert$ occurs only once in $\abspre{\bvert}$,
  and that $\stringcon{\abspre{\avert}}{\avert}$, the abstraction prefix of the successor vertex of $\avert$ on $\apath$,
  is a prefix of the abstraction prefix of every vertex that is visited on $\apath$ after $\avert$. 
  
  Statements~(\ref{lem:laphotg:item:ii}) and (\ref{lem:laphotg:item:iii}) follow directly from statement~(\ref{lem:laphotg:item:i}).
  
  For showing (\ref{lem:laphotg:item:iv}), consider an access path 
  $\apath \funin \sroot = \bverti{0} \tgsucc \cdots \tgsucc \bverti{n}$ 
  that leads to a vertex $\bverti{n}\in\vertsof{\snlvar}$.
  If $i=0$, then there is no path that extends $\apath$ properly beyond $\bverti{n}$. 
  So suppose $i=1$, 
  and let $\bverti{n+1}\in\verts$ be such that $\bverti{n} \tgsucci{0} \bverti{n+1}$. 
  Then ($\snlvar$)$_1$ implies that  
  $\abspre{\bverti{n}} = \stringcon{\abspre{\bverti{n+1}}}{\bverti{n+1}}$,
  from which it follows by (\ref{lem:laphotg:item:i})  
  that 
  $\bverti{n+1}$ is visited already on $\apath$.
  Hence $\apath$ does not extend to a longer path that is again an access path.
\end{proof}

\begin{definition}[homomorphism, bisimulation]\label{def:homom:aplambdahotg}\normalfont
  Let $i\in\setexp{0,1}$.
  Let $\alaphotgi{1}$ and $\alaphotgi{2}$ 
  be \lambdaaphotg{s} over $\siglambdai{i}$ with
  $\alaphotgi{k} = \tuple{\vertsi{k},\svlabi{k},\svargsi{k},\srooti{k},\sabsprei{k}}$
  for $k\in\setexp{1,2}$.

  A \emph{homomorphism}, also called a \emph{functional bisimulation}, 
  from $\alaphotgi{1}$ to $\alaphotgi{2}$ 
  is a morphism from the structure
  $\tuple{\vertsi{1},\svlabi{1},\svargsi{1},\srooti{1},\sabsprei{1}}$
  to the structure
  $\tuple{\vertsi{2},\svlabi{2},\svargsi{2},\srooti{2},\sabsprei{2}}$,
  that is, a function 
  $ \sahom \funin \vertsi{1} \to \vertsi{2}$ 
  such that, for all $\avert\in\vertsi{1}$ 
  the conditions (labels), (arguments), and (roots) in 
  in \eqref{eq:def:homom} are satisfied, and additionally,
  for all $\avert\in\vertsi{1}$:
\begin{equation}\label{eq:def:homom:aplambdahotg}
\begin{aligned}
  \funap{\bar{\sahom}}{\absprei{1}{\avert}}
    & = \absprei{2}{\ahom{\avert}}
  & & & & & (\text{abstraction-prefix functions})
\end{aligned}
\end{equation}
where ${\bar{\sahom}}$ is the homomorphic extension of $\sahom$ to words over $\vertsi{1}$.
In this case we write $\alaphotgi{1} \funbisimi{\sahom} \alaphotgi{2}$,
or $\alaphotgi{2} \convfunbisimi{\sahom} \alaphotgi{1}$.
And we write
$\alaphotgi{1} \funbisim \alaphotgi{2}$,
or for that matter $\alaphotgi{2} \convfunbisim \alaphotgi{1}$,
if there is a homomorphism (a functional bisimulation) from $\alaphotgi{1}$ to $\alaphotgi{2}$.

A \emph{bisimulation} between  
$\alaphotgi{1}$ and $\alaphotgi{2}$ 
is a term graph 
$ \alaphotg = \tuple{\abisim,\svlab,\svargs,\sroot,\sScope}$ over $\asig$
with $\abisim \subseteq \vertsi{1}\times\vertsi{2}$ 
 and $\sroot = \pair{\srooti{1}}{\srooti{2}}$
such that
$ \alaphotgi{1} \convfunbisimi{\sproji{1}} \alaphotg \funbisimi{\sproji{2}} \alaphotgi{2}$
where $\sproji{1}$ and $\sproji{2}$ are projection functions, defined, for $i\in\setexp{1,2}$,
by $ \sproji{i} \funin \vertsi{1}\times\vertsi{2} \to \vertsi{i} $,
$\pair{\averti{1}}{\averti{2}} \mapsto \averti{i}$.
If there exists a homomorphism (a functional bisimulation) $\sahom$ from $\alaphotgi{1}$ to $\alaphotgi{2}$,
then we write $\alaphotgi{1} \funbisimi{\sahom} \alaphotgi{2}$
or $\alaphotgi{2} \convfunbisimi{\sahom} \alaphotgi{1}$,
or, dropping $\sahom$ as subscript, 
$\alaphotgi{1} \funbisim \alaphotgi{2}$ or $\alaphotgi{2} \convfunbisim \alaphotgi{1}$.
\end{definition}

The following proposition defines mappings between \lambdahotg{s} and
\lambdaaphotg{s} by which we establish a bijective correspondence between the
two classes. For both directions the underlying \lambdatg\ remains unchanged.
$\slhotgstolaphotgsi{i}$ derives an abstraction-prefix function $\sabspre$ from
a scope function by assigning to each vertex a word of its binders in the
correct nesting order.
$\slaphotgstolhotgsi{i}$ defines its scope function $\sScope$ by
assigning to each \lambda-vertex $\avert$ the set of vertices that have
$\avert$ in their prefix (along with $\avert$ since a vertex never has itself
in its abstraction prefix).

\begin{proposition}\label{prop:mappings:lhotgs:laphotgs}
  For each $i\in\setexp{0,1}$, the mappings $\slhotgstolaphotgsi{i}$ and $\slaphotgstolhotgsi{i}$ 
  are well-defined between the class of \lambdahotg{s} over $\siglambdai{i}$ 
  and the class of \lambdaaphotg{s} over $\siglambdai{i}$:
\begin{align}
& 
\left.
\begin{aligned}
  &
  \slhotgstolaphotgsi{i} \funin \classlhotgsi{i} \to \classlaphotgsi{i},
    \;\;
  \alhotg = \tuple{\verts,\svlab,\svargs,\sroot,\sScope} 
    \mapsto \lhotgstolaphotgsi{i}{\alhotg} \defdby \tuple{\verts,\svlab,\svargs,\sroot,\sabspre} 
  \\
  & \hspace*{22ex}
  \text{where }
  \sabspre \funin \verts\to\verts^*, \;\,
  \bvert \mapsto \averti{0}\mcdots\averti{n} \;
    \text{ if } \begin{aligned}[t]
                  & \binders{\bvert} \setminus \setexp{\bvert} = \setexp{\averti{0},\ldots,\averti{n}} \text{ and}
                  \\[-0.5ex]
                  & \Scope{\averti{n}} \subsetneqq \Scope{\averti{n-1}} \ldots \subsetneqq \Scope{\averti{0}}
                 \end{aligned}  
\end{aligned}
\right\} \label{eq1:prop:mappings:lhotgs:laphotgs}
\\[0.75ex]
&
\left.
\begin{aligned}
  &
  \slaphotgstolhotgsi{i} \funin \classlaphotgsi{i} \to \classlhotgsi{i},
    \;\;
  \alhotg = \tuple{\verts,\svlab,\svargs,\sroot,\sabspre} 
    \mapsto \lhotgstolaphotgsi{i}{\alhotg} \defdby \tuple{\verts,\svlab,\svargs,\sroot,\sScope} 
  \\
  & \hspace*{22ex}
  \text{where }
  \sScope \funin \vertsof{\sslabs}\to\powersetof{\verts}, \;\,
    \avert \mapsto \descsetexp{\bvert\in\verts}{\avert\text{ occurs in }\abspre{\bvert}} \cup \setexp{\avert}
\end{aligned}
\hspace*{1.7ex}\right\} \label{eq2:prop:mappings:lhotgs:laphotgs}
\end{align}
\end{proposition}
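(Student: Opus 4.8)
The plan is to verify, for each direction separately, that the prescribed output meets all the defining conditions of its target class; since both maps leave the underlying $\siglambdai{i}$\nb-term-graph unchanged, only the derived prefix function resp.\ scope function needs attention. For $\slhotgstolaphotgsi{i}$ I would first confirm that the prescription is single-valued: for a vertex $\bvert$, Lemma~\ref{lem:lhotg}(\ref{lem:lhotg:item:iii}) makes $\binders{\bvert}$, hence $\binders{\bvert}\setminus\setexp{\bvert}$, a strict $\subseteq$\nb-chain of scopes, so there is a unique decreasing-scope enumeration $\averti{0}\mcdots\averti{n}$, which is $\abspre{\bvert}$. A useful reformulation (from Lemma~\ref{lem:lhotg}(\ref{lem:lhotg:item:i}),(\ref{lem:lhotg:item:iii})) is that $\abspre{\bvert}$ lists the binders $\avert\neq\bvert$ of $\bvert$ in the order they occur on any access path of $\bvert$, and that when $\bvert\in\vertsof{\sslabs}$ it carries the $\subseteq$\nb-smallest scope among its own binders (by (nest), since $\avert\in\binders{\bvert}\setminus\setexp{\bvert}$ gives $\bvert\in\Scopemin{\avert}$, hence $\Scope{\bvert}\subsetneqq\Scope{\avert}$), so that $\abspre{\bvert}\bvert$ enumerates all of $\binders{\bvert}$ by decreasing scope.

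Then I would check the five conditions of Definition~\ref{def:abspre:function:siglambdai}. Condition (root) is immediate from (root) of Definition~\ref{def:lambdahotg}, which gives $\binders{\sroot}\subseteq\setexp{\sroot}$. For $(\sslapp)$ and $(\sslabs)$ the engine is (closed): whenever $\bvert\tgsucci{k}\bverti{k}$, any abstraction $\avert$ with $\bverti{k}\in\Scopemin{\avert}$ also has $\bvert\in\Scope{\avert}$, so $\binders{\bverti{k}}\setminus\setexp{\bverti{k}}\subseteq\binders{\bvert}$ for an application $\bvert$, resp.\ $\subseteq\binders{\bvert}\cup\setexp{\bvert}$ for an abstraction $\bvert$ (the extra binder can only be $\bvert$ itself). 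Monotonicity of scope inclusion upgrades this to a \emph{prefix} relation: if a binder of $\bverti{k}$ has scope $S$, every binder of $\bvert$ whose scope contains $S$ also contains $\bverti{k}$, so the binders of $\bverti{k}$ form an initial segment in the decreasing-scope order, yielding $\abspre{\bverti{k}}\prele\abspre{\bvert}$ resp.\ $\abspre{\bverti{0}}\prele\abspre{\bvert}\bvert$. Conditions $(\snlvar)_0$ and $(\snlvar)_1$ translate (scope)$_0$ and (scope)$_1$: the former supplies a binder distinct from the variable vertex $\bvert$, so $\abspre{\bvert}\neq\emptyword$; the latter gives $\binders{\bvert}=\binders{\bverti{0}}$ with $\bverti{0}$ the smallest-scope element, whence $\abspre{\bvert}=\abspre{\bverti{0}}\bverti{0}$.

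For $\slaphotgstolhotgsi{i}$ I would verify the six conditions of Definition~\ref{def:lambdahotg} for $\Scope{\avert}=\descsetexpnormalsize{\bvert\in\verts}{\avert\text{ occurs in }\abspre{\bvert}}\cup\setexp{\avert}$, using Lemma~\ref{lem:laphotgs} throughout. Condition (self) holds by construction. Lemma~\ref{lem:laphotgs}(\ref{lem:laphotg:item:iii}) ($\avert\notin\abspre{\avert}$) turns $\Scopemin{\avert}$ into $\descsetexpnormalsize{\bvert}{\avert\text{ occurs in }\abspre{\bvert}}$, so (root) reduces to $\abspre{\sroot}=\emptyword$. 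Condition (nest) follows from Lemma~\ref{lem:laphotgs}(\ref{lem:laphotg:item:i}): if $\averti{0}$ occurs in $\abspre{\averti{1}}$ and $\bvert\in\Scope{\averti{1}}$, then $\abspre{\averti{1}}\averti{1}\prele\abspre{\bvert}$, so $\averti{0}$ occurs in $\abspre{\bvert}$. Condition (closed) is a case split on the label of $\bvert$, feeding the edge $\bvert\tgsucci{k}\bverti{k}$ into $(\sslapp)$, $(\sslabs)$, or $(\snlvar)_1$ to push the occurrence of $\avert$ from $\abspre{\bverti{k}}$ into $\abspre{\bvert}$ (or to recognize $\avert=\bvert$). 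Finally (scope)$_0$ combines $(\snlvar)_0$ with Lemma~\ref{lem:laphotgs}(\ref{lem:laphotg:item:ii}), and (scope)$_1$ unwinds $(\snlvar)_1$: from $\abspre{\bvert}=\abspre{\bverti{0}}\bverti{0}$, the abstractions occurring in $\abspre{\bvert}$ are exactly those in $\abspre{\bverti{0}}$ together with $\bverti{0}$, which is precisely the equivalence $\bvert\in\Scope{\avert}\Leftrightarrow\bverti{0}\in\Scope{\avert}$.

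I expect the genuine obstacle to be the $(\sslabs)$/$(\sslapp)$ step of the first direction: upgrading the set inclusion of binders delivered by (closed) to an \emph{initial-segment} relation between the prefix words. Plain inclusion does not by itself place the shorter word as a prefix rather than a scattered subword; it is exactly the strict nesting from Lemma~\ref{lem:lhotg}(\ref{lem:lhotg:item:ii}),(\ref{lem:lhotg:item:iii}), together with the observation that ``containing a fixed vertex'' is upward closed along the scope chain, that forces the prefix property. Everything else reduces to bookkeeping with the conditions of the two definitions and the basic facts collected in Lemmas~\ref{lem:lhotg} and~\ref{lem:laphotgs}.
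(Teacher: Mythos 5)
The paper states this proposition without proof (consistent with its extended-abstract character), so there is no in-paper argument to compare against; your direct verification of the defining conditions of each target class is the natural route, and it checks out. You correctly isolate the only non-trivial step, namely upgrading the inclusion $\binders{\bverti{k}}\setminus\setexp{\bverti{k}}\subseteq\binders{\bvert}$ obtained from (closed) to a \emph{prefix} relation between the enumerating words, and your upward-closedness argument along the scope chain settles it; the one micro-check worth adding there is that a binder $\cvert$ of $\bvert$ with $\Scope{\avert}\subseteq\Scope{\cvert}$ for some $\avert\in\binders{\bverti{k}}\setminus\setexp{\bverti{k}}$ cannot itself be $\bverti{k}$, since (nest) gives $\Scope{\bverti{k}}\subsetneqq\Scope{\avert}$, so that $\cvert$ really lands in $\binders{\bverti{k}}\setminus\setexp{\bverti{k}}$. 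The converse direction as you lay it out (using Lemma~\ref{lem:laphotgs}(\ref{lem:laphotg:item:i})--(\ref{lem:laphotg:item:iii}) to verify (root), (nest), (closed) and the two scope conditions) is likewise sound.
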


\begin{theorem}[correspondence of \lambdahotg{s} with \lambdaaphotg{s}]\label{thm:corr:lhotgs:laphotgs}
  For each $i\in\setexp{0,1}$ it holds that
  the mappings $\slhotgstolaphotgsi{i}$ in \eqref{eq1:prop:mappings:lhotgs:laphotgs}
  and $\slaphotgstolhotgsi{i}$ in \eqref{eq2:prop:mappings:lhotgs:laphotgs}
  are each other's inverse;
  thus they define a bijective correspondence between  
  the class of \lambdahotg{s} over $\siglambdai{i}$ 
  and the class of \lambdaaphotg{s} over~$\siglambdai{i}$.
  Furthermore, they preserve and reflect the sharing orders on $\classlhotgsi{i}$ and on $\classlaphotgsi{i}$:
  \begin{align*}
    (\forall \alhotgi{1},\alhotgi{2}\in\classlhotgsi{i}) \;\; \hspace*{-18ex} &&
      \alhotgi{1} \funbisim \alhotgi{2}
        \;\; &\Longleftrightarrow \;\;
      \lhotgstolaphotgsi{i}{\alhotgi{1}} \funbisim \lhotgstolaphotgsi{i}{\alhotgi{1}} 
    \\  
    (\forall \alaphotgi{1},\alaphotgi{2}\in\classlaphotgsi{i}) \;\; \hspace*{-18ex} &&  
       \laphotgstolhotgsi{i}{\alaphotgi{1}} \funbisim \laphotgstolhotgsi{i}{\alaphotgi{1}}
         \;\; &\Longleftrightarrow \;\;
       \alhotgi{1} \funbisim \alhotgi{2}
  \end{align*}
\end{theorem}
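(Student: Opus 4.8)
The plan is to reduce the whole statement to a single compatibility claim at the level of the common underlying term graph. Both $\slhotgstolaphotgsi{i}$ and $\slaphotgstolhotgsi{i}$ leave the four-tuple $\tuple{\verts,\svlab,\svargs,\sroot}$ untouched and only translate between the scope function $\sScope$ and the abstraction-prefix function $\sabspre$. Hence it suffices to establish \emph{(A)} that the two translations are mutually inverse (bijectivity), and \emph{(B)} that for any two \lambdahotg{s} $\alhotgi{1},\alhotgi{2}$ over $\siglambdai{i}$ and any vertex function $\sahom$ satisfying (labels), (arguments), (roots) of \eqref{eq:def:homom} for the underlying term graphs, the scope-compatibility \eqref{eq:def:homom:lambdahotg} holds iff the prefix-compatibility \eqref{eq:def:homom:aplambdahotg} holds for the images $\lhotgstolaphotgsi{i}{\alhotgi{1}},\lhotgstolaphotgsi{i}{\alhotgi{2}}$. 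Claim (B) yields preservation \emph{and} reflection of $\sfunbisim$ along $\slhotgstolaphotgsi{i}$ at once, and by (A) together with the surjectivity of $\slhotgstolaphotgsi{i}$ it transfers verbatim to $\slaphotgstolhotgsi{i}$. Throughout I use the bridging equivalence built into the two maps: for $\avert\in\vertsof{\sslabs}$ and $\bvert\in\verts$, $\avert$ \emph{occurs in} $\abspre{\bvert}$ iff $\bvert\in\Scopemin{\avert}$; and I use that every homomorphism of term graphs is surjective (its image contains $\sroot$ and is closed under $\stgsucc$, hence exhausts the root-reachable, thus all, vertices). For $i=0$ the conditions (scope)$_1$ and $(\snlvar)_1$ are vacuous, so the argument specializes automatically.

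For \emph{(A)} the two round-trips are checked pointwise. Starting from $\sScope$, the map $\slhotgstolaphotgsi{i}$ records in $\abspre{\bvert}$ exactly the binders of $\bvert$ other than $\bvert$; feeding this into $\slaphotgstolhotgsi{i}$ puts $\bvert$ into the reconstructed scope of $\avert$ iff $\avert$ occurs in $\abspre{\bvert}$ or $\bvert=\avert$, which by the bridging equivalence and (self) is exactly $\bvert\in\Scope{\avert}$, so $\sScope$ is recovered. For the other round-trip, starting from $\sabspre$, the set $\descsetexp{\avert}{\avert\text{ occurs in }\abspre{\bvert}}$ equals $\binders{\bvert}\setminus\setexp{\bvert}$ by the bridging equivalence and Lemma~\ref{lem:laphotgs}(\ref{lem:laphotg:item:iii}); it remains to see that $\slhotgstolaphotgsi{i}$ re-lists these binders in the \emph{same} order as in $\abspre{\bvert}$. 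This order agreement is the only non-formulaic point of (A): if $\avert$ precedes $\avert'$ in $\abspre{\bvert}$, then by Lemma~\ref{lem:laphotgs}(\ref{lem:laphotg:item:i}) the word $\stringcon{\abspre{\avert'}}{\avert'}$ is a prefix of $\abspre{\bvert}$ and $\avert$ occurs in $\abspre{\avert'}$, whence $\Scope{\avert'}\subsetneqq\Scope{\avert}$ for the reconstructed scopes; since $\slhotgstolaphotgsi{i}$ orders by strictly decreasing scope (Lemma~\ref{lem:lhotg}(\ref{lem:lhotg:item:iii})), the two orders coincide and $\sabspre$ is recovered.

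For \emph{(B)} the two \emph{easy} inclusions are immediate from the bridging equivalence applied on both sides: if $\sahom$ satisfies \eqref{eq:def:homom:lambdahotg}, then for $\bvert\in\Scopei{1}{\avert}$ with $\bvert\neq\avert$ the letter $\avert$ of $\absprei{1}{\bvert}$ is sent by $\sahom$ to a letter $\ahom{\avert}$ of $\absprei{2}{\ahom{\bvert}}$, so $\ahom{\avert}\in\binders{\ahom{\bvert}}$, giving $\ahomextext{\binders{\bvert}}\subseteq\binders{\ahom{\bvert}}$ (and dually for the prefix side). To upgrade this set-inclusion to the \emph{word} equation \eqref{eq:def:homom:aplambdahotg}, I argue by induction on $\length{\absprei{1}{\bvert}}$ using prefix-consistency (Lemma~\ref{lem:laphotgs}(\ref{lem:laphotg:item:i})) on both sides. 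Writing $\absprei{1}{\bvert}=\averti{0}\mcdots\averti{n}$ with innermost binder $\averti{n}$, the induction hypothesis for $\averti{n}$ gives $\ahom{\averti{0}}\mcdots\ahom{\averti{n-1}}=\absprei{2}{\ahom{\averti{n}}}$; since $\ahom{\averti{n}}\in\binders{\ahom{\bvert}}$, prefix-consistency in $\alhotgi{2}$ forces $\absprei{2}{\ahom{\bvert}}$ to begin with $\ahom{\averti{0}}\mcdots\ahom{\averti{n-1}}\,\ahom{\averti{n}}=\ahomext{\absprei{1}{\bvert}}$. Because $\absprei{2}{\ahom{\bvert}}$ has pairwise distinct letters, this \emph{also} delivers injectivity of $\sahom$ on $\binders{\bvert}\setminus\setexp{\bvert}$ for free, so the only thing still missing is that $\absprei{2}{\ahom{\bvert}}$ carries \emph{no further} letters, i.e.\ $\binders{\ahom{\bvert}}\subseteq\ahomextext{\binders{\bvert}}$.

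This reverse inclusion, reflection of binders, is the crux. The plan is: let $\bvertacc\in\binders{\ahom{\bvert}}$ with $\bvertacc\neq\ahom{\bvert}$, and pick an access path $\apath$ of $\bvert$ in $\alhotgi{1}$. By Lemma~\ref{lem:lhotg}(\ref{lem:lhotg:item:i}) the vertex $\bvertacc$ lies on every access path of $\ahom{\bvert}$, hence, since any root-to-$\ahom{\bvert}$ \emph{walk} reduces to an access path by cutting cycles without introducing vertices, $\bvertacc$ lies on the image walk $\sahom(\apath)$. Walking this image walk backwards from $\ahom{\bvert}\in\Scopemin{\bvertacc}$ and repeatedly applying (closed) in $\alhotgi{2}$ (the walk must leave $\Scopemin{\bvertacc}$ before the root, by (root), and can only do so at $\bvertacc$ itself) locates a vertex $\cvert$ on $\apath$ with $\ahom{\cvert}=\bvertacc$ whose successors on $\apath$ all map into $\Scopemin{\bvertacc}$. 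Applying \eqref{eq:def:homom:lambdahotg} at the abstraction $\cvert$ gives $\Scope{\bvertacc}=\ahomextext{\Scope{\cvert}}$, so $\ahom{\bvert}\in\Scope{\bvertacc}$ yields some $\avertacc\in\Scope{\cvert}$ with $\ahom{\avertacc}=\ahom{\bvert}$, i.e.\ $\cvert\in\binders{\avertacc}$. The main obstacle is precisely to pass from this preimage $\avertacc$ to $\bvert$ itself, i.e.\ to choose $\cvert$ so that $\bvert\in\Scope{\cvert}$ rather than merely some $\sahom$-fibre companion $\avertacc$ of $\bvert$; a collapsing homomorphism can a priori identify several abstractions, so pinning the fibre down needs the combined force of the access-path ordering of binders (Lemma~\ref{lem:lhotg}(\ref{lem:lhotg:item:i}),(\ref{lem:lhotg:item:iii})), prefix-consistency, and the arguments-condition of \eqref{eq:def:homom}, which already obstructs identifying two binders of one vertex whenever their bodies diverge. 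Once reflection is secured the induction of the previous paragraph closes, proving (B); together with (A) this establishes the bijective correspondence and the preservation and reflection of the sharing order in both directions.
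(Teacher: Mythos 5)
The paper states Theorem~\ref{thm:corr:lhotgs:laphotgs} without any proof (it is explicitly an extended abstract), so your proposal has to stand entirely on its own. Its architecture is sound and very likely the intended one: both maps fix the underlying term graph, so everything reduces to (A) the pointwise round-trip identities and (B) the equivalence, for a fixed vertex map $\sahom$ satisfying \eqref{eq:def:homom}, of the scope condition \eqref{eq:def:homom:lambdahotg} with the prefix condition \eqref{eq:def:homom:aplambdahotg}. Part (A), including the order-agreement argument via Lemma~\ref{lem:laphotgs}(\ref{lem:laphotg:item:i}) and Lemma~\ref{lem:lhotg}(\ref{lem:lhotg:item:iii}), is complete, as are the two `easy' inclusions of (B) and the observation that homomorphisms of root-connected term graphs are surjective.

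The gap is the step you yourself call `the crux', and it remains genuinely open in your text: you never prove that a binder $\bvertacc$ of $\ahom{\bvert}$ is the image of a binder of $\bvert$ itself rather than of some other vertex in the fibre of $\ahom{\bvert}$. Your backward walk along the image of $\apath$ using (closed) and (root) correctly produces a $\cvert$ on $\apath$ with $\ahom{\cvert}=\bvertacc$, but the passage from ``$\ahom{\bvert}\in\Scopei{2}{\bvertacc}=\funap{\sahomextext}{\Scopei{1}{\cvert}}$, hence some $\avertacc\in\Scopei{1}{\cvert}$ with $\ahom{\avertacc}=\ahom{\bvert}$'' to ``$\bvert\in\Scopei{1}{\cvert}$'' is exactly what is missing, and invoking ``the combined force of'' three lemmas is not an argument. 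The same fibre problem recurs, unflagged, in the direction \eqref{eq:def:homom:aplambdahotg}$\Rightarrow$\eqref{eq:def:homom:lambdahotg}: there $\bvertacc\in\Scopei{2}{\ahom{\avert}}$ gives you, via surjectivity and the prefix condition, only some abstraction vertex $\cvert$ with $\ahom{\cvert}=\ahom{\avert}$ and a preimage of $\bvertacc$ lying in $\Scopei{1}{\cvert}$, not in $\Scopei{1}{\avert}$; so the reverse inclusion of \eqref{eq:def:homom:lambdahotg} needs its own fibre-pinning statement, e.g.\ that $\ahom{\averti{1}}=\ahom{\averti{2}}$ implies $\funap{\sahomextext}{\Scopei{1}{\averti{1}}}=\funap{\sahomextext}{\Scopei{1}{\averti{2}}}$. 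Your prefix-extension induction also tacitly assumes $\ahom{\averti{n}}\neq\ahom{\bvert}$ (otherwise $\ahom{\averti{n}}$ is not a letter of $\absprei{2}{\ahom{\bvert}}$ at all, by Lemma~\ref{lem:laphotgs}(\ref{lem:laphotg:item:iii})), which is another instance of the same difficulty. A statement of precisely this ``identified vertices have $\sahom$-identified prefixes'' flavour is what the paper does prove, for \lambdatg{s}, as Lemma~\ref{lem:preserve:lambdatgs}, and there it requires an induction on a reachability-distance parameter; some analogous explicit induction, not a list of lemma names, is what your proof still needs before (B) closes.
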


\begin{example}\label{ex:corr:lhotgs:laphotgs}\normalfont
The \lambdahotg{s} in Fig.~\ref{fig:lambdahotg} correspond to the \lambdaaphotg{s} in Fig.~\ref{fig:lambdaaphotg}
via the mappings $\slhotgstolaphotgsi{i}$ and $\slaphotgstolhotgsi{i}$ as follows: 
$
~~~~~~ \lhotgstolaphotgsi{i}{\alhotgi{0}} = \alaphotgi{0}' \, ,
~~~~~~ \lhotgstolaphotgsi{i}{\alhotgi{1}} = \alaphotgi{1}' \, ,
~~~~~~ \laphotgstolhotgsi{i}{\alaphotgi{0}} = \alhotgi{0}' \, ,
~~~~~~ \lhotgstolaphotgsi{i}{\alaphotgi{1}} = \alhotgi{1}' \, 
$.
\end{example}


For \lambdahotg{s}\ over the signature $\siglambdai{0}$ (that is,
without variable back-links) essential binding information is lost when looking
only at the underlying term graph, to the extent that \lambdaterms\ cannot be
unambiguously represented anymore. For instance the \lambdahotg{s} that
represent the \lambdaterms\ $\labs{\avar\bvar}{\lapp{\avar}{\bvar}}$ and
$\labs{\avar\bvar}{\lapp{\avar}{\avar}}$ have the same
underlying term graph. The same holds for \lambdaaphotg{s}.


This is not the case for \lambdahotg{s} (\lambdaaphotg{s}) over
$\siglambdai{1}$, because the abstraction vertex to which a variable-occurrence
vertex belongs is uniquely identified by the back-link. 
This is the reason why the following notion is only defined for the signature
$\siglambdai{1}$.

\begin{definition}[\lambdatg\ over $\siglambdai{1}$]\label{def:classltgs}\normalfont
  A term graph $\atg$ over $\siglambdai{1}$ is called a \emph{\lambdatg} over $\siglambdai{1}$
  if $\atg$ admits a correct ab\-strac\-tion-pre\-fix function. 
  By $\classltgsi{1}$ we denote the class of \lambdatg{s} over $\siglambdai{1}$.
\end{definition}
 
In the rest of this section we examine, and then dismiss, a naive approach to
implementing functional bisimulation on \lambdahotg{s} or \lambdaaphotg{s},
which is to apply the homomorphism on the underlying term graph, hoping that
this application would simply extend to the \lambdahotg\ (\lambdaaphotg)
without further ado. We demonstrate that this approach fails, concluding that a
faithful first-order implementation of functional bisimulation must not be
negligent of the scoping information.

\begin{definition}[scope- and \absprefix-forgetful mappings]\label{def:forgetful}\normalfont
  Let $i\in\setexp{0,1}$. 
  The \emph{scope-forgetful mapping}~$\sscopeforgetfullhotgsi{i}$ 
  and the \emph{\absprefix-for\-get\-ful mapping}~$\sabspreforgetfullaphotgsi{1}$
  map \lambdahotg{s} in $\classltgsi{i}$, and respectively, \lambdahotg{s} in $\classltgsi{i}$
  to their underlying term graphs: 
  \begin{center}
    $\sscopeforgetfullhotgsi{i} \funin \classlhotgsi{i} \to \classtgssiglambdai{i}\, , \;\; 
       \tuple{\verts,\svlab,\svargs,\sroot,\sScope} \mapsto \tuple{\verts,\svlab,\svargs,\sroot}$
    \\[0.5ex]
    $\sabspreforgetfullaphotgsi{i} \funin \classlaphotgsi{i} \to \classtgssiglambdai{i}\, , \;\; 
       \tuple{\verts,\svlab,\svargs,\sroot,\sabspre} \mapsto \tuple{\verts,\svlab,\svargs,\sroot}$
  \end{center}  
\end{definition}

\begin{definition}\normalfont
  Let $\alhotg$ be a \lambdahotg\ over $\siglambdai{i}$ for $i\in\setexp{0,1}$ 
  with underlying term graph $\scopeforgetfullhotgsi{i}{\alhotg}$.
  And suppose that $\scopeforgetfullhotgsi{i}{\alhotg} \funbisimi{\sahom} \atgacc$ holds
  for a term graph $\atgacc$ over $\siglambdai{i}$
  and a functional bisimulation $\sahom$.
  We say that $\sahom$ \emph{extends to a functional bisimulation on} $\alhotg$
  if $\atgacc$ can be endowed with a scope function to obtain a \lambdahotg\ $\alhotgacc$ 
  with $\scopeforgetfullhotgsi{i}{\alhotgacc} = \atgacc$ 
  and such that it holds $\alhotg \funbisimi{\sahom} \alhotgacc$. 
  
  We say that a class $\aclass$ of \lambdahotg{s} is 
  \emph{closed under functional bisimulations on the underlying term graphs}
  if for every $\alhotg\in\aclass$
    and for every homomorphism $\sahom$ on the term graph underlying $\alhotg$ 
    that witnesses $\scopeforgetfullhotgsi{i}{\alhotg} \funbisimi{\sahom} \atgacc$ for a term graph $\atgacc$
    there exists $\alhotgacc\in\aclass$ with $\scopeforgetfullhotgsi{i}{\alhotgacc} = \atgacc$
    such that $\alhotg \funbisimi{\sahom} \alhotgacc$ holds, that is,
    $\sahom$ is also a homorphism between $\alhotg$ and $\alhotgacc$.
    
  
  These notions are also extended, by analogous stipulations,
  to \lambdaaphotg{s} over $\siglambdai{i}$ for $i\in\setexp{0,1}$ and their underlying term graphs. 
\end{definition}

%
%

\begin{proposition}\label{prop:no:extension:funbisim:siglambda1}
  Neither the class $\classlhotgsi{1}$ of \lambdahotg{s} nor the class $\classlaphotgsi{1}$ of \lambdaaphotg{s} 
  is closed under functional bisimulations on the underlying term graphs.

\end{proposition}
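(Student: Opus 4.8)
The plan is to refute closure by a single counterexample, exhibiting a witness graph together with a functional bisimulation on its underlying first-order term graph that provably fails to lift. Because the mapping $\slhotgstolaphotgsi{1}$ of Proposition~\ref{prop:mappings:lhotgs:laphotgs} transports a \lambdahotg\ into the corresponding \lambdaaphotg\ while leaving the underlying term graph — and hence the witnessing first-order homomorphism — untouched, it suffices to build the example once, in $\classlhotgsi{1}$, and then read off the $\classlaphotgsi{1}$ statement from the induced abstraction prefixes.

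Concretely I would take the closed term $M \defdby \labs{z}{\labs{w}{w}}$ and let $\alhotg$ be the \lambdahotg\ over $\siglambdai{1}$ whose underlying term graph represents $\lapp{M}{M}$ with the two occurrences of $M$ kept as two disjoint, isomorphic subgraphs: a root application vertex $\sroot$ with successors the two outer abstraction vertices $\averti{1},\averti{2}$, each $\averti{j}$ having as body an inner abstraction vertex $\bverti{j}$ (the copy of $\labs{w}{w}$), whose body is a variable vertex $\cverti{j}$ back-linked to $\bverti{j}$. The crucial move is to fix the scope function \emph{non-eagerly on one side only}: set $\Scope{\averti{1}} = \setexp{\averti{1},\bverti{1},\cverti{1}}$ but close $\averti{2}$ eagerly, so that $\Scope{\averti{2}} = \setexp{\averti{2}}$ and the inner abstraction $\bverti{2}$ together with $\cverti{2}$ stays outside $\Scope{\averti{2}}$. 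I would then verify that $\alhotg$ meets all the conditions $(\text{root})$, $(\text{self})$, $(\text{nest})$, $(\text{closed})$, $(\text{scope})_1$ of Definition~\ref{def:lambdahotg}; the only nonroutine points concern $\averti{1}$, where $(\text{nest})$ holds because $\bverti{1}$ is closed, so $\Scope{\bverti{1}} \subseteq \Scopemin{\averti{1}}$, and $(\text{closed})$ holds because the back-link $\cverti{1} \tgsucci{0} \bverti{1}$ has its target in $\Scopemin{\averti{1}}$. Thus $\alhotg \in \classlhotgsi{1}$.

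Next I would define the functional bisimulation $\sahom$ on $\scopeforgetfullhotgsi{1}{\alhotg}$ that collapses the two copies, identifying $\averti{1}$ with $\averti{2}$, $\bverti{1}$ with $\bverti{2}$, and $\cverti{1}$ with $\cverti{2}$; this is a well-defined homomorphism because the two copies are isomorphic as first-order $\siglambdai{1}$-term-graphs, back-links included, and $\ahom{\averti{1}}\neq\ahom{\bverti{1}}$ since the outer and inner abstractions are not bisimilar. The failure of lifting is then forced by the $(\text{scope functions})$ clause~\eqref{eq:def:homom:lambdahotg}: any \lambdahotg\ $\alhotgacc$ with $\alhotg \funbisimi{\sahom} \alhotgacc$ would have to satisfy both $\Scope{\ahom{\averti{1}}} = \funap{\sahomextext}{\Scope{\averti{1}}}$ and $\Scope{\ahom{\averti{2}}} = \funap{\sahomextext}{\Scope{\averti{2}}}$; but $\ahom{\averti{1}} = \ahom{\averti{2}}$, whereas the right-hand sides differ, since $\funap{\sahomextext}{\Scope{\averti{1}}}$ contains $\ahom{\bverti{1}} = \ahom{\bverti{2}}$ while $\funap{\sahomextext}{\Scope{\averti{2}}} = \setexp{\ahom{\averti{2}}}$ does not. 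This contradiction shows $\sahom$ does not extend, so $\classlhotgsi{1}$ is not closed. Applying $\slhotgstolaphotgsi{1}$, the same $\alhotg$ and $\sahom$ settle $\classlaphotgsi{1}$: the conflict reappears at the inner abstraction, where $\abspre{\bverti{1}} = \averti{1}$ but $\abspre{\bverti{2}} = \emptyword$, so \eqref{eq:def:homom:aplambdahotg} demands that $\abspre{\ahom{\bverti{1}}}$ be simultaneously the length-one word $\ahom{\averti{1}}$ and the empty word — impossible.

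The main obstacle I anticipate is not the lifting argument, which is a short diagonal-style contradiction, but the bookkeeping needed to certify that $\alhotg$ really is a \lambdahotg: one must confirm that the asymmetric, non-eager choice of scopes violates none of the defining conditions, and in particular that $(\text{closed})$ does not silently force $\bverti{2}$ back into $\Scope{\averti{2}}$. It is precisely the availability of non-eager scope closure that makes the counterexample possible — under eager closure the two isomorphic copies would receive scopes with equal $\sahom$-images and $\sahom$ would lift — which is why this obstruction disappears once attention is restricted to the eager subclasses studied in Section~\ref{sec:closed}.
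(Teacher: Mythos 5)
Your proposal is correct and follows essentially the same route as the paper: both arguments reduce the $\classlaphotgsi{1}$ case to the $\classlhotgsi{1}$ case via the correspondence of Theorem~\ref{thm:corr:lhotgs:laphotgs} and then exhibit a concrete \lambdahotg\ together with a homomorphism on its underlying term graph that provably cannot satisfy the (scope functions) condition for any candidate target. The paper uses a different (figure-based) witness graph, but the mechanism — a first-order homomorphism identifying two abstraction vertices whose scopes have distinct $\sahomextext$-images — is the same, and your verification of the conditions of Definition~\ref{def:lambdahotg} for the asymmetric scope assignment is sound.
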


\begin{proof}
  In view of Thm.~\ref{thm:corr:lhotgs:laphotgs} it suffices to show the statement for $\classlhotgsi{1}$.
  We show that not every functional bisimulation on the term graph underlying 
  a \lambdahotg\ over $\siglambdai{1}$ 
  extends to a functional bisimulation on the higher-order term graphs.
  Consider the following term graphs $\atgi{0}$ and $\atgi{1}$ over $\siglambdai{1}\hspace*{1pt}$
  (at first, please ignore the scope shading):
  \graphs
  {
          \vcentered{$\atgi{1}$:~~} \vcentered{\includegraphics[scale=0.80]{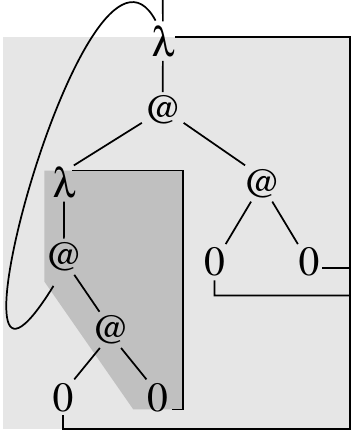}}\hfill
          \graph{\atgi{0}}{no_extension_funbisim_siglambda1_g0}
  }%
  There is an obvious homomorphism $\sahom$ that witnesses 
  \mbox{$\atgi{1} \funbisimi{\sahom} \atgi{0}$}.
  Both of these term graphs extend to \lambdahotg{s} by suitable scope functions
  (one possibility per term graph is indicated by the scope shadings above; $\atgi{1}$ actually admits two possibilities). 
  However, $\sahom$ does not extend to any of the \lambdahotg{s} $\alhotgi{1}$ and $\alhotgi{0}$ 
  that extend $\atgi{1}$ and $\atgi{0}$, respectively. 
\end{proof}

The next proposition is merely a reformulation of Prop.~\ref{prop:no:extension:funbisim:siglambda1}.

\begin{proposition}\label{prop:forgetful}
  The scope-forgetful mapping $\sscopeforgetfullhotgsi{1}$ on $\classlhotgsi{1}$ 
  and the \absprefix-for\-get\-ful mapping $\sabspreforgetfullaphotgsi{1}$ on $\classlaphotgsi{1}$ 
  preserve, but do not reflect, the sharing orders on these classes. In particular:
  \begin{align*}
    (\forall \alaphotgi{1},\alaphotgi{2}\in\classlaphotgsi{1}) \;\; \hspace*{-14ex} &&
       \alaphotgi{1} \funbisim \alaphotgi{2}
         \;\: & \Longrightarrow\;\:
       \abspreforgetfullaphotgsi{1}{\alaphotgi{1}} \funbisim \abspreforgetfullaphotgsi{1}{\alaphotgi{2}}
    \\[-0.25ex]
    (\exists \alaphotgi{1},\alaphotgi{2}\in\classlaphotgsi{1}) \;\; \hspace*{-14ex} &&
       \alaphotgi{1} \not\funbisim \alaphotgi{2}
         \;\: & \logand \;\:
       \abspreforgetfullaphotgsi{1}{\alaphotgi{1}} \funbisim \abspreforgetfullaphotgsi{1}{\alaphotgi{2}}
  \end{align*}
  
\end{proposition}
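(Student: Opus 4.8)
The plan is to derive this as a corollary of Prop.~\ref{prop:no:extension:funbisim:siglambda1}, using in addition the elementary fact that a homomorphism between term graphs, when it exists at all, is uniquely determined. I would prove preservation and failure of reflection separately, and, since Thm.~\ref{thm:corr:lhotgs:laphotgs} provides a bijection between $\classlhotgsi{1}$ and $\classlaphotgsi{1}$ that preserves and reflects the sharing order and leaves underlying term graphs unchanged (Prop.~\ref{prop:mappings:lhotgs:laphotgs}), I would carry out the substantive argument for $\sscopeforgetfullhotgsi{1}$ on $\classlhotgsi{1}$ and then transport it to $\sabspreforgetfullaphotgsi{1}$. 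The two forgetful mappings commute with the bijection, in the sense that $\abspreforgetfullaphotgsi{1}{\lhotgstolaphotgsi{1}{\alhotg}} = \scopeforgetfullhotgsi{1}{\alhotg}$, since each merely extracts the underlying term graph and $\slhotgstolaphotgsi{1}$ leaves it unchanged.

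Preservation is immediate from the definitions. A homomorphism $\sahom$ witnessing $\alhotgi{1} \funbisim \alhotgi{2}$ satisfies the conditions (labels), (arguments), (roots) of~\eqref{eq:def:homom} by Def.~\ref{def:homom:lambdahotg}, and these are precisely the conditions required of a homomorphism between the underlying term graphs; discarding the scope-function condition~\eqref{eq:def:homom:lambdahotg} leaves $\sahom$ a valid term-graph homomorphism. Hence $\scopeforgetfullhotgsi{1}{\alhotgi{1}} \funbisimi{\sahom} \scopeforgetfullhotgsi{1}{\alhotgi{2}}$, and the analogous implication for $\sabspreforgetfullaphotgsi{1}$ on $\classlaphotgsi{1}$ follows either identically or by transport along the bijection.

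For the failure of reflection I would reuse the witnessing pair from the proof of Prop.~\ref{prop:no:extension:funbisim:siglambda1}: the term graphs $\atgi{1}$ and $\atgi{0}$ over $\siglambdai{1}$ together with the homomorphism $\sahom$ witnessing $\atgi{1} \funbisimi{\sahom} \atgi{0}$, each equipped with the scope function indicated there to yield \lambdahotg{s} $\alhotgi{1}$ and $\alhotgi{0}$. Their underlying term graphs are $\atgi{1}$ and $\atgi{0}$, so $\scopeforgetfullhotgsi{1}{\alhotgi{1}} \funbisim \scopeforgetfullhotgsi{1}{\alhotgi{0}}$ holds, supplying the right-hand conjunct. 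It remains to show $\alhotgi{1} \not\funbisim \alhotgi{0}$. Here I would invoke uniqueness of term-graph homomorphisms: the root condition fixes the image of the root and the argument condition propagates it along successor edges, so reachability of every vertex from the root forces the entire map; thus $\sahom$ is the \emph{only} term-graph homomorphism $\atgi{1} \to \atgi{0}$. Any \lambdahotg\ homomorphism $\alhotgi{1} \funbisim \alhotgi{0}$ would, after forgetting scopes, be such a term-graph homomorphism and hence equal $\sahom$; but then $\sahom$ would satisfy the scope-function condition relative to $\alhotgi{0}$, i.e.\ $\sahom$ would extend to a functional bisimulation on $\alhotgi{1}$, contradicting Prop.~\ref{prop:no:extension:funbisim:siglambda1}. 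Therefore $\alhotgi{1} \not\funbisim \alhotgi{0}$, and transporting along the reflecting bijection of Thm.~\ref{thm:corr:lhotgs:laphotgs} yields a pair in $\classlaphotgsi{1}$ witnessing the existential claim.

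The main obstacle I anticipate is the last step, namely justifying that any \lambdahotg\ homomorphism must coincide with $\sahom$ and that Prop.~\ref{prop:no:extension:funbisim:siglambda1} really excludes $\sahom$ being a homomorphism into the \emph{particular} $\alhotgi{0}$ I have fixed. The uniqueness fact should be stated and applied explicitly rather than taken for granted; and since ``$\sahom$ does not extend'' quantifies over all scope functions on the target $\atgi{0}$, I would make that quantifier explicit to confirm it covers the scope assignment of $\alhotgi{0}$, so that no gap is left in reducing non-reflection to non-extension.
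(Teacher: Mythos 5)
Your proposal is correct and matches the paper's approach: the paper offers no separate proof, stating only that this proposition ``is merely a reformulation of Prop.~\ref{prop:no:extension:funbisim:siglambda1}'', and your reduction via that proposition together with the counterexample pair $\atgi{1} \funbisimi{\sahom} \atgi{0}$ is exactly the intended argument. You in fact supply the one step the paper leaves implicit --- that the root, argument, and reachability conditions force uniqueness of term-graph homomorphisms, so that non-extension of the specific $\sahom$ really does yield $\alhotgi{1} \not\funbisim \alhotgi{0}$ --- which is a worthwhile addition rather than a deviation.
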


As a consequence of this proposition it is not possible to faithfully implement functional
bisimulation on \lambdahotg{s} and \lambdaaphotg{s} by only considering the underlying term graphs,
and in doing so neglecting%
         \footnote{
           In the case of $\siglambdai{1}$ implicit information about possible scopes is being kept,
           due to the presence of back-links from variable occurrence vertices to abstraction vertices.
           But this is not enough for reflecting the sharing order under the forgetful mappings.}
the scoping information          
from the scope function, or respectively, from the abstraction prefix function.
In order to yet be able to implement functional bisimulation of
\lambdahotg{s} and \lambdaaphotg{s} in a first-order setting, in the next
section we introduce a class of first-order term graphs that accounts for
scoping by means of scope delimiter vertices.


\section{$\lambda$-Term-Graphs with Scope Delimiters}
  \label{sec:ltgs}

For all $i\in\setexp{0,1}$ and $j\in\setexp{1,2}$ we define the extensions
$\siglambdaij{i}{j} \defdby \siglambda \cup \setexp{\snlvar, \snlvarsucc}$ 
of the signature $\siglambda$
where $\arity{\snlvar} = i$ and $\arity{\snlvarsucc} = j\,$,
and we denote the class of term graphs over signature $\siglambdaij{i}{j}$ by $\classtgssiglambdaij{i}{j}$.

Let $\atg$ be a term graph with vertex set $\verts$ over a signature extending $\siglambdaij{i}{j}$ 
for $i\in\setexp{0,1}$ and $j\in\setexp{1,2}$.
We denote by $\vertsof{\snlvarsucc}$ the subset of $\verts$ consisting of all vertices
with label $\snlvarsucc$, which are called the \emph{delimiter vertices} of $\atg$.
Delimiter vertices signify the end of an `extended scope' \cite{grab:roch:2012}. 
They are analogous to occurrences of function symbols $\snlvarsucc$
in representations of \lambdaterms\ in a nameless 
de-Bruijn index \cite{de-bruijn:72} form in which Dedekind numerals
based on $\snlvar$ and the successor function symbol $\snlvarsucc$ are used 
(this form is due to Hendriks and van~Oostrom, see also \cite{oost:looi:zwit:2004},
 and is related to their end-of-scope symbol $\adbmal$ \cite{hend:oost:2003}).

Analogously as for the classes $\classlhotgsi{i}$ and $\classlaphotgsi{i}$,
the index $i$ will determine whether in correctly formed \lambdatg{s} (defined below)
variable vertices have back-links to the corresponding abstraction. 
Here additionally scope-delimiter vertices have such back-links (if $j=2$) or not (if $j=1$). 

\begin{definition}[correct abstraction-prefix function for $\siglambdaij{i}{j}$-term-graphs]%
    \label{def:abspre:function:siglambdaij}\normalfont
  Let $\atg = \tuple{\verts,\svlab,\svargs,\sroot}$ be a $\siglambdaij{i}{j}$\nb-term-graph
  for an $i\in\setexp{0,1}$ and an $j\in\setexp{1,2}$.
  
  A function $\sabspre \funin \verts \to \verts^*$
  from vertices of $\atg$ to words of vertices 
  is called an \emph{abstraction-prefix function} for $\atg$.
  Such a function is called \emph{correct} if
  for all $\bvert,\bverti{0},\bverti{1}\in\verts$ and $k\in\setexp{0,1}$ it holds:
  \begin{align*}
      \;\; & \Rightarrow \;\;
    \abspre{\sroot} = \emptyword  
    &
    (\text{root})
    \displaybreak[0]\\
    \bvert\in\vertsof{\sslabs} 
      \;\logand\;
    \bvert \tgsucci{0} \bvertbp{0}{}
      \;\; & \Rightarrow \;\;
    \abspre{\bvertbp{0}{}} = \abspre{\bvert} \bvert 
    & 
    (\sslabs)
  \displaybreak[0]\\
    \bvert\in\vertsof{\sslapp}
      \;\logand\;
    \bvert \tgsucci{k} \bvertbp{k}{}
      \;\; & \Rightarrow \;\;
    \abspre{\bvertbp{k}{}} = \abspre{\bvert} 
    & 
    (\sslapp)
  \displaybreak[0]\\  
    \bvert\in\vertsof{\snlvar}
      \phantom{\;\;\logand\;\bvert \tgsucci{0} \bvertacc}
        \;\; & \Rightarrow \;\;
     \abspre{\bvert} \neq \emptyword   
    & 
    (\snlvar)_0
  \displaybreak[0]\\
    \bvert\in\vertsof{\snlvar}
      \;\logand\;
    \bvert \tgsucci{0} \bvertbp{0}{}
      \;\; & \Rightarrow \;\;
    \begin{aligned}[c]  
      & 
        \bvertbp{0}{}\in\vertsof{\sslabs}
      \;\logand\;
      \abspre{\bvertbp{0}{}} \bvertbp{0}{} = \abspre{\bvert} 
    \end{aligned}
    & 
    (\snlvar)_1 
    \displaybreak[0]\\
    \bvert\in\vertsof{\snlvarsucc}
      \;\logand\;
    \bvert \tgsucci{0} \bvertbp{0}{}
        \;\; & \Rightarrow \;\;
    \abspre{\bvertbp{0}{}} \avert
      =
    \abspre{\bvert}
      \;\;   
    \text{for some $\avert\in\verts$} \hspace*{1ex}
    &
    (\snlvarsucc)_1 
  \displaybreak[0]\\
    \bvert\in\vertsof{\snlvarsucc}
      \;\logand\;
    \bvert \tgsucci{1} \bvertbp{1}{}
        \;\; & \Rightarrow \;\;
    \bvertbp{1}{} \in \vertsof{\sslabs}
      \;\logand\;        
    \abspre{\bvertbp{1}{}} \bvertbp{1}{}
      =
    \abspre{\bvert}
    &
    (\snlvarsucc)_2  
  \end{align*}%
  Note that analogously as in Def.~\ref{def:lambdahotg} and in Def.~\ref{def:aplambdahotg},
  if $i=0$, then $(\snlvar)_1$ is trivially true and hence superfluous,
  and if $i=1$, then $(\snlvar)_0$ is redundant, because it follows from $(\snlvar)_1$ in this case. 
  Additionally, if $j=1$, then $(\snlvarsucc)_2$ is trivially true and therefore superfluous.         
\end{definition}

\begin{definition}[{\lambdatg\ over $\siglambdaij{i}{j}$}]%
    \label{def:lambdatg:siglambdaij}\normalfont
  Let $i\in\setexp{0,1}$ and $j\in\setexp{1,2}$.
  A \emph{\lambdatg\ (with scope-de\-li\-mi\-ters)} over $\siglambdaij{i}{j}$
  is a $\siglambdaij{i}{j}$\nb-term-graph
  that admits a correct ab\-strac\-tion-pre\-fix function. 
  The class of \lambdatg{s} over $\siglambdaij{i}{j}$ is denoted by $\classltgsij{i}{j}$.
\end{definition}

See Fig.~\ref{fig:lambdatg} for examples,
that, as we will see, correspond to the ho-term-graphs in Fig.~\ref{fig:lambdahotg} and in Fig.~\ref{fig:lambdaaphotg}.

\begin{figure}[t]
\graphs{
  \graph{\atgi{0}}{running_snodes_eager}
  \graph{\atgi{1}}{running_snodes_non_eager}
}
\vspace*{-2ex}
\caption{\label{fig:lambdatg}%
The \protect\lambdatg{s} corresponding to the \protect\lambdaaphotg{s} from
Fig~\ref{fig:lambdaaphotg} and the \protect\lambdahotg{s} from Fig~\ref{fig:lambdahotg}.
A precise formulation of this correspondence is given in
Example~\ref{ex:corr:laphotgs:ltgs}.
}
\end{figure}

\begin{lemma}\label{lem:ltgs}
  Let $i\in\setexp{0,1}$ and $j\in\setexp{1,2}$,
  and let $\altg = \tuple{\verts,\svlab,\svargs,\sroot}$ be a \lambdatg\ over $\siglambdaij{i}{j}$. 
  Then the statements \eqref{lem:laphotg:item:i}--\eqref{lem:laphotg:item:iii} in Lemma~\ref{lem:laphotgs} hold,
  and additionally: 
  \begin{enumerate}[(i)]\setcounter{enumi}{3}\itemizeprefs
    \item\label{lem:ltg:item:iv}
      Access paths may end in vertices in $\vertsof{\snlvar}$,
      but only pass through vertices in $\vertsof{\sslabs} \cup \vertsof{\sslapp} \cup \vertsof{\snlvarsucc}$,
      and depart from vertices in $\vertsof{\snlvarsucc}$ only via indexed edges $\stgsuccis{0}{\snlvarsucc}$.
    \item\label{lem:ltg:item:v}
      There exists precisely one correct ab\-strac\-tion-pre\-fix function on $\altg$.
  \end{enumerate}
\end{lemma}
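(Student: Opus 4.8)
The plan is to obtain items (i)--(iii) by adapting the proof of Lemma~\ref{lem:laphotgs}, and then to derive the two new items (iv) and (v) from them and from the correctness conditions. The organising idea is again to track how the abstraction prefix evolves while walking along an access path $\apath$ of a vertex $\bvert$: starting from $\abspre{\sroot} = \emptyword$ by (root), each traversed edge either \emph{appends} the vertex just left (an abstraction step, by $(\sslabs)$), \emph{leaves the prefix unchanged} (an application step, by $(\sslapp)$), or \emph{deletes the last letter} (a delimiter step via index $0$ by $(\snlvarsucc)_1$, and likewise a variable step or an index-$1$ delimiter step by $(\snlvar)_1$ and $(\snlvarsucc)_2$). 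The decisive observation is that the prefix is \emph{lengthened only at abstraction steps}, and then only by the abstraction vertex just traversed. Hence, if $\avert$ occurs in $\abspre{\bvert}$, it must have been appended when $\apath$ traversed $\avert$ itself; so $\avert \in \vertsof{\sslabs}$, it is visited on $\apath$, and $\apath$ must continue past it (otherwise the append, which occurs only when leaving $\avert$, could not have happened), whence $\bvert \neq \avert$. Since $\apath$ visits $\avert$ only once, $\avert$ occurs only once in $\abspre{\bvert}$, and $\stringcon{\abspre{\avert}}{\avert} \prele \abspre{\bvert}$. This is item (i); items (ii) and (iii) follow from it exactly as in Lemma~\ref{lem:laphotgs}.

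For (iv) I would argue that a variable vertex cannot be departed from on an access path. In the case $i = 1$, if $\bvert \in \vertsof{\snlvar}$ has successor $\bverti{0}$, then $(\snlvar)_1$ gives $\abspre{\bverti{0}}\bverti{0} = \abspre{\bvert}$, so $\bverti{0}$ occurs in $\abspre{\bvert}$; by (i) every access path of $\bvert$ passes through $\bverti{0}$, so extending such a path to $\bverti{0}$ would revisit it and is no longer an access path. The same reasoning, now via $(\snlvarsucc)_2$, shows that an access path cannot leave a delimiter vertex along its index-$1$ edge, because that successor is an abstraction vertex occurring in the prefix. As the only other outgoing edge of a delimiter vertex is the index-$0$ edge, departures from $\vertsof{\snlvarsucc}$ happen only via $\stgsuccis{0}{\snlvarsucc}$, and the interior vertices of any access path lie in $\vertsof{\sslabs} \cup \vertsof{\sslapp} \cup \vertsof{\snlvarsucc}$. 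This is (iv).

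For (v), existence is immediate from Definition~\ref{def:lambdatg:siglambdaij} (a \lambdatg\ admits a correct abstraction-prefix function), so only uniqueness is at stake. Let $\sabsprei{1}$ and $\sabsprei{2}$ be correct abstraction-prefix functions for $\altg$; I would show $\absprei{1}{\bvert} = \absprei{2}{\bvert}$ for every $\bvert$ by induction on the length of an access path of $\bvert$. For the root both values are $\emptyword$ by (root). For the step, let $\cvert$ be the predecessor of $\bvert$ on a chosen access path; its initial segment is an access path of $\cvert$, so by induction $\absprei{1}{\cvert} = \absprei{2}{\cvert}$, and by (iv) $\cvert \in \vertsof{\sslabs} \cup \vertsof{\sslapp} \cup \vertsof{\snlvarsucc}$ with the departure via index $0$ in the delimiter case. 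In each of these three cases the applicable condition ($(\sslabs)$, $(\sslapp)$, or $(\snlvarsucc)_1$) pins down $\absprei{k}{\bvert}$, for each $k \in \setexp{1,2}$, as one fixed operation applied to $\absprei{k}{\cvert}$ --- append $\cvert$, keep unchanged, or delete the last letter --- so the two functions agree at $\bvert$.

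The step I expect to be the main obstacle is the apparent circularity between (i) and (iv): item (iv) is proved from (i), while a careless rendering of the prefix-evolution argument for (i) seems to presuppose that access paths never traverse variable vertices or index-$1$ delimiter edges, i.e.\ presupposes (iv). The way around this is exactly the emphasised remark above --- the proof of (i) needs \emph{only} that the prefix is lengthened at abstraction steps and never elsewhere, a fact read off directly from the correctness conditions and entirely independent of what the shortening steps do. Once (i) is established on its own footing, (iv) and then (v) follow without friction. A small point still worth checking in (v) is that the ``delete the last letter'' operation of $(\snlvarsucc)_1$ is well defined, which holds because the prefix at a delimiter vertex is necessarily nonempty (it ends in the letter that gets deleted).
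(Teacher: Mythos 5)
Your proposal is correct and follows essentially the same route as the paper: items (i)--(iv) by the prefix-evolution argument along access paths carried over from Lemma~\ref{lem:laphotgs} (with the extra delimiter cases handled via $(\snlvarsucc)_1$ and $(\snlvarsucc)_2$), and item (v) by observing that the conditions $(\sslabs)$, $(\sslapp)$, $(\snlvarsucc)_1$ deterministically propagate the prefix value along any access path from the root. Your explicit resolution of the apparent circularity between (i) and (iv), and the well-definedness check for the delete-last-letter step, are points the paper leaves implicit but are handled correctly here.
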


\begin{proof}
  That also here statements (\ref{lem:laphotg:item:i})--(\ref{lem:laphotg:item:iii}) in Lemma~\ref{lem:laphotgs} hold,
  and that statement~(\ref{lem:ltg:item:iv}) holds,
  can be shown analogously as in the proof of the respective items of Lemma~\ref{lem:laphotgs}. 
  For (\ref{lem:ltg:item:v}) it suffices to observe
  that if $\sabspre$ is a correct \absprefix\ function for $\altg$,
  then, for all $\bvert\in\verts$, the value $\abspre{\bvert}$ of $\sabspre$ at $\bvert$
  can be computed by choosing an arbitrary access path $\apath$ from $\sroot$ to $\bvert$
  and using the conditions $(\sslabs)$, $(\sslapp)$, and $(\snlvarsucc)_0$  
  to determine in a stepwise manner the values of $\sabspre$ at the vertices that are visited on $\apath$.
  Hereby note that in every transition along an edge on $\apath$ the length of the abstraction prefix only changes by at most 1. 
\end{proof}

Now we define a precise relationship between \lambdatg{s} and \lambdaaphotg{s}
via translation mappings between these classes:
\begin{description}\itemizeprefs
\item[The mapping $\slaphotgstoltgsij{i}{j}$ {\normalfont (see Prop.~\ref{prop:mappings:laphotgs:to:ltgs})}] 
  produces a \lambdatg\ for any given \lambdaaphotg\ by adding to the
  original set of vertices a number of delimiter vertices at the appropriate
  places. That is, at every position where the abstraction prefix decreases by $n$
  elements, $n$ $\snlvarsucc$-vertices are inserted. In the image, the original
  abstraction prefix is retained as part of the vertices. This can be considered
  intermittent information used for the purpose of defining the edges of the image.
\item[The mapping $\sltgstolaphotgsij{i}{j}$ {\normalfont (see Prop.~\ref{prop:mappings:ltgs:to:laphotgs})}] 
  back to \lambdaaphotg{s} is
  simpler because it only has to erase the $\snlvarsucc$-vertices, and add the correct abstraction prefix
  that exists for the \lambdatg\ to be translated.
\end{description}
 

\begin{proposition}\label{prop:mappings:laphotgs:to:ltgs}
  Let $i\in\setexp{0,1}$ and $j\in\setexp{1,2}$.
  The mapping $\slaphotgstoltgsij{i}{j}$ defined below 
  is well-defined between the class of \lambdatg{s} over $\siglambdaij{i}{j}$
  and the class of \lambdaaphotg{s} over $\siglambdai{i}$:
\begin{align*}
&    
\begin{aligned}
  &
  \slaphotgstoltgsij{i}{j} \funin \classlaphotgsi{i} \to \classltgsij{i}{j},
    \;\;
  \alhotg = \tuple{\verts,\svlab,\svargs,\sroot,\sabspre} 
    \mapsto \laphotgstoltgsij{i}{j}{\alhotg} \defdby \tuple{\vertsacc,\svlabacc,\svargsacc,\srootacc} 
\end{aligned}
\end{align*}
where:
\begin{equation*}
\begin{gathered}  
  \vertsacc \defdby
    \descsetexp{\pair{\bvert}{\abspre{\bvert}}}{\bvert\in\verts}
      \cup
    \descsetexpBig{\tuple{\bvert,k,\bvertacc,\apre}}{\bvert,\bvertacc\in\verts,\,
                                              \bvert \tgsucci{k} \bvertacc,\,
                                              \begin{aligned}[c]
                                                & \vertsof{\bvert} = \sslabs
                                                    \:\logand\:
                                                  \abspre{\bvertacc} \prelt \apre \prele \stringcon{\abspre{\bvert}}{\bvert}
                                                \\[-0.75ex]
                                                & \logor\:
                                                  \vertsof{\bvert} = \sslapp
                                                    \:\logand\:
                                                  \abspre{\bvertacc} \prelt \apre \prele \abspre{\bvert}
                                              \end{aligned}} 
  \\
  \srootacc \defdby \pair{\sroot}{\emptyword} 
  \hspace*{10ex} 
  \begin{gathered}[t]
  \svlabacc \funin \vertsacc \to \siglambdaij{i}{j} \, , \;
    \begin{aligned}[t]
      \pair{\bvert}{\abspre{\bvert}} & \mapsto \vlabacc{\pair{\bvert}{\abspre{\bvert}}} \defdby \vlab{\bvert}
      \\[-0.5ex]
      \tuple{\bvert,k,\bvertacc,\apre} & \mapsto \vlabacc{\bvert,k,\bvertacc,\apre} \defdby \snlvarsucc
    \end{aligned}
  \end{gathered}  
\end{gathered}    
\end{equation*}
and $\svargsacc \funin \vertsacc \to (\vertsacc)^*$ is defined such that 
for the induced indexed successor relation $\stgsuccacci{(\cdot)}$ it holds: 
\begin{align*}
  &
  \bvert \tgsucci{k} \bverti{k}
    \:\logand\:
  \nodels{\bvert}{k} = 0
    \;\;\Longrightarrow\;\;
  \pair{\bvert}{\abspre{\bvert}} \tgsuccacci{k} \pair{\bverti{k}}{\abspre{\bverti{k}}} 
  \displaybreak[0]\\
  &
  \bvert \tgsucci{0} \bverti{0}
    \:\logand\:  
  \nodels{\bvert}{0} > 0
    \:\logand\:
  \vlab{\bvert} = \sslabs
    \:\logand\:
  \abspre{\bvert} = \stringcon{\stringcon{\abspre{\bverti{0}}}{\avert}}{\apre} 
    \\[-0.5ex]
    & \hspace*{6ex} 
    \;\;\Longrightarrow\;\;
      \pair{\bvert}{\abspre{\bvert}} \stgsuccacci{0} \tuple{\bvert,0,\bverti{0},\stringcon{\abspre{\bvert}}{\bvert}}
        \:\logand\:
      \tuple{\bvert,0,\bverti{0},\abspre{\bverti{0}}\avert} \stgsuccacci{0} \pair{\bverti{0}}{\abspre{\bverti{0}}}  
  \displaybreak[0]\\
  &   
  \bvert \tgsucci{k} \bverti{k} 
    \:\logand\:  
  \nodels{\bvert}{k} > 0
    \:\logand\:
  \vlab{\bvert} = \sslapp
    \:\logand\:
  \abspre{\bvert} = \stringcon{\stringcon{\abspre{\bverti{k}}}{\avert}}{\apre} 
    \\[-0.5ex]
    & \hspace*{6ex} 
    \;\;\Longrightarrow\;\;
      \pair{\bvert}{\abspre{\bvert}} \stgsuccacci{k} \tuple{\bvert,k,\bverti{k},\abspre{\bvert}}
        \:\logand\:
      \tuple{\bvert,k,\bverti{k},\abspre{\bverti{k}}\avert} \stgsuccacci{0} \pair{\bverti{k}}{\abspre{\bverti{k}}}
  \displaybreak[0]\\ 
  &  
  \bvert \tgsucci{k} \bverti{k}
    \:\logand\:  
  \nodels{\bvert}{k} > 0
    \:\logand\:
  \tuple{\bvert,k,\bverti{k},\stringcon{\apre}{\avert}}, \,
  \tuple{\bvert,k,\bverti{k},\apre} \in\vertsacc
    \;\;\Longrightarrow\;\;
  \tuple{\bvert,k,\bverti{k},\stringcon{\apre}{\avert}} \tgsuccacci{0} \tuple{\bvert,k,\bverti{k},\apre}
  \\
  &
  \bvert \tgsucci{k} \bverti{k}
    \:\logand\:  
  \nodels{\bvert}{k} > 0
    \:\logand\:
  \tuple{\bvert,k,\bverti{k},\stringcon{\apre}{\avert}} \in\vertsacc
    \:\logand\:
  j = 2  
    \;\;\Longrightarrow\;\;
  \tuple{\bvert,k,\bverti{k},\stringcon{\apre}{\avert}} \tgsuccacci{1} \pair{\bverti{k}}{\abspre{\bverti{k}}}
\end{align*}
for all $\bvert,\bverti{0},\bverti{1},\avert\in\verts$, $k\in\setexp{0,1}$, $\apre\in\verts^*$,
and where the function $\snodels$ is defined as:
\begin{equation*}
  \nodels{\bvert}{k} 
    \defdby
  \begin{cases}
    \length{\abspre{\bvert}} - \length{\abspre{\bvertacc}} 
      & \text{if } \bvert\in\vertsof{\sslapp} \logand \bvert \stgsucci{k} \bvertacc 
    \\[-0.5ex]
    \length{\abspre{\bvert}} + 1 - \length{\abspre{\bvertacc}} 
      & \text{if } \bvert\in\vertsof{\sslabs} \logand \bvert \stgsucci{k} \bvertacc
    \\[-0.5ex]
    0 & \text{otherwise}   
  \end{cases}  
\end{equation*}
\end{proposition}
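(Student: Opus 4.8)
The plan is to fix $i\in\setexp{0,1}$, $j\in\setexp{1,2}$ and an arbitrary $\alhotg = \tuple{\verts,\svlab,\svargs,\sroot,\sabspre}\in\classlaphotgsi{i}$, and to establish two things about $\laphotgstoltgsij{i}{j}{\alhotg} = \tuple{\vertsacc,\svlabacc,\svargsacc,\srootacc}$: first, that it is a genuine term graph over $\siglambdaij{i}{j}$ (the argument function is a total function with out-degrees matching the arities, the root is a vertex, and all vertices are reachable); and second, that it admits a correct abstraction-prefix function in the sense of Def.~\ref{def:abspre:function:siglambdaij}, so that it lands in $\classltgsij{i}{j}$. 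Throughout I will use that $\sabspre$ takes values in $(\vertsof{\sslabs})^*$ and that along any prefix each abstraction is preceded exactly by its own prefix, i.e.\ Lemma~\ref{lem:laphotgs}(\ref{lem:laphotg:item:i}),(\ref{lem:laphotg:item:ii}).

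For well-definedness of $\svargsacc$, I would argue vertex-class by vertex-class that the five defining clauses assign to each vertex a unique indexed successor for each admissible index, all lying in $\vertsacc$. For an original vertex $\pair{\bvert}{\abspre{\bvert}}$ and each $k<\arity{\vlab{\bvert}}$, exactly one of the cases $\nodels{\bvert}{k}=0$ (first clause) and $\nodels{\bvert}{k}>0$ (second/third clause) applies, producing the single $k$-successor $\pair{\bverti{k}}{\abspre{\bverti{k}}}$ or the longest delimiter $\tuple{\bvert,k,\bverti{k},\cdot}$; the strict prefix inequalities guaranteed by $\nodels{\bvert}{k}>0$ ensure that named delimiters indeed lie in $\vertsacc$. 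For a delimiter $\tuple{\bvert,k,\bvertacc,\apre}$ with $\apre=\stringcon{\apre'}{\avert}$, the unique $0$-successor is $\tuple{\bvert,k,\bvertacc,\apre'}$ via the fourth clause when $\abspre{\bvertacc}\prelt\apre'$, and otherwise (exactly when $\apre'=\abspre{\bvertacc}$) is $\pair{\bvertacc}{\abspre{\bvertacc}}$ via the second/third clause; these are mutually exclusive precisely because $\tuple{\bvert,k,\bvertacc,\abspre{\bvertacc}}\notin\vertsacc$, so the out-degree is $1$ for $j=1$ and, together with the $1$-successor supplied by the fifth clause, $2=\arity{\snlvarsucc}$ for $j=2$. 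Well-definedness of $\srootacc=\pair{\sroot}{\emptyword}$ is immediate from $\abspre{\sroot}=\emptyword$ (condition (root) for $\alhotg$), and reachability follows by lifting access paths of $\alhotg$ edge-by-edge through the inserted chains: every original vertex $\bvert$ is reachable, hence so is $\pair{\bvert}{\abspre{\bvert}}$, and each delimiter lies on the chain issuing from such a reachable vertex.

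For the correct abstraction-prefix function I would give it explicitly rather than merely assert existence. Let $\ell$ be the map that replaces each abstraction vertex $\avert$ occurring in a word over $\verts$ by its image $\pair{\avert}{\abspre{\avert}}$, and set $\sabspreacc(\pair{\bvert}{\abspre{\bvert}})\defdby\ell(\abspre{\bvert})$ and $\sabspreacc(\tuple{\bvert,k,\bvertacc,\apre})\defdby\ell(\apre)$. Then (root) and $(\snlvar)_0$ reduce directly to their counterparts for $\alhotg$; conditions $(\sslabs)$ and $(\sslapp)$ split on $\nodels{\bvert}{k}$, using in the case $\nodels=0$ that the aphotg-inequalities $\abspre{\bverti{0}}\prele\stringcon{\abspre{\bvert}}{\bvert}$ and $\abspre{\bverti{k}}\prele\abspre{\bvert}$ collapse to equalities by the length bound, and in the case $\nodels>0$ that the longest delimiter carries fourth component $\stringcon{\abspre{\bvert}}{\bvert}$ resp.\ $\abspre{\bvert}$; condition $(\snlvar)_1$ uses $(\snlvar)_1$ for $\alhotg$ together with $\nodels=0$ on variable vertices; and $(\snlvarsucc)_1$ holds because following a $0$-edge of a delimiter chain strips exactly the last letter from the stored fourth component, hence exactly the last letter from its $\ell$-image.

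The step demanding the most care, and the main obstacle, is condition $(\snlvarsucc)_2$ in the case $j=2$: I must verify that the $1$-successor of a delimiter $\tuple{\bvert,k,\bvertacc,\stringcon{\apre'}{\avert}}$ is an abstraction vertex whose prefix, extended by itself, equals $\ell(\stringcon{\apre'}{\avert})$. This forces the back-link to target the image $\pair{\avert}{\abspre{\avert}}$ of the abstraction $\avert$ being closed, and reduces to the identity $\stringcon{\abspre{\avert}}{\avert}=\stringcon{\apre'}{\avert}$, i.e.\ $\apre'=\abspre{\avert}$. This is exactly the content of Lemma~\ref{lem:laphotgs}(\ref{lem:laphotg:item:i}): since $\stringcon{\apre'}{\avert}$ is a prefix of $\stringcon{\abspre{\bvert}}{\bvert}$, a word over abstraction vertices in which each abstraction is preceded by its own prefix, the initial segment up to and including $\avert$ must be $\stringcon{\abspre{\avert}}{\avert}$. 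All remaining verifications are routine bookkeeping on prefixes and their lengths.
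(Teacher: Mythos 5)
The paper states Prop.~\ref{prop:mappings:laphotgs:to:ltgs} without any accompanying proof (in keeping with its declared extended-abstract character), so there is no official argument to measure yours against; what can be judged is whether your verification is sound. Your two-stage plan --- first checking that the image is a genuine term graph over $\siglambdaij{i}{j}$ (single-valued argument function with out-degrees matching the arities, reachability by lifting access paths through the inserted delimiter chains), and then exhibiting an explicit correct abstraction-prefix function by pushing $\sabspre$, resp.\ the stored fourth components, through the renaming $\avert\mapsto\pair{\avert}{\abspre{\avert}}$ --- is the natural one, and your identification of Lemma~\ref{lem:laphotgs}(\ref{lem:laphotg:item:i}) as the fact that makes $(\snlvarsucc)_2$ go through (forcing $\apre'=\abspre{\avert}$ whenever $\stringcon{\apre'}{\avert}$ is a prefix of some abstraction prefix) is exactly the right key step. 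The case split on $\nodels{\bvert}{k}$, the collapse of the $\prele$-conditions to equalities when $\nodels{\bvert}{k}=0$, and the mutual exclusivity of the two possible $0$-successors of a delimiter are all handled correctly.

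One caveat you should make explicit: your verification of $(\snlvarsucc)_2$ silently repairs the printed definition. The fifth clause as written sends the back-link of $\tuple{\bvert,k,\bverti{k},\stringcon{\apre}{\avert}}$ to $\pair{\bverti{k}}{\abspre{\bverti{k}}}$; taken literally, that target need not even be an abstraction vertex, and $(\snlvarsucc)_2$ would fail. The reading you actually prove --- back-link to $\pair{\avert}{\abspre{\avert}}$, the image of the abstraction whose scope the delimiter closes --- is evidently the intended one (it is the only choice for which the proposition holds), but since well-definedness of this very map is what is being proved, you should state that you are disambiguating the clause rather than using it verbatim. A similar, smaller remark applies to the second successor clause, whose hypothesis $\abspre{\bvert}=\stringcon{\stringcon{\abspre{\bverti{0}}}{\avert}}{\apre}$ is unsatisfiable when exactly one delimiter is inserted below an abstraction vertex (e.g.\ an eager-scope representation of $\labs{\avar}{\labs{\bvar}{\bvar}}$), so that the required edges must again be supplied by the intended chain structure you describe rather than by the literal text. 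Neither point is a gap in your reasoning, but both deserve a sentence.
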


\begin{proposition}\label{prop:mappings:ltgs:to:laphotgs}
  Let $i\in\setexp{0,1}$ and $j\in\setexp{1,2}$.
  The mapping $\sltgstolaphotgsij{i}{j}$ defined below 
  is well-defined between the class of \lambdatg{s} over $\siglambdaij{i}{j}$ 
  and the class of \lambdaaphotg{s} over $\siglambdai{i}$:
\begin{align*}
& 
\begin{aligned}
  &
  \sltgstolaphotgsij{i}{j} \funin \classltgsij{i}{j} \to \classlaphotgsi{i},
    \;\;
  \altg = \tuple{\verts,\svlab,\svargs,\sroot} 
    \mapsto \ltgstolaphotgsij{i}{j}{\altg} \defdby \tuple{\vertsacc,\svlabacc,\svargsacc,\sroot',\sabspre'} 
  \\
  & \hspace*{15ex}
  \text{where }
  \begin{aligned}[t]
    & \vertsacc \defdby \vertsof{\sslabs} \cup \vertsof{{\sslapp}} \cup \vertsof{\snlvar}, \:
      \svlabacc \defdby \srestrictfunto{\svlab}{\vertsacc}, \:
      \sroot' \defdby \sroot, \:
      \\[-0.5ex]
    & \svargsacc \funin \vertsacc \to (\vertsacc)^*
       \text{ so that for the induced indexed succ.\ relation $\stgsuccacci{(\cdot)}$:} 
      \\[-0.5ex]
    & \hspace*{6ex}
        \averti{0} \tgsuccacci{k} \averti{1} 
          \;\funin\,\Leftrightarrow\;
        \averti{0} \binrelcomp{\stgsucci{k}}{\stgsuccisstar{0}{\snlvarsucc}} \averti{1} 
        \hspace*{3ex}  
        \text{(for all $\averti{0},\averti{1}\in\vertsacc$, $k\in\setexp{0,1}$)} \; 
      \\[-0.5ex]
    & \sabspre' \defdby \srestrictfunto{\sabspre}{\vertsacc}
      \text{ for the correct \absprefix\ function $\sabspre$ for $\altg$.} \,   
  \end{aligned}
\end{aligned}
\end{align*}
\end{proposition}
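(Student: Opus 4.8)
Fix $i\in\setexp{0,1}$ and $j\in\setexp{1,2}$ and let $\altg = \tuple{\verts,\svlab,\svargs,\sroot}$ be a \lambdatg\ over $\siglambdaij{i}{j}$, so that by Lemma~\ref{lem:ltgs}(\ref{lem:ltg:item:v}) there is a unique correct \absprefix\ function $\sabspre$ for $\altg$, on which the definition of $\sltgstolaphotgsij{i}{j}$ rests. To show that the mapping is well-defined, the plan is to verify two things: that the tuple $\tuple{\vertsacc,\svlabacc,\svargsacc,\sroot'}$ is a genuine $\siglambdai{i}$\nb-term-graph, and that the restricted function $\sabspre' = \srestrictfunto{\sabspre}{\vertsacc}$ meets the correctness conditions of Def.~\ref{def:abspre:function:siglambdai}. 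Together these assert exactly that $\ltgstolaphotgsij{i}{j}{\altg}\in\classlaphotgsi{i}$. Note first that the labelling is unproblematic: since $\vertsacc = \vertsof{\sslabs}\cup\vertsof{\sslapp}\cup\vertsof{\snlvar}$ drops all delimiter vertices, $\svlabacc$ takes values in $\siglambdai{i}$.

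The structural heart of the argument is a statement about \emph{delimiter chains}: from every vertex the relation $\stgsuccis{0}{\snlvarsucc}$ (following the $0$\nb-th edge out of a delimiter vertex) leads in finitely many steps to a \emph{unique} first non-delimiter vertex, i.e.\ a unique vertex in $\vertsacc$. Uniqueness is immediate because $\svargs$ is a function, so each delimiter has exactly one $0$\nb-th successor; and termination follows since abstraction prefixes are finite words while, by $(\snlvarsucc)_1$, a step $\cvert\tgsucci{0}\cvertacc$ with $\cvert\in\vertsof{\snlvarsucc}$ satisfies $\stringcon{\abspre{\cvertacc}}{\avert} = \abspre{\cvert}$ for some $\avert$, hence strictly shortens the prefix, whereas every delimiter vertex has a non-empty prefix. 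With this, $\svargsacc$ — whose induced relation restricts $\averti{0}\binrelcomp{\stgsucci{k}}{\stgsuccisstar{0}{\snlvarsucc}}\averti{1}$ to targets $\averti{1}\in\vertsacc$ — assigns to each argument position of a vertex in $\vertsacc$ exactly one successor in $\vertsacc$, so the argument word is well-defined. A check by label confirms the arities required by $\siglambdai{i}$: application and abstraction vertices retain their $2$, respectively $1$, edges (each redirected past its delimiter chain), while a variable vertex retains its $i$ edges — for $i=1$ its $0$\nb-edge already targets an abstraction vertex by $(\snlvar)_1$, so its chain is empty. Finally $\sroot\in\vertsacc$, since delimiters have non-empty prefixes but $\abspre{\sroot}=\emptyword$ by (root); and reachability transfers from $\altg$ by projecting an access path — which by Lemma~\ref{lem:ltgs}(\ref{lem:ltg:item:iv}) leaves delimiter vertices only along $0$\nb-edges — onto a path in $\tuple{\vertsacc,\svlabacc,\svargsacc,\sroot'}$ that collapses each maximal delimiter segment into a single $\tgsuccacci{k}$\nb-step.

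It then remains to verify the conditions of Def.~\ref{def:abspre:function:siglambdai} for $\sabspre'$. Conditions (root) and $(\snlvar)_0$ are inherited verbatim from the corresponding conditions for $\sabspre$ on $\altg$, as $\sabspre'$ is a restriction. For $(\sslabs)$ and $(\sslapp)$ I would unfold a new edge $\bvert\tgsuccacci{k}\bverti{k}$ into an original edge $\bvert\tgsucci{k}\cvert$ followed by a chain $\cvert\tgsuccisstar{0}{\snlvarsucc}\bverti{k}$. The source conditions for $\siglambdaij{i}{j}$ give $\abspre{\cvert} = \stringcon{\abspre{\bvert}}{\bvert}$ when $\bvert\in\vertsof{\sslabs}$, respectively $\abspre{\cvert} = \abspre{\bvert}$ when $\bvert\in\vertsof{\sslapp}$, while the prefix-shortening of delimiter steps makes $\abspre{\bverti{k}}$ a prefix of $\abspre{\cvert}$. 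Composing yields $\abspre{\bverti{k}}\prele\stringcon{\abspre{\bvert}}{\bvert}$, respectively $\abspre{\bverti{k}}\prele\abspre{\bvert}$, as required — the relaxation from equality to $\prele$ being precisely the slack introduced by the skipped delimiters. For $(\snlvar)_1$ (relevant only when $i=1$) the original $0$\nb-successor $\cvert$ of a variable vertex is an abstraction vertex by $(\snlvar)_1$ for $\altg$, so the chain is empty, $\bverti{0}=\cvert$, and the equality $\stringcon{\abspre{\bverti{0}}}{\bverti{0}} = \abspre{\bvert}$ carries over unchanged.

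I expect the delimiter-chain statement of the second paragraph to be the only genuine obstacle: it bundles the termination, determinism, and arity count that together make $\svargsacc$ a well-defined argument function of the correct profile, and it is where Lemma~\ref{lem:ltgs}(\ref{lem:ltg:item:iv}),(\ref{lem:ltg:item:v}) and the finiteness of abstraction prefixes do their work. Once it is in place, the remaining verifications of reachability and of the \absprefix\ conditions are a routine matter of chaining the source (in)equalities with the prefix-shortening property of delimiter steps.
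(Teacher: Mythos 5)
The paper states this proposition without proof (consistent with its extended-abstract character), so there is no argument of the authors' to compare yours against. Your proposal is correct and supplies what is missing: the delimiter-chain observation — that each $\stgsuccis{0}{\snlvarsucc}$-step strictly shortens the (non-empty) abstraction prefix of its source, so chains terminate deterministically at a unique first non-delimiter vertex — is exactly the right pivot, and it simultaneously yields the well-definedness and arity of $\svargsacc$, the transfer of reachability, and (via the prefix-shortening along skipped delimiters) the weakening of the equalities in $(\sslabs)$, $(\sslapp)$ for $\siglambdaij{i}{j}$ to the inequalities $\prele$ required by Def.~\ref{def:abspre:function:siglambdai}. I see no gap.
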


\begin{theorem}[correspondence between \lambdaaphotg{s} with \lambdatg{s}]\label{thm:corr:laphotgs:ltgs}
  Let $i\in\setexp{0,1}$ and $j\in\setexp{1,2}$.
  The mappings $\sltgstolaphotgsij{i}{j}$ from Prop.~\ref{prop:mappings:ltgs:to:laphotgs} 
  and $\slaphotgstoltgsij{i}{j}$ from Prop.~\ref{prop:mappings:laphotgs:to:ltgs}
  define a correspondence 
  between the classes of \lambdatg{s} over $\siglambdaij{i}{j}$ and of \lambdaaphotg{s} over $\siglambdai{i}$ 
  with the following properties:
\begin{enumerate}[(i)]\itemizeprefs
  \item 
    $\scompfuns{\sltgstolaphotgsij{i}{j}}{\slaphotgstoltgsij{i}{j}} = \sidfunon{\classlaphotgsi{i}}$.
  \item 
    For all $\altg\in\classltgsij{i}{j}\,$: \mbox{ } 
    $(\compfuns{\slaphotgstoltgsij{i}{j}}{\sltgstolaphotgsij{i}{j})}{\altg}
           \Sfunbisim
         \altg$.
  \item $\sltgstolaphotgsij{i}{j}$  and $\slaphotgstoltgsij{i}{j}$ 
    preserve and reflect the sharing orders on $\classlaphotgsi{i}$ and on $\classltgsij{i}{j}$:
    \vspace*{-0.25ex}
  \begin{align*}
    (\forall \alaphotgi{1},\alaphotgi{2}\in\classlaphotgsi{i}) \;\; \hspace*{-14ex} &&
       \alaphotgi{1} \funbisim \alaphotgi{2}
       \;\: & \Longleftrightarrow\;\:
     \laphotgstoltgsij{i}{j}{\alaphotgi{1}} \funbisim \laphotgstoltgsij{i}{j}{\alaphotgi{2}}
    \\  
    (\forall \altgi{1},\altgi{2}\in\classltgsij{i}{j}) \;\; \hspace*{-14ex} && 
       \ltgstolaphotgsij{i}{j}{\altgi{1}} \funbisim \ltgstolaphotgsij{i}{j}{\altgi{2}}
          \;\: & \Longleftrightarrow\;\:
       \altgi{1} \funbisim \altgi{2}
  \end{align*}
\end{enumerate}
\end{theorem}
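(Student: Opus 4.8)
The plan is to obtain (i) by a direct computation, (ii) by exhibiting a single sharing homomorphism, and (iii) by lifting and restricting homomorphisms, supplemented by a factorisation argument that invokes (ii); throughout I use the uniqueness of the correct abstraction-prefix function on a \lambdatg\ (Lemma~\ref{lem:ltgs}(\ref{lem:ltg:item:v})) and the fact that delimiter vertices are traversed only via $\stgsuccis{0}{\snlvarsucc}$-edges (Lemma~\ref{lem:ltgs}(\ref{lem:ltg:item:iv})). For part~(i), fix $\alhotg = \tuple{\verts,\svlab,\svargs,\sroot,\sabspre}\in\classlaphotgsi{i}$ and unfold the mappings: $\laphotgstoltgsij{i}{j}{\alhotg}$ has vertices $\pair{\bvert}{\abspre{\bvert}}$, in bijection with $\verts$ via the first projection, with $\snlvarsucc$-chains of length $\nodels{\bvert}{k}$ inserted on each edge $\bvert\tgsucci{k}\bvertacc$, while $\sltgstolaphotgsij{i}{j}$ deletes exactly these $\snlvarsucc$-vertices and contracts edges along $\binrelcomp{\stgsucci{k}}{\stgsuccisstar{0}{\snlvarsucc}}$. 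Since each chain runs from $\pair{\bvert}{\abspre{\bvert}}$ to $\pair{\bvertacc}{\abspre{\bvertacc}}$, the contracted relation is $\tgsucci{k}$ on $\verts$; labels and root are retained; and the prefix is recovered because it is stored in the second vertex component and, by uniqueness, equals $\srestrictfunto{\sabspre}{\vertsacc}$. Hence $\scompfuns{\sltgstolaphotgsij{i}{j}}{\slaphotgstoltgsij{i}{j}} = \sidfunon{\classlaphotgsi{i}}$.

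For part~(ii), fix $\altg\in\classltgsij{i}{j}$ and set $\altg'' \defdby \laphotgstoltgsij{i}{j}{\ltgstolaphotgsij{i}{j}{\altg}}$. Its non-delimiter vertices biject with those of $\altg$, whereas its $\snlvarsucc$-vertices are reconstructed per edge and hence maximally unshared. I would take $\sahom\funin\altg''\to\altg$ to be this bijection on non-delimiter vertices, sending each reconstructed delimiter to the original $\snlvarsucc$-vertex at the matching position on the corresponding edge of $\altg$; the positions agree because each reconstructed chain has length $\nodels{\bvert}{k}$, the number of scopes closed on that edge. Then $\sahom$ is a homomorphism, injective off the $\snlvarsucc$-vertices, that collapses only the per-edge copies shared in $\altg$, giving $\altg''\Sfunbisim\altg$.

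For part~(iii) I first treat $\slaphotgstoltgsij{i}{j}$. For preservation, a homomorphism $\sahom$ between \lambdaaphotg{s} lifts to the \lambdatg{s} by acting as $\sahom$ on the vertices $\pair{\bvert}{\abspre{\bvert}}$ and mapping each delimiter chain onto the chain of the image edge, which is consistent since $\sahom$ preserves abstraction prefixes by~\eqref{eq:def:homom:aplambdahotg}, hence preserves every $\nodels{\bvert}{k}$. For reflection, a homomorphism between the \lambdatg{s} maps non-delimiter vertices to non-delimiter vertices and $\snlvarsucc$-chains to $\snlvarsucc$-chains (by label preservation), so its restriction to non-delimiter vertices is a \lambdaaphotg\ homomorphism, respecting prefixes by Lemma~\ref{lem:ltgs}(\ref{lem:ltg:item:v}). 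For $\sltgstolaphotgsij{i}{j}$, preservation is the same restriction argument; reflection is the delicate direction. From $\ltgstolaphotgsij{i}{j}{\altgi{1}} \funbisim \ltgstolaphotgsij{i}{j}{\altgi{2}}$ I would apply the preservation just shown for $\slaphotgstoltgsij{i}{j}$ to obtain $\theta$ with $R_1\funbisim R_2$, where $R_k \defdby \laphotgstoltgsij{i}{j}{\ltgstolaphotgsij{i}{j}{\altgi{k}}}$, and combine it with the sharing homomorphisms $\phi_k$ from (ii) witnessing $R_k\Sfunbisim\altgi{k}$. The composite $\scompfuns{\phi_2}{\theta}\funin R_1\to\altgi{2}$ factors through the surjection $\phi_1$, because the fibres of $\phi_1$ consist of reconstructed delimiters that agree on prefix, continuation, and back-link, and $\theta$ preserves exactly these data, so $\scompfuns{\phi_2}{\theta}$ is constant on each such fibre; this yields $\altgi{1}\funbisim\altgi{2}$.

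The hard part will be the delimiter bookkeeping behind (ii) and the reflection half of (iii) for $\sltgstolaphotgsij{i}{j}$: I expect most of the work in verifying that the per-edge chains produced by $\slaphotgstoltgsij{i}{j}$ have exactly length $\nodels{\bvert}{k}$ and join the intended endpoints, and in checking that $\scompfuns{\phi_2}{\theta}$ is genuinely constant on the $\phi_1$-fibres so that it descends to $\altgi{1}$. Both rest on homomorphisms preserving abstraction prefixes, and thereby the entire scope-closing structure; once that is pinned down, the remaining label- and edge-conditions are routine.
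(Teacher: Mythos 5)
The paper states Theorem~\ref{thm:corr:laphotgs:ltgs} without any proof (in keeping with its extended-abstract character), so there is no proof of the authors' to compare yours against; I can only assess your proposal on its own terms. Your treatment of (i), of (ii), and of the first equivalence in (iii) follows the natural route and should go through: (i) holds up to the canonical renaming $\bvert\mapsto\pair{\bvert}{\abspre{\bvert}}$; in (ii) the chains match up position-wise because, by $(\snlvarsucc)_1$, every delimiter vertex shortens the abstraction prefix by exactly one, so any $\snlvarsucc$\nb-chain between two non-delimiter vertices has length exactly $\nodels{\bvert}{k}$; and the lifting/restriction arguments additionally need that every homomorphism between \lambdatg{s} commutes with the unique correct \absprefix\ functions --- this does not follow from Lemma~\ref{lem:ltgs}(\ref{lem:ltg:item:v}) alone, but is a routine induction on the distance from the root using $(\sslabs)$, $(\sslapp)$, $(\snlvar)_1$, $(\snlvarsucc)_1$, $(\snlvarsucc)_2$.

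The genuine gap sits exactly where you flagged the difficulty: the implication $\ltgstolaphotgsij{i}{j}{\altgi{1}} \funbisim \ltgstolaphotgsij{i}{j}{\altgi{2}} \Rightarrow \altgi{1}\funbisim\altgi{2}$. The factorisation through $\phi_1$ cannot be completed, because $\scompfuns{\phi_2}{\theta}$ is in general \emph{not} constant on the fibres of $\phi_1$, and in fact the paper's own Remark~\ref{rem:corr:laphotgs:ltgs:nobij} refutes the literal statement: there $\ltgstolaphotgsij{0}{1}{\atg} = \ltgstolaphotgsij{0}{1}{\atg'}$, so the left-hand side holds via the identity, yet $\atg\not\funbisim\atg'$ --- we have $\atg'\funbisim\atg$ (collapsing the duplicated delimiters), and since $\atg\not\cong\atg'$ the converse homomorphism is excluded by antisymmetry of $\sfunbisim$ up to isomorphism on finite term graphs; concretely, the shared $\snlvarsucc$\nb-vertex of $\atg$ would have to be sent to two distinct vertices of $\atg'$ to satisfy the (arguments) condition. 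In that instance $\theta$ is the identity and $\phi_2$ is injective while $\phi_1$ has a non-singleton fibre, so the descent fails. What your chain of maps actually delivers is an $R_1$ with $R_1\funbisims{\snlvarsucc}\altgi{1}$ and $R_1\funbisim\altgi{2}$, i.e.\ reflection only up to sharing of $\snlvarsucc$\nb-vertices. You should either prove and state that weaker conclusion, or restrict to the case where the two sides are related by $\Sfunbisim$; as printed, this direction of (iii) is contradicted by Remark~\ref{rem:corr:laphotgs:ltgs:nobij}, and no proof attempt can close that.
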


\begin{example}\label{ex:corr:laphotgs:ltgs}\normalfont
The \lambdaaphotg{s} in Fig.~\ref{fig:lambdaaphotg} correspond to the \lambdaaphotg{s} in Fig.~\ref{fig:lambdatg}
via the mappings $\slaphotgstoltgsij{i}{j}$ and $\sltgstolaphotgsij{i}{j}$ as follows: 
$
~~~~~~ \laphotgstoltgsij{i}{j}{\alaphotgi{0}} = \altgi{0}' \, ,
~~~~~~ \laphotgstoltgsij{i}{j}{\alaphotgi{1}} = \altgi{1}' \, ,
~~~~~~ \ltgstolaphotgsij{i}{j}{\altgi{0}} = \alaphotgi{0}' \, ,
~~~~~~ \ltgstolaphotgsij{i}{j}{\altgi{1}} = \alaphotgi{1}' \, 
$.
\end{example}

\begin{remark}\label{rem:corr:laphotgs:ltgs:nobij}\normalfont
The correspondence in Theorem~\ref{thm:corr:laphotgs:ltgs} is not a bijection since $\sltgstolaphotgsij{i}{j}$ is not injective.
This can be seen for the following graphs (here with $i=0$ and $j=1$) where we have $\sltgstolaphotgsij01(\atg) = \alhotg = \sltgstolaphotgsij01(\atg')$:
\graphs{
  \graph{\atg}{rem_corr_laphotgs_ltgs_nobij_snodes_shared}
  \graph{\alhotg}{rem_corr_laphotgs_ltgs_nobij_aphotg}
  \graph{\atg'}{rem_corr_laphotgs_ltgs_nobij_snodes_unshared}
}
Obviously \lambdaaphotg{s} are not capable of reproducing the different degrees of $S$-sharing.
\end{remark}

\section{Not closed under bisimulation and functional bisimulation}
  \label{sec:not:closed}

In this section we collect all negative results concerning closedness under bisimulation
and functional bisimulation for the classes of \lambdatg{s} as introduced in the previous section.

\begin{proposition}\label{prop:lambdatgs:not:closed:under:bisim}
  None of the classes 
  $\classltgsi{1}$ and\/ $\classltgsij{i}{j}$, for\/ $i\in\setexp{0,1}$ and $j\in\setexp{1,2}$, of \lambdatg{s}
  are closed under bisimulation.
\end{proposition}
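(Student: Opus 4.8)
The plan is to refute closure by a single small construction that can be read over the signature appropriate to each class at once. Take $\atgacc$ to be the term graph whose root $r$ is an application vertex with two \emph{distinct} abstraction-vertex successors $L_1, L_2$ (so $r \tgsucci{0} L_1$ and $r \tgsucci{1} L_2$ with $L_1 \neq L_2$), both of which have one and the same variable vertex $u$ as body ($L_1 \tgsucci{0} u$ and $L_2 \tgsucci{0} u$); when $i = 1$ let $u$ additionally carry a back-link $u \tgsucci{0} L_1$. Let $\atg$ be the graph obtained by identifying $L_1, L_2$ into a single abstraction vertex $L$, so that $r$ points twice to $L$, whose body is $u$ (with $u \tgsucci{0} L$ when $i = 1$). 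I would present both graphs in one figure, drawing the variable back-links dotted so that the picture simultaneously depicts the $i=0$ and the $i=1$ readings; since no $\snlvarsucc$-vertices occur, the value of $j$ is immaterial, and the very same pair serves for $\classltgsi{1}$ as well.

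First I would check that $\atgacc \funbisim \atg$ by verifying that the map collapsing $L_1, L_2 \mapsto L$ and fixing $r, u$ satisfies the (labels), (arguments) and (roots) clauses of a homomorphism; as every homomorphism is in particular a bisimulation, this yields $\atgacc \bisim \atg$, hence $\atg \bisim \atgacc$. Next I would confirm $\atg$ lies in each of the classes by exhibiting its abstraction-prefix function $\abspre{r} = \emptyword$, $\abspre{L} = \emptyword$, $\abspre{u} = L$ (the one-letter word), and checking (root), $(\sslapp)$, $(\sslabs)$, $(\snlvar)_0$ and, for $i=1$, $(\snlvar)_1$; the $\snlvarsucc$-clauses hold vacuously. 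Thus $\atg \in \classltgsi{1}$ and $\atg \in \classltgsij{i}{j}$ for every $i,j$.

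The crux is to show that $\atgacc$ admits \emph{no} correct abstraction-prefix function, so that $\atgacc \notin \classltgsi{1}$ and $\atgacc \notin \classltgsij{i}{j}$. Suppose $\sabspre$ were correct. From (root) and $(\sslapp)$, applied at the application root whose successors are $L_1, L_2$, one gets $\abspre{L_1} = \abspre{L_2} = \emptyword$. The clause $(\sslabs)$, applied along $L_1 \tgsucci{0} u$ and $L_2 \tgsucci{0} u$, then forces $\abspre{u} \prele \abspre{L_1}L_1$ and $\abspre{u} \prele \abspre{L_2}L_2$, i.e.\ $\abspre{u} \prele L_1$ and $\abspre{u} \prele L_2$ (with equality in the $\siglambdaij{i}{j}$ case). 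Since $L_1 \neq L_2$, the only common prefix of the one-letter words $L_1$ and $L_2$ is $\emptyword$, so $\abspre{u} = \emptyword$, contradicting $(\snlvar)_0$. In the delimiter classes the equalities $\abspre{u} = L_1 = L_2$ already clash with $L_1 \neq L_2$ directly.

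I expect the main subtlety to be precisely this uniform handling of the variable clause across the two flavours of abstraction-prefix function: in $\classltgsi{1}$ the clause $(\sslabs)$ only yields the prefix inequality $\prele$, so the contradiction must be drawn from the nonemptiness requirement $(\snlvar)_0$ rather than from a clash of equalities as in $\classltgsij{i}{j}$. A more conceptual route for $\classltgsi{1}$, which I would mention as an alternative, is to invoke Lemma~\ref{lem:laphotgs}(i): the back-link makes $L_1$ occur in $\abspre{u}$, whence $L_1$ would have to lie on \emph{every} access path of $u$, contradicting the access path $r \tgsucci{1} L_2 \tgsucci{0} u$ that bypasses $L_1$. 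Either way, the phenomenon being exploited is that a bisimulation may unshare an abstraction vertex while leaving a variable in its former scope shared, thereby destroying the scope nesting that any correct abstraction-prefix function must witness.
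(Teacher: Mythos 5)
Your proposal is correct and follows essentially the same route as the paper: the paper derives this proposition from Prop.~\ref{prop:lambdatgs:not:closed:under:funbisim:convfunbisim}, whose proof exhibits converse-functional-bisimulation counterexamples in which unsharing an abstraction vertex while keeping a variable vertex shared produces overlapping scopes, and for $i=0$ your pair $(\atg,\atgacc)$ is precisely the paper's pair $(\atgi{0},\atgi{1})$ from item~(\ref{prop:lambdatgs:not:closed:under:funbisim:convfunbisim:item:i}). The only difference is presentational: you unify all signatures into one example by making the back-link optional, whereas the paper treats the $i=0$ and $i=1$ cases with separate figures; your verification that $\atgacc$ admits no correct abstraction-prefix function (via $(\snlvar)_0$ for the prefix-inequality variant and via clashing equalities for the delimiter classes) is a sound elaboration of what the paper summarizes as ``overlapping scopes''.
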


This proposition is an immediate consequence of the next one, which can be viewed 
as a refinement, because it formulates non-closedness of classes of \lambdatg{s}
under specializations of bisimulation,
namely for functional bisimulation (under which some classes are not closed), 
and for converse functional bisimulation (under which none of the classes considered here is closed).

\begin{proposition}\label{prop:lambdatgs:not:closed:under:funbisim:convfunbisim}
  The following statements hold:
  \begin{enumerate}[(i)]\itemizeprefs
    \item{}\label{prop:lambdatgs:not:closed:under:funbisim:convfunbisim:item:i}
      None of the classes 
      $\classltgsij{\pmb{0}}{j}$ for\/ $j\in\setexp{1,2}$
      of \lambdatg{s} are closed under functional bisimulation $\sfunbisim$, or under
      converse functional bisimulation $\sconvfunbisim\,$.   
    \item{}\label{prop:lambdatgs:not:closed:under:funbisim:convfunbisim:item:ii}
      None of the classes 
      $\classltgsi{\pmb{1}}$ and\/ $\classltgsij{\pmb{1}}{j}$ for\/ $j\in\setexp{1,2}$
      of \lambdatg{s} are closed under converse functional bisimulation.  
%
%
    \item{}\label{prop:lambdatgs:not:closed:under:funbisim:convfunbisim:item:iii} 
      The class $\classltgsij{\pmb{1}}{\pmb{1}}$ of \lambdatg{s} 
      is not closed under functional bisimulation. 
    \item{}\label{prop:lambdatgs:not:closed:under:funbisim:convfunbisim:item:iv}  
      The class $\classltgsij{\pmb{1}}{\pmb{2}}$ of \lambdatg{s} 
      is not closed under functional bisimulation. 
  \end{enumerate}
\end{proposition}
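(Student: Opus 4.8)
The plan is to prove all four statements by explicit counterexamples. To refute closure of a class under $\sfunbisim$ I exhibit a \lambdatg\ in the class together with a homomorphism onto a term graph over the same signature that admits \emph{no} correct abstraction-prefix function; to refute closure under $\sconvfunbisim$ I instead exhibit a homomorphism \emph{from} such an out-of-class term graph onto a \lambdatg\ in the class. In each case the only nonroutine step is the negative one, namely showing that the offending term graph lies outside the class. I carry this out by tracing access paths: the conditions $(\sslabs)$, $(\sslapp)$, $(\snlvarsucc)_1$ (and, in the delimiter-free class $\classltgsi{1}$, the back-link condition $(\snlvar)_1$) force the value of any correct abstraction-prefix function stepwise along an access path, while Lemma~\ref{lem:laphotgs} guarantees that an abstraction vertex occurring in the prefix of a vertex $\bvert$ must lie on \emph{every} access path of $\bvert$. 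It then suffices to produce one vertex to which two access paths assign incompatible prefixes, or a variable vertex forced to carry $\emptyword$ (violating $(\snlvar)_0$). For the classes over $\siglambdaij{i}{j}$ the uniqueness and local computability of the abstraction-prefix function (Lemma~\ref{lem:ltgs}) keep this book-keeping finite.

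For the converse-bisimulation statements (the second half of (i) and all of (ii)) I use a single family of examples based on \emph{splitting a shared abstraction}. Let $\atg$ represent $\lapp{(\labs{\cvar}{\cvar})}{(\labs{\cvar}{\cvar})}$ with the subterm $\labs{\cvar}{\cvar}$ maximally shared, i.e.\ the single abstraction vertex $\avert$ is the target of both arguments of the root application. This graph uses no delimiter vertices and is easily checked to admit a correct abstraction-prefix function, so it belongs to each of $\classltgsi{1}$ and $\classltgsij{i}{j}$. The out-of-class graph $\atgacc$ is obtained by splitting $\avert$ into two copies $\averti{1},\averti{2}$ while keeping their common body $\bvert$ shared; collapsing $\averti{1},\averti{2}$ back to $\avert$ is a homomorphism $\atgacc\funbisim\atg$, hence $\atg\convfunbisim\atgacc$. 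Now $\bvert$ has one access path through $\averti{1}$ and one through $\averti{2}$. When variable vertices carry back-links ($i=1$, covering $\classltgsi{1}$ and $\classltgsij{1}{j}$ of statement (ii)), the back-link of $\bvert$ targets only one copy, say $\averti{1}$, so $\averti{1}$ occurs in the prefix of $\bvert$ and, by Lemma~\ref{lem:laphotgs}, would have to lie on the access path running through $\averti{2}$, which it does not. When variable vertices carry no back-links ($i=0$, covering $\classltgsij{0}{j}$ of statement (i)), the two access paths instead force the prefix of $\bvert$ to equal the one-letter words $\averti{1}$ and $\averti{2}$, which are distinct. Either way $\atgacc$ admits no correct abstraction-prefix function.

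For the functional-bisimulation statements I use two constructions, according to whether variable vertices carry back-links. For $\classltgsij{0}{j}$ (first half of (i)) I take $\atg$ to be a \lambdatg\ for the \lambdaterm\ $\labs{\avar}{\labs{\bvar}{\lapp{\avar}{\bvar}}}$ in which the occurrence of $\avar$ sits below a scope-delimiter vertex closing the inner abstraction (so its abstraction prefix has length one) while the occurrence of $\bvar$ is a direct argument of the application (abstraction prefix of length two). Since with $i=0$ both variable vertices are bare $\snlvar$-leaves, they are bisimilar, so a homomorphism may identify them; the resulting vertex is then forced to carry both prefixes, for both $j\in\setexp{1,2}$ (when $j=2$ the delimiter merely carries a back-link to the abstraction it closes, which does not affect the argument word). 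For $\classltgsij{1}{1}$ and $\classltgsij{1}{2}$ (statements (iii) and (iv)) this trick is unavailable, because back-links forbid identifying variable vertices—and, when $j=2$, delimiter vertices—that sit at different scope depths. Instead I place two copies of the \emph{closed} subterm $\labs{\cvar}{\cvar}$ at different depths of a single valid \lambdatg: one reached after the delimiters have popped the prefix down to length one, the other after they have popped it down to $\emptyword$. A homomorphism may identify the two closed copies, since their internal back-links are self-contained and are therefore identified consistently (and for $j=2$ one need not identify the two delimiters themselves), whereupon the shared abstraction copy is reached with two distinct prefixes and the image leaves the class.

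The main obstacle I anticipate lies in the $i=1$ functional cases (iii) and (iv): there the rigidity imposed by variable and (for $j=2$) delimiter back-links sharply restricts which vertices a functional bisimulation may identify, so the counterexample must route the two identifiable parts through closed subterms and, in the $j=2$ case, lay out the delimiter back-links so that the homomorphism collapses only the closed portion while the conflicting delimiters remain distinct. The remaining work—checking that each constructed homomorphism respects (labels), (arguments), (roots) and the back-link conditions, and that each image genuinely admits no correct abstraction-prefix function—is routine but requires careful bookkeeping, which the uniqueness and stepwise determination of abstraction-prefix functions (Lemma~\ref{lem:ltgs}) keeps local and finite.
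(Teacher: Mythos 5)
Your proposal is correct and follows essentially the same route as the paper: both argue all four items by exhibiting explicit homomorphisms into (or out of) graphs that admit no correct abstraction-prefix function, with the negative step verified by tracing access paths and deriving a prefix conflict (two incompatible prefixes forced on one vertex, or a back-link target missing from some access path), exactly as in the paper's figure-based counterexamples for $(\labs{\avar}{\avar})\hspace*{1.5pt}(\labs{\avar}{\avar})$ and its variants. One minor difference worth noting: for item (iii) the paper deliberately chooses an \emph{eager-scope} counterexample in $\classltgsij{1}{1}$ (this is reused after Cor.~\ref{cor:closed:under:fun:bisim} to rule out an eager-scope analogue for $j=1$), whereas your two-copies-of-$\labs{\cvar}{\cvar}$-at-different-depths construction is necessarily non-eager; it proves the proposition as stated but would not support that later remark.
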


\begin{proof}
  For showing (\ref{prop:lambdatgs:not:closed:under:funbisim:convfunbisim:item:i}),
  let $\bsig$ be one of the signatures 
                                       $\siglambdaij{0}{j}$. 
  Consider the following term graphs over $\bsig$:
  \graphs
  {
      \graph{\atgi{2}}{lambdatgs_not_closed_under_funbisim_convfunbisim_item_i_g2}
      \graph{\atgi{1}}{lambdatgs_not_closed_under_funbisim_convfunbisim_item_i_g1}
      \graph{\atgi{0}}{lambdatgs_not_closed_under_funbisim_convfunbisim_item_i_g0}
  }%
  Note that $\atgi{2}$ represents the syntax tree of the nameless de-Bruijn-index notation
  $\lapp{(\sslabs\snlvar)}{(\sslabs\snlvar)}$
  for the \lambdaterm\ $\lapp{(\labs{\avar}{\avar})}{(\labs{\avar}{\avar})}$. 
  Then it holds: $\atgi{2} \funbisim \atgi{1} \funbisim \atgi{0}$.
  But while $\atgi{2}$ and $\atgi{0}$ admit correct ab\-strac\-tion-pre\-fix functions over $\bsig$ (nestedness of the implicitly defined scopes, here shaded),
  and consequently are \lambdatg{s} over $\bsig$, this is not the case for $\atgi{1}$  (overlapping scopes).
  Hence the class of \lambdatg{s} over $\bsig$ is closed neither under functional bisimulation nor under converse functional bisimulation.
  
  For showing (\ref{prop:lambdatgs:not:closed:under:funbisim:convfunbisim:item:ii}),
  let $\bsig$ be one of the signatures $\siglambdai{1}$ and $\siglambdaij{1}{j}$.
  Consider the term graphs over $\bsig$:
  \graphs
  {
          \graph{\atgi{1}'}{lambdatgs_not_closed_under_funbisim_convfunbisim_item_ii_g1}
          \graph{\atgi{0}'}{lambdatgs_not_closed_under_funbisim_convfunbisim_item_ii_g0}
  }
  Then it holds: $\atgi{1}' \funbisim \atgi{0}'$.
  But while $\atgi{0}'$ admits a correct ab\-strac\-tion-pre\-fix function,
  and therefore is a \lambdatg, over $\bsig$, this is not the case for $\atgi{1}'$ (due to overlapping scopes).
  Hence the class of \lambdatg{s} over $\bsig$ is not closed under converse functional bisimulation.
%

  For showing (\ref{prop:lambdatgs:not:closed:under:funbisim:convfunbisim:item:iii}),
  consider the following term graphs over $\siglambdaij{1}{1}$:
  \graphs
  {
          \graph{\atgi{1}''}{lambdatgs_not_closed_under_funbisim_convfunbisim_item_iii_g1}
          \graph{\atgi{0}''}{lambdatgs_not_closed_under_funbisim_convfunbisim_item_iii_g0}
  }
  Then it holds that $\atgi{1}'' \funbisim \atgi{0}''$.         
  However,        
  while $\atgi{1}''$ admits a correct abstraction-prefix function,
  and hence is a \lambdatg\ over $\siglambdaij{1}{1}$,
  this is not the case for $\atgi{0}''$ (due to overlapping scopes).
  Therefore the class of \lambdatg{s} over $\siglambdaij{1}{1}$ is not closed under functional bisimulation.

  For showing (\ref{prop:lambdatgs:not:closed:under:funbisim:convfunbisim:item:iv}),
  consider the following term graphs over $\siglambdaij{1}{2}$: 
  \graphs{
          \graph{\atg'''_{1}}{lambdatgs_over_siglambda12_not_closed_under_funcbisim_g1}
          \graph{\atg'''_{0}}{lambdatgs_over_siglambda12_not_closed_under_funcbisim_g0}
          }%
  Then it holds that $\atg'''_{1} \funbisim \atg'''_{0}$.         
  However,        
  while $\atg'''_{1}$ admits a correct abstraction-prefix function,
  and hence is a \lambdatg\ over $\siglambdaij{1}{2}$,
  this is not the case for $\atg'''_{0}$ (overlapping scopes).
  Therefore the class of \lambdatg{s} over $\siglambdaij{1}{2}$ is not closed under functional bisimulation.
  The scopes defined implicitly by these graphs are larger than necessary: they
  do not exhibit `eager scope closure', see Section~\ref{sec:closed}.%
\end{proof}

As an easy consequence of Prop.~\ref{prop:lambdatgs:not:closed:under:bisim},
and of Prop.~\ref{prop:lambdatgs:not:closed:under:funbisim:convfunbisim},
 (\ref{prop:lambdatgs:not:closed:under:funbisim:convfunbisim:item:i})
 and (\ref{prop:lambdatgs:not:closed:under:funbisim:convfunbisim:item:ii}),
 together with the examples used in the proof, 
we obtain the following two propositions. 

\begin{proposition}\label{prop:not:closed:under:bisim:lambdahotgs:aplambdahotgs:siglambdai}
  Let $i\in\setexp{0,1}$. 
  None of the classes $\classlhotgsi{i}$ of \lambdahotg{s}, 
  or $\classlaphotgsi{i}$ of \lambdaaphotg{s} 
  are closed under bisimulations on the underlying term graphs. 
\end{proposition}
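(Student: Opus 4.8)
The plan is to derive this as a direct corollary of the non-closure results for \lambdatg{s} obtained above, exploiting the observation that the witnessing term graphs in those proofs live over the \emph{delimiter-free} signatures \siglambdai{i}, and hence are exactly the kind of graph that can serve as the \emph{underlying} term graph of a \lambdahotg\ or a \lambdaaphotg.

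First I would set up a reduction that treats both classes at once. By Theorem~\ref{thm:corr:lhotgs:laphotgs} the mappings $\slhotgstolaphotgsi{i}$ and $\slaphotgstolhotgsi{i}$ leave the underlying term graph unchanged and are mutually inverse bijections, so a \siglambdai{i}\nb-term-graph underlies a member of \classlhotgsi{i} if and only if it underlies a member of \classlaphotgsi{i}; by Definition~\ref{def:aplambdahotg} this holds precisely when it admits a correct abstraction-prefix function in the sense of Definition~\ref{def:abspre:function:siglambdai}. It therefore suffices, for each $i\in\setexp{0,1}$, to produce two bisimilar \siglambdai{i}\nb-term-graphs of which exactly one admits a correct abstraction-prefix function; non-closure under bisimulation on the underlying term graphs then follows simultaneously for \classlhotgsi{i} and \classlaphotgsi{i}.

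For the witnesses I would recycle the graphs already built in the proof of Proposition~\ref{prop:lambdatgs:not:closed:under:funbisim:convfunbisim}. For $i=1$ I would take $\atgi{1}'$ and $\atgi{0}'$ from item~(\ref{prop:lambdatgs:not:closed:under:funbisim:convfunbisim:item:ii}) read over $\bsig = \siglambdai{1}$: there $\atgi{1}' \funbisim \atgi{0}'$ holds with $\atgi{0}'$ admitting a correct abstraction-prefix function but $\atgi{1}'$ not, and since every functional bisimulation is in particular a bisimulation we obtain $\atgi{0}' \bisim \atgi{1}'$ as required. (Here the delimiter-free class of underlying term graphs coincides with $\classltgsi{1}$, so this case is really the corresponding instance of Proposition~\ref{prop:lambdatgs:not:closed:under:bisim}.) For $i=0$ I would take $\atgi{2}$ and $\atgi{1}$ from item~(\ref{prop:lambdatgs:not:closed:under:funbisim:convfunbisim:item:i}); these contain no scope-delimiter vertices and so are genuine \siglambdai{0}\nb-term-graphs, and $\atgi{2} \funbisim \atgi{1}$ gives $\atgi{2} \bisim \atgi{1}$ with $\atgi{2}$ admitting, and $\atgi{1}$ not admitting, a correct abstraction-prefix function. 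In each case the member of the class whose underlying graph admits a prefix function exists by the reduction, and its bisimilar partner underlies no member, contradicting closure.

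The one place that needs care --- and the only real, if minor, obstacle --- is that the earlier proofs certify (non-)admittance relative to the \emph{stricter} correctness notion attached to the richer signatures \siglambdaij{i}{j}, whose clause $(\sslabs)$ requires an \emph{equality} of abstraction prefixes, whereas the notion governing underlying term graphs of \lambdahotg{s} (Definition~\ref{def:abspre:function:siglambdai}) only requires the weaker relation $\prele$. The positive direction is free, because an equality-correct function stays correct under $\prele$ once the now-vacuous $\snlvarsucc$\nb-clauses are dropped. For the negative direction I would check that the obstruction survives the relaxation: in each failing graph a shared vertex is reachable along access paths through two \emph{incomparable} abstraction vertices, both of which would have to count it within their scope, contradicting the proper nesting of scopes established in Lemma~\ref{lem:laphotgs}(\ref{lem:laphotg:item:i}). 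Since that nesting argument is insensitive to the $=$ versus $\prele$ distinction, non-admittance transfers to the \siglambdai{i} setting and the deduction is complete.
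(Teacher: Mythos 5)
Your proof is correct and takes essentially the same route as the paper, which obtains this proposition as an immediate consequence of Prop.~\ref{prop:lambdatgs:not:closed:under:bisim} and Prop.~\ref{prop:lambdatgs:not:closed:under:funbisim:convfunbisim}, (\ref{prop:lambdatgs:not:closed:under:funbisim:convfunbisim:item:i}) and (\ref{prop:lambdatgs:not:closed:under:funbisim:convfunbisim:item:ii}), reusing the counterexample graphs there as underlying term graphs of \lambdahotg{s} and \lambdaaphotg{s} via Definition~\ref{def:aplambdahotg} and Theorem~\ref{thm:corr:lhotgs:laphotgs}. Your explicit check that non-admittance of a correct abstraction-prefix function survives the passage from the equality-based conditions of Definition~\ref{def:abspre:function:siglambdaij} to the $\prele$-based conditions of Definition~\ref{def:abspre:function:siglambdai} (relevant for the $i=0$ witnesses) is a detail the paper leaves implicit, and you handle it correctly.
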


\begin{proposition}\label{prop:not:closed:under:funbisim:convfunbisim:lambdahotgs:aplambdahotgs:siglambdai}
  The following statements hold:
  \begin{enumerate}[(i)]\itemizeprefs
    \item{}\label{prop:not:closed:under:funbisim:convfunbisim:lambdahotgs:aplambdahotgs:siglambdai:item:i}
      Neither the class $\classlhotgsi{0}$ 
      nor the class $\classlaphotgsi{0}$ 
      is closed under functional bisimulations, or under converse functional bisimulations, on the underlying term graphs.
    \item{}\label{prop:not:closed:under:funbisim:convfunbisim:lambdahotgs:aplambdahotgs:siglambdai:item:ii} 
      Neither the class $\classlhotgsi{1}$ of \lambdahotg{s} 
      nor the class $\classlaphotgsi{1}$ of \lambdaaphotg{s} 
      is closed under converse functional bisimulations on underlying term graphs. 
  \end{enumerate}  
\end{proposition}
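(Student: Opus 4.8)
The plan is to \emph{reuse the separating examples} already constructed in the proof of Prop.~\ref{prop:lambdatgs:not:closed:under:funbisim:convfunbisim}, now read as the term graphs \emph{underlying} \lambdahotg{s}, and to reduce every closedness question to the single question whether a given $\siglambdai{i}$\nb-term-graph can carry a scope function at all. The crucial bridging fact I would isolate first is: a $\siglambdai{i}$\nb-term-graph $\atg$ is the term graph underlying some \lambdaaphotg\ (equivalently, by Thm.~\ref{thm:corr:lhotgs:laphotgs} together with the well-definedness of $\slhotgstolaphotgsi{i}$ in Prop.~\ref{prop:mappings:lhotgs:laphotgs}, underlying some \lambdahotg) if and only if $\atg$ admits a correct abstraction-prefix function. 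Consequently, whenever $\atg$ \emph{fails} to admit a correct abstraction-prefix function, there is \emph{no} \lambdahotg\ (and no \lambdaaphotg) whatsoever with underlying term graph $\atg$, so in particular none to which a given homomorphism on underlying term graphs could be extended. By Thm.~\ref{thm:corr:lhotgs:laphotgs} it then suffices to argue for the \lambdahotg\ classes, the statements for the \lambdaaphotg\ classes following by transport along the bijection.

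For item~(\ref{prop:not:closed:under:funbisim:convfunbisim:lambdahotgs:aplambdahotgs:siglambdai:item:i}) I would take the chain $\atgi{2} \funbisim \atgi{1} \funbisim \atgi{0}$ over $\siglambdaij{0}{j}$ from the proof of Prop.~\ref{prop:lambdatgs:not:closed:under:funbisim:convfunbisim}, item~(\ref{prop:lambdatgs:not:closed:under:funbisim:convfunbisim:item:i}). Since its term is the delimiter-free de-Bruijn term $\lapp{(\sslabs\snlvar)}{(\sslabs\snlvar)}$, the three graphs contain no $\snlvarsucc$\nb-vertices and may be viewed as term graphs over $\siglambdai{0}$; moreover, for delimiter-free graphs the correctness conditions of Def.~\ref{def:abspre:function:siglambdaij} collapse to those of Def.~\ref{def:abspre:function:siglambdai}, so $\atgi{2}$ and $\atgi{0}$ still admit correct abstraction-prefix functions over $\siglambdai{0}$ (and hence, via $\slaphotgstolhotgsi{0}$ of Prop.~\ref{prop:mappings:lhotgs:laphotgs}, extend to \lambdahotg{s} in $\classlhotgsi{0}$) whereas $\atgi{1}$ (overlapping scopes) does not. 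Endowing $\atgi{2}$ with its scope function gives $\alhotg\in\classlhotgsi{0}$ with $\scopeforgetfullhotgsi{0}{\alhotg} = \atgi{2} \funbisim \atgi{1}$; as $\atgi{1}$ carries no scope function, no $\alhotgacc\in\classlhotgsi{0}$ over $\atgi{1}$ exists, refuting closedness under $\sfunbisim$. Endowing $\atgi{0}$ instead and using the converse bisimulation $\atgi{1} \funbisim \atgi{0} = \scopeforgetfullhotgsi{0}{\alhotg}$ refutes closedness under $\sconvfunbisim$ by the same reasoning, and both transfer to $\classlaphotgsi{0}$.

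For item~(\ref{prop:not:closed:under:funbisim:convfunbisim:lambdahotgs:aplambdahotgs:siglambdai:item:ii}) I would proceed identically with the pair $\atgi{1}' \funbisim \atgi{0}'$ over $\siglambdai{1}$ from the proof of Prop.~\ref{prop:lambdatgs:not:closed:under:funbisim:convfunbisim}, item~(\ref{prop:lambdatgs:not:closed:under:funbisim:convfunbisim:item:ii}): $\atgi{0}'$ admits a correct abstraction-prefix function (hence extends to $\alhotg\in\classlhotgsi{1}$) while $\atgi{1}'$ does not (overlapping scopes). Taking this $\alhotg$ and the converse bisimulation $\atgi{1}' \funbisim \atgi{0}' = \scopeforgetfullhotgsi{1}{\alhotg}$ refutes closedness under $\sconvfunbisim$ for $\classlhotgsi{1}$, and by Thm.~\ref{thm:corr:lhotgs:laphotgs} for $\classlaphotgsi{1}$. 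Note that the forward direction is \emph{not} asserted here, and indeed cannot be obtained this way: for $i=1$ a homomorphism on the underlying term graph may identify two variable-occurrence vertices only if their back-links target the same abstraction, which blocks exactly the scope overlap exploited for $i=0$.

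The bulk of the argument is bookkeeping; the one point requiring genuine care — and the main obstacle — is the bridging claim that non-admission of a correct abstraction-prefix function excludes \emph{every} scope function, not merely those compatible with the given homomorphism. I would secure it by contraposition through $\slhotgstolaphotgsi{i}$: any scope function on $\atgi{1}$ (respectively $\atgi{1}'$) turning it into a \lambdahotg\ would, under $\slhotgstolaphotgsi{i}$, yield a correct abstraction-prefix function on the same underlying graph, contradicting the first-order analysis. The only secondary point needing explicit verification is that the chosen examples are delimiter-free, so that they legitimately live over $\siglambdai{i}$ and the two correctness notions agree on them.
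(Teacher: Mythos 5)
Your proposal follows essentially the route the paper intends: the paper gives no separate argument for this proposition but declares it an easy consequence of Prop.~\ref{prop:lambdatgs:not:closed:under:funbisim:convfunbisim}, items (\ref{prop:lambdatgs:not:closed:under:funbisim:convfunbisim:item:i}) and (\ref{prop:lambdatgs:not:closed:under:funbisim:convfunbisim:item:ii}), ``together with the examples used in the proof''. Your bridging observation --- a $\siglambdai{i}$\nb-term-graph underlies \emph{some} \lambdaaphotg\ iff it admits a correct abstraction-prefix function, and via Thm.~\ref{thm:corr:lhotgs:laphotgs} the same holds for \lambdahotg{s}, so non-admission excludes every candidate extension, not just those compatible with the given homomorphism --- is exactly the intended reduction, and your closing remark securing it by contraposition through $\slhotgstolaphotgsi{i}$ is sound.

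One step is justified incorrectly, although its conclusion is true. You claim that on delimiter-free graphs the correctness conditions of Def.~\ref{def:abspre:function:siglambdaij} ``collapse to'' those of Def.~\ref{def:abspre:function:siglambdai}, and you use this to transfer the \emph{non}-admission of a correct abstraction-prefix function for $\atgi{1}$ from $\siglambdaij{0}{j}$ to $\siglambdai{0}$. The two notions do not coincide: Def.~\ref{def:abspre:function:siglambdaij} demands \emph{equality} of prefixes along $\sslabs$- and $\sslapp$-edges (delimiter vertices being the only means of shortening a prefix), whereas Def.~\ref{def:abspre:function:siglambdai} only demands the prefix relation $\prele$; so on a delimiter-free graph every function correct in the first sense is correct in the second, but not conversely. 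Hence non-admission in the stronger sense does not formally yield non-admission in the weaker sense, which is what item~(\ref{prop:not:closed:under:funbisim:convfunbisim:lambdahotgs:aplambdahotgs:siglambdai:item:i}) requires. The repair is a one-line direct check on $\atgi{1}$: its shared variable vertex would need a non-empty abstraction prefix that is simultaneously a prefix of the two distinct one-letter words formed by the two abstraction vertices, which is impossible. (Item~(\ref{prop:not:closed:under:funbisim:convfunbisim:lambdahotgs:aplambdahotgs:siglambdai:item:ii}) is unaffected, since the paper's example there is already stated over $\siglambdai{1}$ itself.)
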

Note that Prop.~\ref{prop:not:closed:under:funbisim:convfunbisim:lambdahotgs:aplambdahotgs:siglambdai},
          (\ref{prop:not:closed:under:funbisim:convfunbisim:lambdahotgs:aplambdahotgs:siglambdai:item:i})
is a strengthening of the statement of Prop.~\ref{prop:no:extension:funbisim:siglambda1} earlier.

\section{Closed under functional bisimulation}
  \label{sec:closed}

The negative results gathered in the last section might seem to show our enterprise in a quite poor state: 
For the classes of \lambdatg{s} we introduced, 
Prop.~\ref{prop:lambdatgs:not:closed:under:funbisim:convfunbisim}
only leaves open the possibility that the class $\classltgsi{1}$ is closed under functional bisimulation.
Actually, $\classltgsi{1}$ is closed (we do not prove this here), 
but that does not help us any further, because the correspondences in
Thm.~\ref{thm:corr:laphotgs:ltgs} do not apply to this class, and worse still,
Prop.~\ref{prop:forgetful} rules out simple correspondences for $\classltgsi{1}$.
So in this case we are left  
without the satisfying correspondences to \lambdahotg{s} and \lambdaaphotg{s}
that yet exist for the other classes of \lambdatg{s}, but which in their turn are not closed under functional bisimulation. 

But in this section we establish that the class $\classltgsij{1}{2}$ is very useful after all: its restriction 
to term graphs with eager application of scope closure is in fact closed under functional bisimulation. 

The reason for the non-closedness of $\classltgsij{1}{2}$ under functional bisimulation consists in the fact 
that \lambdatg{s} over $\siglambdaij{1}{2}$ do not necessarily exhibit `eager scope closure': 
for example in the term graph $\atgi{1}'''$ 
from the proof of Prop.~\ref{prop:lambdatgs:not:closed:under:funbisim:convfunbisim}, (\ref{prop:lambdatgs:not:closed:under:funbisim:convfunbisim:item:iv}),
the scopes of the two topmost abstractions are not closed on the paths to variable occurrences belonging to the bottommost abstractions.
For the following variation $\tilde{\atg}_{1}$ of $\atgi{1}$ with eager scope closure
the problem disappears:
\\
  \graphs{
          \graph{\tilde{\atg}_{1}}{lambdatgs_over_siglambda12_not_closed_under_funcbisim_eager_g1}
          \graph{\tilde{\atg}_{0}}{lambdatgs_over_siglambda12_not_closed_under_funcbisim_eager_g0}
          }
\\
Its bisimulation collapse $\tilde{\atg}_{0}$ has again a correct \absprefix\ function and hence is a \lambdatg.

\begin{definition}[eager-scope, and fully back-linked, \lambdatg{s}]%
    \label{def:eager:scope:fully:back-linked:lambdatg}\normalfont
  Let $\atg = \tuple{\verts,\svlab,\svargs,\sroot}$ be a \lambdatg\ over $\siglambdaij{1}{j}$ for $j\in\setexp{1,2}$ 
  with \absprefix\ function~$\sabspre \funin \verts \to \verts^*$.
  We call $\atg$ an \emph{eager-scope \lambdatg\ (over $\siglambdaij{1}{j}$)} if it holds:
  \begin{equation*}
    \begin{aligned}
    \forallstzero{\bvert,\avert\in\verts}\, 
      \forallstzero{\apre\in\verts^*\!}
                    \abspre{\bvert} = \stringcon{\apre}{\avert}
                      \;\;\Rightarrow\;\;
                      & 
                      \exists{n\in\nats\,}\existsstzero{\bverti{1},\ldots,\bverti{n-1}\in\verts} 
                      \\[-0.25ex]
                      & \hspace*{6ex}
                        \bvert \tgsucc \bverti{1} \tgsucc \ldots \tgsucc \bverti{n-1} \tgsucci{0} \avert
                      \\[-0.25ex]
                      & \hspace*{6ex}
                          \;\:\slogand\:\;
                        \bverti{n-1} \in\vertsof{\snlvar}
                          \;\:\slogand\:\; 
                        \forallst{1\le i\le n-1}
                                 {\abspre{\bvert} \le \abspre{\bverti{i}}} \punc{,}
    \end{aligned}          
  \end{equation*}
  that is, if for every vertex $\bvert$ in $\atg$ with a non-empty abstraction-prefix $\abspre{\bvert}$ that ends with $\avert$
  there exists a path from $\bvert$ to $\avert$ in $\atg$ via vertices with abstraction-prefixes that extend $\abspre{\bvert}$
  and finally a variable-occurrence vertex before reaching $\avert$. 
  By $\classeagltgsij{i}{j}$ we denote the subclass of $\classltgsij{i}{j}$ consisting of all
  eager-scope-\lambdatg{s}. 
  And we say that $\atg$ is \emph{fully back-linked} if it holds:
  \begin{equation}\label{eq:def:fully:back-linked:lambdatg}
    \forallstzero{\bvert,\avert\in\vertsi{1}}\, 
      \forallst{\apre\in\vertsi{1}^*\!}
               {\,
                \abspre{\bvert} = \stringcon{\apre}{\avert}
                  \;\;\Rightarrow\;\;
                \bvert \pathto \avert} \punc{,}
  \end{equation}
  that is, if for all vertices $\bvert$ of $\atgi{1}$, the last vertex $\avert$ in the \absprefix\ of $\bvert$ is reachable from $\avert$.
  Note that eager-scope implies fully back-linkedness for \lambdatg{s}.
\end{definition}



\begin{lemma}\label{lem:preserve:lambdatgs}
  Let $\atg$ 
  be a fully back-linked \lambdatg\ 
                                    in $\classltgsij{1}{2}$
  with vertex set $\verts$, and let $\sabspre$ be its \absprefix\ function. 
  Let $\atgacc$ be a term graph over $\siglambdaij{1}{2}$ (thus in $\classtgssiglambdaij{1}{2}$) 
  such that $\atg \funbisimi{\sahom} \atgacc$. 
  Then it holds:
  \begin{equation}\label{eq:lem:preserve:lambdatgs}
    \forallst{\averti{1},\averti{2}\in\verts}
             {\; 
              \ahom{\averti{1}} = \ahom{\averti{2}}
                \;\;\Rightarrow\;\;
              \ahomext{\abspre{\averti{1}}} = \ahomext{\abspre{\averti{2}}}} 
  \end{equation}
  where $\sahomext$ is the homomorphic extension of $\sahom$ to words over $\verts$.
\end{lemma}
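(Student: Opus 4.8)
The plan is to prove \eqref{eq:lem:preserve:lambdatgs} by strong induction on $N := \length{\abspre{\averti{1}}} + \length{\abspre{\averti{2}}}$, reducing each step to a single \emph{crux}: \emph{if $\ahom{\averti{1}} = \ahom{\averti{2}}$ and $\abspre{\averti{1}}$ is non-empty with last letter $\bvert$, then $\abspre{\averti{2}}$ is non-empty with last letter $\bvertacc$ and $\ahom{\bvert} = \ahom{\bvertacc}$}. Granting the crux, write $\abspre{\averti{1}} = \stringcon{\apre}{\bvert}$ and $\abspre{\averti{2}} = \stringcon{\bpre}{\bvertacc}$; by Lemma~\ref{lem:laphotgs}\,(\ref{lem:laphotg:item:i}) (valid for \lambdatg{s} by Lemma~\ref{lem:ltgs}) we get $\abspre{\bvert} = \apre$ and $\abspre{\bvertacc} = \bpre$, so the induction hypothesis applied to $\bvert,\bvertacc$ (prefix lengths summing to $N-2$) gives $\ahomext{\apre} = \ahomext{\bpre}$, whence $\ahomext{\abspre{\averti{1}}} = \stringcon{\ahomext{\apre}}{\ahom{\bvert}} = \stringcon{\ahomext{\bpre}}{\ahom{\bvertacc}} = \ahomext{\abspre{\averti{2}}}$. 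The base case $N=0$ is immediate, and the crux also guarantees that $\abspre{\averti{1}}$ and $\abspre{\averti{2}}$ are empty simultaneously, so no length mismatch arises. Note that $\sahom$ preserves labels, so $\averti{1}$ and $\averti{2}$ always share a label; this is what makes the crux manageable.

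For the crux, the two back-linked label cases are direct. If $\averti{1}$ (hence $\averti{2}$) is a variable vertex, then by $(\snlvar)_{1}$ its $0$\nobreakdash-successor is exactly the last letter $\bvert$ of $\abspre{\averti{1}}$; since homomorphisms preserve arguments, $\ahom{\bvert}$ is the $0$\nobreakdash-successor of $\ahom{\averti{1}} = \ahom{\averti{2}}$, and $\ahom{\bvertacc}$ is the $0$\nobreakdash-successor of $\ahom{\averti{2}}$, so they coincide; non-emptiness of the prefixes follows from $(\snlvar)_{0}$. If $\averti{1}$ is a delimiter vertex, the same argument runs via $(\snlvarsucc)_{2}$ (available since $j=2$), reading off the last letter along the $1$\nobreakdash-edge.

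The substance is the case where $\averti{1}$ is an application or abstraction vertex, and here I use that $\atg$ is fully back-linked. Writing $\abspre{\averti{1}} = \stringcon{\apre}{\bvert}$, condition~\eqref{eq:def:fully:back-linked:lambdatg} provides a path $\averti{1} \pathto \bvert$. Along any edge the prefix length changes by $+1$ at an $\sslabs$\nobreakdash-$0$\nobreakdash-edge, by $-1$ at a back-link edge ($\snlvar$\nobreakdash-$0$, $\snlvarsucc$\nobreakdash-$0$, or $\snlvarsucc$\nobreakdash-$1$), and by $0$ at an $\sslapp$\nobreakdash-edge (Def.~\ref{def:abspre:function:siglambdaij}); since the length must fall from $\length{\apre}+1$ to $\length{\apre}$, there is a first vertex $\cvert$ where it drops, and $\cvert$ is a variable or delimiter. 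Before $\cvert$ the length stays $\ge \length{\apre}+1$, so every prefix on the segment $\averti{1} = \cverti{0} \tgsucc \cdots \tgsucc \cverti{t} = \cvert$ has $\stringcon{\apre}{\bvert}$ as a prefix, with $\abspre{\cvert} = \stringcon{\apre}{\bvert}$ exactly; by $(\snlvar)_{1}$, resp.\ $(\snlvarsucc)_{2}$, the back-link of $\cvert$ targets $\bvert$. I then follow the same edge indices $k_0,\ldots,k_{t-1}$ from $\averti{2}$: this is possible because at each step the current vertices share a label (their images agree), hence the same arity, and the arguments condition forces their $k_s$-successors to again have equal images, giving $\averti{2} = \cvertacci{0} \tgsucc \cdots \tgsucc \cvertacci{t}$ with $\ahom{\cvertacci{s}} = \ahom{\cverti{s}}$; in particular $\cvertacci{t}$ shares the label of $\cvert$. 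Since the two runs apply the identical sequence of stack operations, their relative heights $\length{\abspre{\cverti{s}}} - \length{\stringcon{\apre}{\bvert}}$ and $\length{\abspre{\cvertacci{s}}} - \length{\abspre{\averti{2}}}$ agree, start at $0$, stay $\ge 0$, and return to $0$ at step $t$; as the height never goes negative, the bottom block $\abspre{\averti{2}}$ of the second run is never popped, so $\abspre{\cvertacci{t}} = \abspre{\averti{2}}$. Thus $\abspre{\averti{2}}$ is non-empty, its last letter $\bvertacc$ is the back-link target of $\cvertacci{t}$, and edge-preservation gives $\ahom{\bvertacc} = \ahom{\bvert}$, proving the crux (the back-linked cases above being exactly $t=0$).

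The main obstacle is this last step: the combination of the height (Dyck-type) invariant with the lifting of the image-path back along $\averti{2}$. Everything hinges on showing that applying one and the same operation-type sequence to the two baselines leaves both baselines untouched and lands on variable/delimiter vertices whose back-links are precisely the innermost binders $\bvert$ and $\bvertacc$; this is where fully back-linkedness is genuinely used (eager scope would be more than enough, but is not needed here), and where the argument must be carried out carefully, since the \emph{pushed} letters of the two runs differ and only the preserved baselines may be compared.
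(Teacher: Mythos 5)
Your proof is correct, but it is organised differently from the paper's own argument, which is only given as a proof idea. The paper inducts primarily on the distance parameter $\distabs{\sabspre}{\bvert}$ (the minimal length of a path from $\bvert$ to the last vertex of $\abspre{\bvert}$, finite precisely because $\atg$ is fully back-linked), with a subinduction on the maximal prefix length; this implicitly pushes the ``descend to the innermost binder'' step into the induction, one edge of a minimal witnessing path at a time. You instead run a single induction on the total prefix length $\length{\abspre{\averti{1}}}+\length{\abspre{\averti{2}}}$ and concentrate all the work in one non-inductive combinatorial claim (your ``crux''): locate the first vertex on a witnessing path $\averti{1}\pathto\avert$ at which the prefix length drops to $\length{\abspre{\averti{1}}}-1$, observe it must be a $\snlvar$- or $\snlvarsucc$-vertex whose back-link (by $(\snlvar)_1$, resp.\ $(\snlvarsucc)_2$) is exactly the last letter of $\abspre{\averti{1}}$, lift the path along $\sahom$ to a path from $\averti{2}$ with pointwise equal images, and use the stack discipline of Def.~\ref{def:abspre:function:siglambdaij} (push at $\sslabs$, no-op at $\sslapp$, pop at $\snlvar$/$\snlvarsucc$) to argue that the baseline $\abspre{\averti{2}}$ is never popped and is recovered exactly at the endpoint. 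The key checks all go through: the lifted path exists because equal images force equal labels, hence equal arities and componentwise image-equal successors; the endpoint of the lifted path inherits the label $\snlvar$/$\snlvarsucc$, so $(\snlvar)_0$/$(\snlvarsucc)_1$ forces $\abspre{\averti{2}}\neq\emptyword$ (which also settles, by symmetry, that the two prefixes are empty simultaneously); and Lemma~\ref{lem:laphotgs}\,(\ref{lem:laphotg:item:i}) (via Lemma~\ref{lem:ltgs}) licenses the recursive step on the last letters, whose prefix lengths sum to $N-2$. Both proofs use full back-linkedness in the same essential place, namely to obtain a path from a vertex to the last element of its abstraction prefix, and you correctly note that eagerness is not needed. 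What the paper's measure buys is avoiding the explicit path-lifting and height bookkeeping; what your version buys is a single, simpler induction measure together with a fully explicit account of where and how full back-linkedness enters, at the price of the careful Dyck-path invariant, which you do state and use correctly.
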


\begin{proofidea}
  Let $\atgi{1}$, $\atgi{2}$ be as assumed in the lemma,
  and let $\sahom$ be a homomorphism that witnesses $\atgi{1} \funbisimi{\sahom} \atgi{2}$. 
  We will use the following distance parameter for vertices of $\atgi{1}$:
  Let, for all $\bvert\in\vertsi{1}$, 
  $\distabs{\sabspre}{\bvert}$
  be either 0 if $\abspre{\bvert}$ is empty,
  or otherwise 
  the minimum length of a path in $\atgi{1}$
  from $\bvert$ to the last vertex in the \absprefix\ $\abspre{\bvert}$.
  Thus due to \eqref{eq:def:fully:back-linked:lambdatg}, $\distabs{\sabspre}{\bvert}\in\nats$ for all vertices $\bvert$ of $\atgi{1}$.
  Now \eqref{eq:lem:preserve:lambdatgs} can be proved
  by induction on $\max\setexp{\distabs{\sabspre}{\averti{1}},\distabs{\sabspre}{\averti{2}}}$
  with a subinduction on $\max\setexp{\length{\abspre{\averti{1}}},\length{\abspre{\averti{2}}}}$. 
\end{proofidea}

This lemma is the crucial stepping stone for the proof of the following theorem.

\begin{theorem}[preservation of \lambdatg{s} over $\siglambdaij{1}{2}$ under homomorphism]%
    \label{thm:preserve:lambdatgs}
  Let $\atg$ and $\atgacc$ be term graphs over $\siglambdaij{1}{2}$
  such that $\atg$ is a \lambdatg\ in $\classltgsij{1}{2}$, 
  and $\atg \funbisimi{\sahom} \atgacc$ holds for a homomorphism $\sahom$. 
 
  If $\atg$ is fully back-linked, then also $\atgacc$ is a \lambdatg\ in $\classltgsij{1}{2}$, 
  which is fully back-linked. If, in addition, $\atg$ is an eager-scope \lambdatg, then so is $\atgacc$.  
\end{theorem}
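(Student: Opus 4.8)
The plan is to transport the (by Lemma~\ref{lem:ltgs}(\ref{lem:ltg:item:v}) unique) correct abstraction-prefix function $\sabspre$ of $\atg$ forward along $\sahom$ and to show that the result is a correct abstraction-prefix function for $\atgacc$. First I would record that $\sahom$ is surjective: every vertex of a term graph is reachable from the root, $\sahom$ maps $\sroot$ to the root of $\atgacc$, and the (arguments) clause of \eqref{eq:def:homom} lets one lift each indexed edge $\ahom{\avert} \tgsucci{k} \bverti{k}$ of $\atgacc$ to an edge $\avert \tgsucci{k} \averti{k}$ of $\atg$ with $\ahom{\averti{k}} = \bverti{k}$; lifting an access path of an arbitrary target vertex then yields a preimage. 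Hence I can define $\funap{\sabspre'}{\bvert} \defdby \ahomext{\abspre{\avert}}$ for any $\avert$ with $\ahom{\avert} = \bvert$. The one genuinely nontrivial point is that this is well-defined, and this is exactly the content of Lemma~\ref{lem:preserve:lambdatgs}: since $\atg$ is fully back-linked, $\ahom{\averti{1}} = \ahom{\averti{2}}$ forces $\ahomext{\abspre{\averti{1}}} = \ahomext{\abspre{\averti{2}}}$, so $\funap{\sabspre'}{\bvert}$ does not depend on the chosen preimage.

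Next I would check that $\sabspre'$ satisfies the conditions (root), $(\sslabs)$, $(\sslapp)$, $(\snlvar)_1$, $(\snlvarsucc)_1$, $(\snlvarsucc)_2$ of Def.~\ref{def:abspre:function:siglambdaij} (recall $(\snlvar)_0$ is redundant for $i=1$). Each follows by the same pushforward recipe: given a target vertex $\bvert$ of the relevant label with $\bvert \tgsucci{k} \bverti{k}$, pick a preimage $\avert$; since $\sahom$ preserves labels and argument words, $\avert$ carries the same label and its $k$-th successor $\averti{k}$ satisfies $\ahom{\averti{k}} = \bverti{k}$. Applying the word homomorphism $\sahomext$ to the instance of the corresponding condition holding for $\sabspre$ in $\atg$, and using $\ahomext{\stringcon{\abspre{\avert}}{\avert}} = \stringcon{\ahomext{\abspre{\avert}}}{\ahom{\avert}}$ together with label preservation, yields the condition for $\sabspre'$ at $\bvert$. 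For example $(\sslabs)$ gives $\funap{\sabspre'}{\bverti{0}} = \ahomext{\abspre{\averti{0}}} = \stringcon{\ahomext{\abspre{\avert}}}{\ahom{\avert}} = \stringcon{\funap{\sabspre'}{\bvert}}{\bvert}$; conditions $(\snlvar)_1$ and $(\snlvarsucc)_2$ transport the requirement that the targeted vertex be an abstraction (labels are preserved), and $(\snlvarsucc)_1$ produces the required witness vertex as the $\sahom$-image of the witness in $\atg$. This shows $\atgacc$ admits a correct abstraction-prefix function, hence $\atgacc \in \classltgsij{1}{2}$.

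For full back-linkedness of $\atgacc$, suppose $\funap{\sabspre'}{\bvert} = \stringcon{\apre}{\bvertacc}$ for $\bvert, \bvertacc \in \vertsacc$. Choosing a preimage $\avert$ of $\bvert$ gives $\ahomext{\abspre{\avert}} = \funap{\sabspre'}{\bvert}$, so $\abspre{\avert}$ is nonempty and its last vertex $\avertacc$ satisfies $\ahom{\avertacc} = \bvertacc$. Since $\atg$ is fully back-linked there is a path $\avert \pathto \avertacc$ in $\atg$, whose $\sahom$-image is a path $\bvert \pathto \bvertacc$ in $\atgacc$ because $\sahom$ preserves the successor relation. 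Thus $\atgacc$ is fully back-linked.

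Finally, assuming $\atg$ is eager-scope, let $\funap{\sabspre'}{\bvert} = \stringcon{\apre}{\bvertacc}$, pick a preimage $\avert$ of $\bvert$, and let $\avertacc$ be the last vertex of $\abspre{\avert}$ (so $\ahom{\avertacc} = \bvertacc$). The eager-scope witness for $\avert$ in $\atg$ is a path $\avert \tgsucc \averti{1} \tgsucc \cdots \tgsucc \averti{n-1} \tgsucci{0} \avertacc$ with $\averti{n-1} \in \vertsof{\snlvar}$ and $\abspre{\avert} \prele \abspre{\averti{i}}$ for all $i$. Its $\sahom$-image is a path $\bvert \tgsucc \ahom{\averti{1}} \tgsucc \cdots \tgsucc \ahom{\averti{n-1}} \tgsucci{0} \bvertacc$ whose penultimate vertex is again a variable vertex and whose last step is a $0$-edge, both because $\sahom$ preserves labels and indexed edges. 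Since $\sahomext$ preserves the prefix order, $\abspre{\avert} \prele \abspre{\averti{i}}$ yields $\funap{\sabspre'}{\bvert} = \ahomext{\abspre{\avert}} \prele \ahomext{\abspre{\averti{i}}} = \funap{\sabspre'}{\ahom{\averti{i}}}$, so every intermediate vertex of the image path carries an abstraction prefix extending $\funap{\sabspre'}{\bvert}$. This is precisely the eager-scope witness for $\bvert$ and $\bvertacc$, so $\atgacc$ is eager-scope. I expect essentially all the difficulty to sit in the well-definedness of $\sabspre'$ — that is, in Lemma~\ref{lem:preserve:lambdatgs}, which is where full back-linkedness is indispensable — while the transport of the correctness conditions, of paths, and of the prefix order along $\sahom$ is routine.
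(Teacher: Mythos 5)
Your proposal is correct and follows exactly the route the paper intends: the paper gives no explicit proof of this theorem but flags Lemma~\ref{lem:preserve:lambdatgs} as ``the crucial stepping stone'', and you use it in precisely that role, namely as the well-definedness of the pushed-forward abstraction-prefix function $\sabspre'$ on the (surjective) image, with the verification of the correctness conditions, full back-linkedness, and eager-scope being routine transport along $\sahom$. Nothing to add.
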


\begin{corollary}\label{cor:closed:under:fun:bisim}
  The subclass $\classeagltgsij{1}{2}$ of the class $\classtgssiglambdaij{1}{2}$ 
  that consists of all eager-scope \lambdatg{s} in $\classltgsij{1}{2}$
  is closed under functional bisimulation. 
\end{corollary}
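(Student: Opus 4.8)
The plan is to read the corollary off directly from Theorem~\ref{thm:preserve:lambdatgs}, which carries all the substantive content; the remaining task is merely to unfold the definition of closedness under functional bisimulation and to check that the theorem's hypotheses are met in the present setting.

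First I would unfold the definition. Since $\classeagltgsij{1}{2}\subseteq\classtgssiglambdaij{1}{2}$, establishing closedness under functional bisimulation amounts to taking an arbitrary $\atg\in\classeagltgsij{1}{2}$ together with an arbitrary term graph $\atgacc$ over $\siglambdaij{1}{2}$ and a homomorphism witnessing $\atg\funbisimi{\sahom}\atgacc$, and then exhibiting $\atgacc\in\classeagltgsij{1}{2}$. By Definition~\ref{def:eager:scope:fully:back-linked:lambdatg}, membership $\atg\in\classeagltgsij{1}{2}$ means precisely that $\atg$ is an eager-scope \lambdatg\ in $\classltgsij{1}{2}$.

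Next I would verify the hypotheses of Theorem~\ref{thm:preserve:lambdatgs}. The theorem requires $\atg$ to be a \lambdatg\ in $\classltgsij{1}{2}$ that is fully back-linked. The first requirement is immediate; for the second I invoke the closing remark of Definition~\ref{def:eager:scope:fully:back-linked:lambdatg}, which records that the eager-scope property implies full back-linkedness. Hence $\atg$ is a fully back-linked \lambdatg\ in $\classltgsij{1}{2}$, and moreover an eager-scope one.

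Applying Theorem~\ref{thm:preserve:lambdatgs} then gives that $\atgacc$ is a fully back-linked \lambdatg\ in $\classltgsij{1}{2}$, and, by the theorem's final clause (which applies precisely because $\atg$ is in addition an eager-scope \lambdatg), that $\atgacc$ is eager-scope as well; thus $\atgacc\in\classeagltgsij{1}{2}$, as required. I expect no genuine obstacle at the level of the corollary: all difficulty resides in Theorem~\ref{thm:preserve:lambdatgs} and its stepping-stone Lemma~\ref{lem:preserve:lambdatgs}, whose proof must show --- by the double induction on $\distabs{\sabspre}{\averti{1}}$, $\distabs{\sabspre}{\averti{2}}$ and on the abstraction-prefix lengths --- that $\sahom$ transports the abstraction-prefix function correctly, so that a correct prefix function can be reconstructed on $\atgacc$ and the eager-scope condition survives the collapse.
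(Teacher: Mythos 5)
Your proposal is correct and follows exactly the route the paper intends: the corollary is stated as an immediate consequence of Theorem~\ref{thm:preserve:lambdatgs}, using the observation recorded at the end of Definition~\ref{def:eager:scope:fully:back-linked:lambdatg} that eager-scope implies fully back-linked, so the theorem's hypotheses are met and its final clause yields that the homomorphic image is again an eager-scope \lambdatg\ in $\classltgsij{1}{2}$. You also correctly locate all the substantive work in the theorem and Lemma~\ref{lem:preserve:lambdatgs} rather than in the corollary itself.
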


Since the counterexample in the proof of 
Prop.~\ref{prop:lambdatgs:not:closed:under:funbisim:convfunbisim},
      (\ref{prop:lambdatgs:not:closed:under:funbisim:convfunbisim:item:iii}) 
used eager-scope \lambdatg{s}, it
rules out a statement analogous to Cor.~\ref{cor:closed:under:fun:bisim} for the class $\classltgsij{1}{1}$.
Such a statement for $\classltgsij{0}{1}$ and $\classltgsij{0}{2}$ is ruled out similarly, 
with respect to an appropriate definition of `eager-scope' for \lambdatg{s} over $\siglambdaij{0}{1}$ and $\siglambdaij{0}{2}$.%
  

\begin{corollary}
  Let $\sahom$ be a functional bisimulation from an eager-scope \lambdatg~$\altg$ over $\siglambdaij{1}{2}$
  to a term graph $\atgacc$ over $\siglambdaij{1}{2}$ ($\sahom$ witnesses $\altg \funbisimi{\sahom} \atgacc$).
  Then $\atgacc$ is an eager-scope \lambdatg\ as well,
  and $\sahom$ extends to a functional bisimulation
  from $\ltgstolaphotgsij{1}{2}{\altg}$ to $\ltgstolaphotgsij{1}{2}{\atgacc}$
  (thus $\sahom$ also witnesses $\ltgstolaphotgsij{1}{2}{\altg} \funbisimi{\sahom} \ltgstolaphotgsij{1}{2}{\atgacc}$).
\end{corollary}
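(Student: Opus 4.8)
The statement has two parts, and the first is almost immediate. Since $\altg$ is an eager-scope \lambdatg\ it is in particular fully back-linked (the final remark of Def.~\ref{def:eager:scope:fully:back-linked:lambdatg}), so Theorem~\ref{thm:preserve:lambdatgs} applies and yields that $\atgacc$ is a fully back-linked \lambdatg\ in $\classltgsij{1}{2}$, and, $\altg$ being eager-scope, that $\atgacc$ is eager-scope as well. Consequently $\sltgstolaphotgsij{1}{2}$ is defined on both $\altg$ and $\atgacc$ (Prop.~\ref{prop:mappings:ltgs:to:laphotgs}), producing \lambdaaphotg{s} $\ltgstolaphotgsij{1}{2}{\altg}$ and $\ltgstolaphotgsij{1}{2}{\atgacc}$ over $\siglambdai{1}$ whose vertex sets are the non-delimiter vertices $\vertsof{\sslabs}\cup\vertsof{\sslapp}\cup\vertsof{\snlvar}$ of $\altg$ and of $\atgacc$, respectively.

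For the second part I would take as witness the restriction of $\sahom$ to the non-delimiter vertices of $\altg$. This is well-typed: since $\sahom$ preserves labels, it maps $\snlvarsucc$-vertices to $\snlvarsucc$-vertices and hence non-delimiter vertices into non-delimiter vertices, i.e.\ into the vertex set of $\ltgstolaphotgsij{1}{2}{\atgacc}$. It then remains to check the four homomorphism conditions of Def.~\ref{def:homom:aplambdahotg}. The conditions (labels) and (roots) are immediate, because labels and root of each image \lambdaaphotg\ are inherited unchanged from the underlying \lambdatg, the root is non-delimiter (its prefix is $\emptyword$, whereas every $\snlvarsucc$-vertex has a non-empty prefix by $(\snlvarsucc)_1$), and $\sahom$ already preserves labels and root. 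For (arguments) I would use that the successor relation of the image \lambdaaphotg{s} is the composite $\binrelcomp{\stgsucci{k}}{\stgsuccisstar{0}{\snlvarsucc}}$ from Prop.~\ref{prop:mappings:ltgs:to:laphotgs}, i.e.\ ``take the $k$-th edge, then skip the ensuing chain of delimiter vertices''. Since $\sahom$ preserves edges, edge-indices, and labels, it maps such a composite in $\altg$ to one in $\atgacc$ ending at the image vertex (the skip being deterministic and stopping at the non-delimiter $\ahom{\bvert}$); together with label- and hence arity-preservation this gives the required positional equality of argument words.

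The real work is the condition (abstraction-prefix functions) of \eqref{eq:def:homom:aplambdahotg}, which amounts to showing that $\ahomext{\abspre{\bvert}}$ equals the value at $\ahom{\bvert}$ of the (by Lemma~\ref{lem:ltgs}(\ref{lem:ltg:item:v}) unique) correct abstraction-prefix function of $\atgacc$, for every non-delimiter vertex $\bvert$. I would establish this by the stepwise computation of abstraction prefixes along paths: the defining clauses $(\sslabs)$, $(\sslapp)$, $(\snlvar)_1$, $(\snlvarsucc)_1$, $(\snlvarsucc)_2$ are forward-deterministic (the prefix of the target of an edge is obtained from that of its source by appending the source vertex, leaving it unchanged, or deleting its last symbol, according to the source's label and the edge-index), so the unique correct function is determined along any walk from the root. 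Mapping an access path of $\bvert$ through $\sahom$ yields a walk from the root of $\atgacc$ to $\ahom{\bvert}$ on which labels and edge-indices match step for step; since appending the source vertex commutes with $\ahomext$, and deleting the last symbol of a necessarily non-empty prefix commutes with $\ahomext$, an induction along the path gives that $\ahomext{\abspre{\bvert}}$ is the prefix of $\ahom{\bvert}$ in $\atgacc$. This identification is the crux, and it is essentially the correspondence already underlying the proofs of Lemma~\ref{lem:preserve:lambdatgs} and Theorem~\ref{thm:preserve:lambdatgs}, on which I would lean rather than re-derive it. With (abstraction-prefix functions) verified, the restriction of $\sahom$ is a homomorphism of \lambdaaphotg{s}, which is exactly the asserted $\ltgstolaphotgsij{1}{2}{\altg} \funbisimi{\sahom} \ltgstolaphotgsij{1}{2}{\atgacc}$.
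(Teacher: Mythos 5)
Your proof is correct. The paper states this corollary without giving a proof of its own, intending it as a direct consequence of Theorem~\ref{thm:preserve:lambdatgs} (for the claim that $\atgacc$ is again an eager-scope \lambdatg) together with the sharing-order preservation/reflection of $\sltgstolaphotgsij{1}{2}$ in Theorem~\ref{thm:corr:laphotgs:ltgs}(iii); your argument follows exactly this route, and moreover spells out the details needed for the slightly stronger claim the corollary makes — that the restriction of $\sahom$ itself to the non-delimiter vertices (rather than merely some homomorphism) witnesses $\ltgstolaphotgsij{1}{2}{\altg} \funbisim \ltgstolaphotgsij{1}{2}{\atgacc}$ — namely the delimiter-chain argument for (arguments) and the path-induction for (abstraction-prefix functions).
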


\section{Conclusion}
  \label{sec:conclusion}

We first defined higher-order term graph representations for cyclic \lambda-terms:
\begin{itemize}\itemizeprefs
  \item 
    \lambdahotg{s} in $\classlhotgsi{i}$,  
    an adaptation of Blom's `higher-order term graphs'~\cite{blom:2001},
    which possess a scope function that maps every abstraction vertex $\avert$ to the set of vertices that are in the scope of $\avert$. 
  \item 
    \lambdaaphotg{s} in $\classlaphotgsi{i}$,
    which instead of a scope function carry an \absprefix\ function 
    that assigns to every vertex $\bvert$ information about the scoping structure relevant for~$\bvert$. 
    Abstraction prefixes are closely related to the notion of `generated subterms' for \lambdaterms\ \cite{grab:roch:2012}.
    The correctness conditions here are simpler and more intuitive than for \lambdahotg{s}. 
\end{itemize}
These classes are defined for $i\in\setexp{0,1}$, according to whether variable occurrences have back-links
to abstractions (for $i=1$) or not (for $i=0$). 
Our main statements about these classes are:
\begin{itemize}\itemizeprefs
  \item 
    a bijective correspondence between $\classlhotgsi{i}$ and $\classlaphotgsi{i}$ 
    via mappings
    $\slhotgstolaphotgsi{i}$
    and
    $\slaphotgstolhotgsi{i}$
    that preserve and reflect the sharing order (Thm.~\ref{thm:corr:lhotgs:laphotgs});
 \item 
   the naive approach to implementing homomorphisms on theses classes 
   (ignoring all scoping information and using only the underlying first-order term graphs) 
   fails (Prop.~\ref{prop:forgetful}).
\end{itemize}
The latter was the motivation to consider first-order term graph implementations with scope delimiters:
\begin{itemize}\itemizeprefs
  \item 
    \lambdatg{s} in $\classltgsij{i}{j}$ 
    (with $i\in\setexp{0,1}$ and $j=2$ or $j=1$ for scope delimiter vertices with or without back-links, respectively),
    which are first-order term graphs without a higher-order concept, but for which correctness conditions are formulated via
    the existence of an \absprefix\ function. 
\end{itemize}
The most important results linking these classes with \lambdaaphotg{s} are:
\begin{itemize}\itemizeprefs
  \item 
    an `almost bijective' correspondence
    between the classes  $\classlaphotgsi{i}$ 
    and $\classltgsij{i}{j}$ 
    via mappings
    $\slaphotgstoltgsij{i}{j}$
    and
    $\sltgstolaphotgsij{i}{j}$
    that preserve and reflect the sharing order
    (Thm.~\ref{thm:corr:laphotgs:ltgs});
  \item 
    the subclass $\classeagltgsij{1}{2}$ of eager-scope \lambdatg{s} in $\classltgsij{1}{2}$
    is closed under homomorphism (Cor.~\ref{cor:closed:under:fun:bisim}).
\end{itemize}
The correspondences together with the closedness result allow us to derive
methods to handle homomorphisms between eager higher-order term graphs in $\classlhotgsi{1}$ and $\classlaphotgsi{1}$
in a straightforward manner by implementing 
them via homomorphisms between first-order term graphs in $\classltgsij{1}{2}$. 


\hspace{1cm}
\begin{tikzpicture}[>=stealth]
\matrix[row sep=0.8cm,column sep=-4.1mm]{
\node(tl){\phantom{I}};&&
\node(t){$\classeaglhotgsi1$};&&
\node(tr){\phantom{I}};\\
\node(ml){\phantom{I}};&&
\node(m){$\classeaglaphotgsi1$};&&
\node(mr){\phantom{I}};\\
\node(bl){\phantom{I}};&&
\node(b){$\classeagltgsij12$};&&
\node(br){\phantom{I}};\\
};
\draw[->](tl) to node[left]{$\slhotgstolaphotgsi1$}  (ml);
\draw[->](ml) to node[left]{$\slaphotgstoltgsij12$}  (bl);
\draw[->](br) to node[right]{$\sltgstolaphotgsij12$} (mr);
\draw[->](mr) to node[right]{$\slaphotgstolhotgsi1$} (tr);
\end{tikzpicture}
\hspace{2cm}
\begin{tikzpicture}[>=stealth]
\matrix[row sep=0.8cm,column sep=0.8cm]{
\node(tl){$\alhotgi0$};&
\node(tr){$\alhotgi0'$};\\
\node(ml){$\alaphotgi1$};&
\node(mr){$\alaphotgi1'$};\\
\node(bl){$\atg$};&
\node(br){$\atg'$};\\
};
\draw[|->](tl) to node[below, at end]{\scriptsize $h$}(tr);
\draw[|->](ml) to node[below, at end]{\scriptsize $h$}(mr);
\draw[|->](bl) to node[below, at end]{\scriptsize $h'$}(br);
\draw[|->](tl) to node[left]{$\slhotgstolaphotgsi1$} (ml);
\draw[|->](ml) to node[left]{$\slaphotgstoltgsij12$} (bl);
\draw[|->](br) to node[right]{$\sltgstolaphotgsij12$} (mr);
\draw[|->](mr) to node[right]{$\slaphotgstolhotgsi1$} (tr);
\end{tikzpicture}
\\
For example, the property that a unique maximally shared form exists 
for \lambdatg{s} in $\classltgsij{1}{2}$ 
(which can be computed as the bisimulation collapse that is guaranteed to exist for first-order term graphs)
can now be transferred to eager-scope \lambdaaphotg{s} and \lambdahotg{s}
via the correspondence mappings (see the diagram above). 
For this to hold it is crucial that $\classeagltgsij{1}{2}$ is closed under homomorphism,
and that the correspondence mappings preserve and reflect the sharing order.  
The maximally shared form  $\max_{\classeaglhotgsi1}(\alhotg)$ of an eager \lambdahotg\ $\alhotg$ can furthermore
be computed as:
\begin{center}
  $\max_{\classeaglhotgsi1}(\alhotg) =
   (\scompfuns{\slaphotgstolhotgsi1}{\scompfuns{\sltgstolaphotgsij12}{\scompfuns{\max_{\classeagltgsij12}}{\scompfuns{\slaphotgstoltgsij12}{\slhotgstolaphotgsi1}}}})(\alhotg)$.
\end{center}
where $\max_{\classeagltgsij{1}{2}}$ maps every \lambdatg\ in $\classltgsij{1}{2}$ to its bisimulation collapse. 
For obtaining $\max_{\classeagltgsij{1}{2}}$ fast algorithms for computing
the bisimulation collapse of first-order term graphs can be utilized.

While we have explained this result here only for term graphs with eager
scope-closure, the approach can be generalized to non-eager-scope term graphs.
To this end scope delimiters have to be placed also underneath variable
vertices. Then variable occurrences do not implicitly close all open extended
scopes, but every extended scope that is open at some position must be
closed explicitly by scope delimiters on all (maximal) paths from that position.
The resulting graphs are fully back-linked, and then Thm.~\ref{thm:preserve:lambdatgs}
guarantees that the arising class of \lambdatg{s} is again closed under homomorphism.

For our original intent of getting a grip on maximal subterm sharing in the
\lambda-calculus with \stxtletrec\ or \mu, however, only eager scope-closure is
practically relevant, since it facilitates a higher degree of sharing.

Ultimately we expect that these results allow us to develop solid formalizations
and methods for subterm sharing in higher order languages with sharing
constructs. 

\vspace*{-2ex}
\paragraph{Acknowledgement.}
 We want to thank the reviewers for their helpful comments, and for pointing out 
 a number of inaccurate details in the submission that we have remedied for obtaining this version.  

\bibliography{ltgs}  

\end{document}